




\documentclass{ecai} 



\usepackage{latexsym}
\usepackage{amssymb}
\usepackage{amsmath}
\usepackage{amsthm}
\usepackage{booktabs}
\usepackage{enumitem}
\usepackage{graphicx}
\usepackage{color}
\usepackage{subcaption}

\usepackage{times}
\usepackage{soul}
\usepackage{url}
\usepackage[hidelinks]{hyperref}
\usepackage[utf8]{inputenc}

\usepackage{algorithm}
\usepackage[noend]{algorithmic}
\usepackage[switch]{lineno}

\usepackage{etoolbox} 
\usepackage{array}
\usepackage{eqparbox}


\newtheorem{theorem}{Theorem}
\newtheorem{lemma}[theorem]{Lemma}

\newtheorem{definition}{Definition}

\newtheorem{Claim}{Claim}
\newcommand{\Pbar}{\bar{\mathcal{P}}}
\newcommand{\MU}{\textit{Max-Util}}
\newcommand{\ME}{\textit{Max-Egal}}
\newcommand{\ALO}{\textit{At-Least-1}}

\newcommand{\LBM}{LBM}
\newcommand{\UBM}{UBM}
\newcommand{\WIM}{WIM}
\newcommand{\SIM}{SIM}
\newcommand{\SAM}{SAM}
\newcommand{\set}[1]{\{#1\}}


\newtheorem*{rtheorem}{\theremindertheorem}

\newcommand{\theremindertheorem}{}

\newenvironment{reminder}[1]
  {\renewcommand{\theremindertheorem}{Reminder of Theorem #1}\begin{rtheorem}}
  {\end{rtheorem}}



\newcommand{\BibTeX}{B\kern-.05em{\sc i\kern-.025em b}\kern-.08em\TeX}


\begin{document}


\begin{frontmatter}


\paperid{1149} 


\title{The Complexity of Manipulation of k-Coalitional Games on Graphs}


\author[A]{\fnms{Hodaya}~\snm{Barr} \footnote{odayaben@gmail.com}}
\author[A]{\fnms{Yohai}~\snm{Trabelsi}\footnote{yohai.trabelsi@gmail.com	}}
\author[A]{\fnms{Sarit}~\snm{Kraus}\footnote{sarit@cs.biu.ac.il	}}
\author[A]{\fnms{Liam}~\snm{Roditty}\footnote{Liam.Roditty@biu.ac.il}}
\author[B]{\fnms{Noam}~\snm{Hazon}\footnote{noamh@ariel.ac.il}}
\address[A]{Bar Ilan University}
\address[B]{Ariel University}

\begin{abstract}
In many settings, there is an organizer who would like to divide a set of agents into $k$ coalitions, and cares about the friendships within each coalition.
Specifically, the organizer might want to maximize utilitarian social welfare, maximize egalitarian social welfare, 
or simply guarantee that every agent will have at least one friend within his coalition. 
However, in many situations, the organizer is not familiar with the friendship connections, and he needs to obtain them from the agents. 
In this setting, a manipulative agent may falsely report friendship connections in order to increase his utility. In this paper, we analyze the complexity of finding manipulation in such $k$-coalitional games on graphs. 
We also introduce a new type of manipulation, socially-aware manipulation, in which the manipulator would like to increase his utility without decreasing the social welfare. We then study the complexity of finding socially-aware manipulation in our setting. Finally, we examine the frequency of socially-aware manipulation and the running time of our algorithms via simulation results.
\end{abstract}

\end{frontmatter}

\section{Introduction}
In many situations, there is an organizer who would like to divide a group of agents into $k$ non-empty coalitions.
For example, consider a manager who would like to divide his employees into $k$ teams in order to execute $k$ tasks. The interpersonal friendship connections between potential team members play an important role in such a setting.
Indeed, different managers may treat friendship differently. One manager may be interested in maximizing the number of friendship connections within all of the coalitions. Another manager may consider the welfare of an employee who is worse off than the others, and thus be interested in maximizing the minimum number of friendship connections that this employee has within his coalition.
It is also possible that a manager would simply require that every employee have at least one friend within his coalition.

The organizer may be familiar with all of the friendships among the agents. However, in some real-world scenarios, the organizer is unfamiliar with the friendships and thus needs to elicit them from the agents. For example, when dividing students into classes, it is common practice to ask them about their social relationships \cite{alonhighschool}.
In such situations, a manipulative agent, who is familiar with all the friendships among the agents, might have an incentive to misreport his friendship connections. Indeed, it is possible that such manipulation cannot be found efficiently.

In this paper, we analyze the complexity of finding manipulation in $k$-coalitional games \footnote{We refer to our model as ``$k$-coalitional game'' although we do not consider the notion of collusion and do not use, for example, the core.
This is based on previous papers that considered non-strategic agents in Hedonic games, focusing on agent division toward maximizing some social welfare functions (e.g., Aziz et al.~\cite{aziz2013computing}).}, 
with fixed $k$. 
We assume that the agents’ utilities depend on the friendship connections. Specifically, the friendship connections are represented by an unweighted graph, where the vertices are agents, and the edges represent the friendships among the agents. The utility of an agent is the number of friends he has within his coalition. There is an organizer who would like to divide the agents into exactly $k$ coalitions, but he builds the graph from the agents' reports. We analyze the settings where one manipulative agent would like to misreport his friendship connections by hiding some or reporting fake connections.

We study the objective of maximizing the egalitarian social welfare (\ME), and show that finding an optimal manipulation is computationally hard. Moreover, even deciding if a given report is beneficial for the manipulator is a hard problem. We then study a less demanding objective, in which the organizer requires that every agent has at least one friend within his coalition (\ALO). Indeed, finding an optimal manipulation or deciding if a given report is beneficial are still computationally hard problems. Arguably, the most natural organizer's objective is to maximize the utilitarian social welfare (\MU), and deciding if a given report is beneficial in this setting can be done efficiently. The complexity of finding an optimal manipulation with \MU\ remains open, but we provide an XP algorithm for this problem.

In addition, we introduce a new type of manipulation for coalitional games, socially-aware manipulation (\SAM), in which the manipulator would like to increase his utility without decreasing the social welfare. This manipulation models social situations in which a manipulator interested in his own welfare will not want to harm the welfare of society.
For example, consider an employee in a company or a player in a sports team. In these settings, the manipulator would like to maximize his utility without decreasing the social welfare, since it reduces the overall productivity or the teams’ performance.
Before analyzing the complexity of finding an \SAM, we show that for every objective (\MU, \ME, and \ALO), there are scenarios in which \SAM\ is possible. We then show that finding an optimal \SAM\ or deciding if a given report is an \SAM\ are still computationally hard problems with \ME\ or \ALO. Indeed, both problems can be solved in polynomial time with \MU. Finally, we provide simulation results based on a real social network. The results show that \SAM\ is quite frequent and demonstrate the effectiveness of our XP algorithm.


\section{Preliminaries}

Let $G=(A,E)$ be a directed graph representing a social network, where the vertex set $A$ represents a set of agents and the edge set $E$ represents friendship connections between the agents.
The graph is directed since friendship connections are not necessarily symmetric, and an edge $(a,a') \in E$ represents that $a$ considers $a'$ as his friend.
We denote by $E(a)$ the set of edges from $a$, and the set of neighbors of $a\in A$ by $N(a) = \set{a' \in A | (a,a') \in E(a)}$. 
For a subset $S\subseteq A$, we denote the set of neighbors of $a$ from the subset $S$ by $N(a, S) = N(a)\cap S$.
In our setting, there is an organizer who would like to divide the agents into exactly $k$ coalitions. Formally, the organizer seeks a partition $\mathcal{P}$ of $G$, which is a partition of the set $A$ into $k$ disjoint and non-empty sets $C_1,C_2, \ldots ,C_k$; we refer to these sets as coalitions.
Since $k$ is small in many settings, we further assume that $k$ is fixed.
Let $\Pi_k$ be the set of all partitions of size $k$.
We denote by $u(a, \mathcal{P})$ the number of friends that agent $a$ has within his coalition in partition $\mathcal{P}$, i.e, if $a\in C_i$ then $u(a,\mathcal{P})$ is $|N(a, C_i)|$.
When the organizer partitions the agents into coalitions, he may want to maximize a certain objective. In this paper, we examine three types of objectives:

\begin{definition}[\ME] 
The organizer wants to find a partition $\mathcal{P^*}$ that maximizes the minimum number of connections that an agent has within his coalition. That is, 
$\mathcal{P^*}=\arg \max_{\mathcal{P}\in \Pi_k} \min_{a\in A}u(a,\mathcal{P})$.
\end{definition}
\begin{definition}[\ALO] 
The organizer wants to find a partition $\mathcal{P^*}$ such that every agent has at least one connection within his coalition.
That is, $\mathcal{P^*}$ such that $\forall_{a\in A} u(a,\mathcal{P^*})>0$.
\end{definition}
\begin{definition}[\MU]
The organizer wants to find a partition $\mathcal{P^*}$ that maximizes the total number of connections within the same coalition. That is, $\mathcal{P^*} = \arg \max_{\mathcal{P}\in \Pi_k} \Sigma_{a\in A}u(a,\mathcal{P})$.
\end{definition}

Let  $O_{obj}(G)$ be the set of all partitions that satisfy the objective of the organizer. For $a\in A$, let $LB_{obj}(G,a)= \min_{\mathcal{P}\in O_{obj}(G)}u(a,\mathcal{P})$, and let $UB_{obj}(G,a)= \max_{\mathcal{P}\in O_{obj}(G)}u(a,\mathcal{P})$. That is, $LB_{obj}(G,a)$ and $UB_{obj}(G,a)$ are lower and upper bounds (respectively) on the number of friends that an agent $a$ can have in a partition satisfying the objective $obj$.

Note that when the objective of the organizer is \ALO\ it is possible that there is no feasible partition in $G$, i.e., $O_{\ALO}(G) = \emptyset$. In such a case, we define the utility of all the agents to be $0$. 

Each agent $a\in A$ reports to the organizer a set of friendship connections, $E^R(a)$, which is not necessarily equal to $E(a)$. That is, the organizer learns about the graph structure solely from the reports of the agents, and if all of the agents are truthful then this graph is equal to $G$.
\footnote{Since friendship connections are not necessarily symmetric, we assume that $G$ is a directed graph. If we assume that friendship connections are always symmetric, then $G$ is an undirected graph, which requires further assumptions on how the organizer builds the graph. We provide the definitions and results for undirected graphs in the appendix.}
%
%
A manipulator agent $m\in A$ reports a set of friendship connections $E^R(m)\neq E(m)$, 
so that the organizer will choose a partition that is better for $m$ than the partition that would have been chosen with $m$'s truthful report. In this case, we refer to $E^R(m)$ as the manipulation of $m$.
We assume that there is a single manipulator, $m$, that has full information regarding the other agents' reports, and the organizer's objective.
We examine two different types of manipulators, $m^+$ and $m^-$. 
\begin{definition} [$m^+$]
 A manipulator who can only add edges, $m^+ \in A$, reports a set of connections $E^R(m^+)=E(m^+)\cup E^+(m^+)$, where $E^+(m^+) \subseteq \{m^+\}\times A$. 
\end{definition}
\begin{definition}[$m^-$]
 A manipulator who can only remove edges, $m^-\in A$, reports a set of connections $E^R(m^-)=E(m^-)\setminus E^-(m^-)$, where  $E^-(m^-) \subset E(m^-)$ is the set of connections that $m$ does not  report to the organizer.
\end{definition}
Note that since $G$ is directed, $m$ is able to add and remove only outgoing edges. 
Let $G(m)=(A,(E \cup E^+(m)) \setminus E^-(m))$ be the graph after the manipulation of $m$.
Note that $m$ may have more than one possible manipulation, e.g., $E^R_1(m)$ and $E^R_2(m)$, which result in different graphs, $G_1(m)$ and $G_2(m)$, respectively.

We emphasize that $u(a,\mathcal{P})$ depends on the real graph $G$ (and not on $G(m))$.
That is, the utility of every agent is computed on the real graph. Therefore, if the manipulator adds an edge, this edge is not considered in $u(a,\mathcal{P})$. Similarly, if the manipulator removes an edge, this edge is still considered in $u(a,\mathcal{P})$. The manipulator influences only the beliefs of the organizer.


Given the organizer's objective, there may be several partitions that satisfy the objective, but the utility of the manipulator might be different in each such partition. 
We thus study four types of manipulations, following  \cite{waxman2021manipulation}:
\begin{definition}[Lower Bound Manipulation (\LBM)]
A manipulation $E^R(m)$ is \emph{\LBM} if $LB_{obj}(G(m),m) > LB_{obj}(G,m)$.
The improvement of an \LBM\ is $LB_{obj}(G(m),m) - LB_{obj}(G,m)$.
\end{definition}

\begin{definition}[Upper Bound Manipulation (\UBM)]
A manipulation $E^R(m)$ is \emph{\UBM} if $UB_{obj}(G(m),m) > UB_{obj}(G,m)$.
The improvement of a \UBM\ is $UB_{obj}(G(m),m) - UB_{obj}(G,m)$.
\end{definition}

That is, the goal of an \LBM\ is to eliminate partitions with low utility ($u(m,\mathcal{P})$), while the goal of an \UBM\ is to add partitions with high utility.

\begin{definition}[Weak-Improvement Manipulation (\WIM)]
A manipulation $E^R(m)$ is \emph{\WIM} if $LB_{obj}(G(m),m) > LB_{obj}(G,m)$ and $UB_{obj}(G(m),m) > UB_{obj}(G,m)$.
The improvement of a \WIM\  is $(LB_{obj}(G(m),m) + UB_{obj}(G(m),m)) - (LB_{obj}(G,m) + UB_{obj}(G,m))$.
\end{definition}

\begin{definition}[Strict-Improvement Manipulation (\SIM)]
A manipulation $E^R(m)$ is \SIM\ if $LB_{obj}(G(m),m) > UB_{obj}(G,m)$.
The improvement of an \SIM\ is $(LB_{obj}(G(m),m) + UB_{obj}(G(m),m)) - (LB_{obj}(G,m) + UB_{obj}(G,m))$.
\end{definition}

Let $type$ be one of the manipulation types (\LBM, \UBM, \WIM\ or \SIM). Given $E^R(m)$ and $type$, we denote by $I_{type}(E^R(m))$ the improvement of $E^R(m)$; if $E^R(m)$ is not a manipulation of the specific type, then $I_{type}(E^R(m)) = 0$.
Given $type$, we say that $E^R_1(m)$ is better than a $E^R_2(m)$ if $I_{type}(E^R_1(m)) > I_{type}(E^R_2(m))$, and an optimal manipulation is a manipulation with the maximum improvement.
See Figure \ref{fig:manipulation_types} for an illustration of the four types of manipulations.

Note that in \WIM\ and \SIM\ both the the lower bound and the upper bound increase. Therefore, we define the improvement for both these types as the sum of the improvements of the lower bound and the upper bound for both types.

\begin{figure}[H]

    \centering
    \includegraphics[width=1\columnwidth]{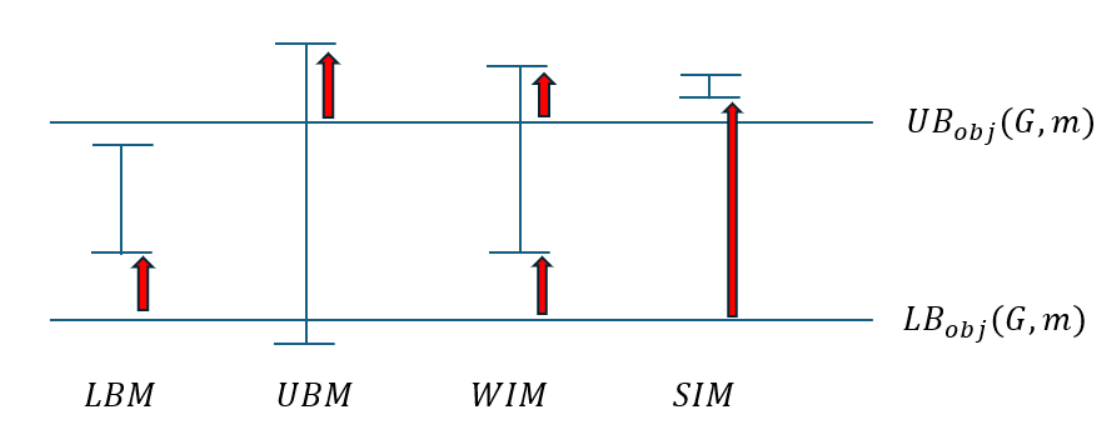}
    \caption{An illustration of the four types of           manipulations.\label{fig:manipulation_types}}
    \vspace{11pt}
\end{figure}

We say that an objective is \emph{susceptible} to \LBM\ by adding edges if there exists at least one graph in which there exists $m^+$ with an \LBM. Susceptibility by removing edges and to the other manipulation types is defined similarly. Whenever an objective is susceptible to manipulation, our main goal is to find an optimal manipulation (if the given instance is manipulable). Formally,
\begin{definition}[The Manipulation Problem]
    Given a graph $G$, an objective $obj$, a manipulator $m \in A$, and a manipulation type $type$, we are asked to find an optimal manipulation $E^R(m)$ (if such manipulation exists). 
\end{definition}

We also study an even simpler problem, which is to compute the improvement of a given report.
\begin{definition}[The Improvement Problem]
    Given a graph $G$, an objective $obj$, a manipulator $m \in A$, a manipulation type $type$, and a report $E^R(m)$, we are asked to compute $I_{type}(E^R(m))$.
\end{definition}




\section{Background} 
Our setting, in which the utility of an agent is the number of neighbors that he has within his coalition to which he is assigned \cite{sless2018forming}, is a special case of Additively Separable Hedonic Games (ASHGs), which have been extensively studied \cite{bogomolnaia2002stability,hanaka2022hedonic,aziz2016computational,aziz2011stable}. However, very few papers consider the problem of manipulation in ASHGs.
%
%
Indeed, 
Dimitrov and Sung~\cite{dimitrov2004enemies} analyzed ASHGs where agents have both positive and negative edges, and  provided a strategyproof algorithm for finding stable outcomes.
Rodr{\'\i}guez-{\'A}lvarez~\cite{rodriguez2009strategy} discussed strategyproof core stable solutions' properties. They proved that single lapping rules are necessary and sufficient for the existence of a unique core-stable partition.
 Aziz et al.~\cite{aziz2013pareto} proved that, with appropriate restrictions over the agents' preferences, the serial dictatorship mechanism is strategyproof. 
Flammini et al.~\cite{flammini2021strategyproof} studied the utilitarian social welfare in ASHGs and fractional hedonic games, and they proposed strategyproof mechanisms at the cost of non-optimal social welfare. 
They extended their analysis to friends and enemies games in a subsequent work~\cite{FLAMMINI2022103610}.
Wright and Vorobeychik~\cite{wright2015mechanism} considered a model of ASHG that is very similar to ours, but they restricted the size of each coalition instead of restricting the number of coalitions. In their work, they proposed a strategyproof mechanism that achieves good and fair experimental performance, but with no theoretical guarantee.
All of these works focused on developing strategyproof mechanisms,
while we study the computational complexity of finding manipulation.

The work that is closest to ours is by Waxman et al.~\cite{waxman2021manipulation}. 
They provided an extensive set of results specifying for each objective whether or not it is susceptible to manipulation.
Specifically, they show that \ME\ is susceptible to \SIM\ by removing edges (and thus, obviously, it is also susceptible to \LBM, \UBM, and \WIM), but it is not susceptible to any manipulation by adding edges.
\ALO\ is only susceptible to \LBM\ by removing edges (and not to \UBM, \WIM, or \SIM), and it is not susceptible to any manipulation by adding edges. 
\MU\ is susceptible to \SIM\ by adding or removing edges.
Indeed, \cite{waxman2021manipulation} did not study the computational complexity of finding manipulation. 

Alon~\cite{alonhighschool} considered the \ALO\ objective, and studied a different type of manipulation: whether a set of manipulators can guarantee to be in the same coalition. Alon showed that such manipulation is almost always impossible.

In our paper, we introduce a new type of manipulation, which we refer to as a socially-aware manipulation (Section \ref{sam_section}). In this manipulation, the manipulator $m$ would like to increase his utility without decreasing the social welfare. The tension between maximizing one's own utility and social welfare has been studied extensively in the social sciences (e.g., \cite{fehr2002social,lang2018explaining}). 
However, to the best of our knowledge, all of the works on manipulation in ASHGs assume that the manipulator aims to maximize his individual utility and does not care about the organizational social welfare.  

Table \ref{tab:mu_complexity} summarizes our complexity results with the \MU\ objective.
 With the \ME\ and \ALO\ objectives, both Manipulation Problem and Improvement Problem are computationally hard, for all settings susceptible to manipulation. Table \ref{tab:susceptible} summarizes the susceptible results from \cite{waxman2021manipulation}.

\begin{table}[]
    \centering
    \begin{tabular}{c|c|c}
         & Manipulation & Socially-Aware Manipulation \\
        Improvement Problem & P (T\ref{thm:mu_specific_man_improvment}) & P (T\ref{thm:sam_specific_man_improvment})\\
        Manipulation Problem & ? XP (T\ref{thm:alg_mu_regular}) & P (T\ref{thorem:mu_alg_not_harm})\\
    \end{tabular}
    \caption{Complexity results for \MU}
    \label{tab:mu_complexity}
\end{table}

\begin{table}[]
    \centering
    \begin{tabular}{c|c|c}
         & $Add$ & $Remove$\\
         \textbf{Max-Util} & \SIM & \SIM \\
         \textbf{At-Least-1} & No M & \LBM, No \UBM \\
         \textbf{Max-Egal} & No M& \SIM 
    \end{tabular}
    \caption{Summary of susceptible results from \cite{waxman2021manipulation}.}
    \label{tab:susceptible}
\end{table}

\section{The Complexity of the Manipulation and the Improvement Problems}

We begin with the \ME\ objective. Recall that manipulation is possible only by removing edges. Indeed, finding any type of optimal manipulation is computationally hard. 
Moreover, we show that even deciding whether a SIM exists for a given instance is computationally hard.
\begin{theorem}
Given a graph $G$, and a manipulator $m^- \in A$, deciding whether any type of manipulation exists when the objective is \ME\ is co-$NP$-hard.  
\label{thm:sim_me_directed_hard}
\end{theorem}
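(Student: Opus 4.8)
\textbf{Proof plan.}
The plan is to prove co-$NP$-hardness by reducing from a standard $NP$-complete problem — concretely $3$-SAT, so that we are really reducing its complement UNSAT (which is co-$NP$-complete) into our decision problem. Given a formula $\phi$ we build, in polynomial time, a directed graph $G$ and a manipulator $m^- \in A$ such that $m^-$ has a manipulation of some type under \ME\ if and only if $\phi$ is \emph{unsatisfiable}; moreover the construction is arranged so that in the unsatisfiable case the manipulation can be taken to be a \SIM, which at the same time yields the stronger claim about \SIM. Equivalently: $\phi$ is satisfiable iff $m^-$ has \emph{no} manipulation at all.

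Two structural remarks drive the design. First, since $G$ is directed and $u(a,\mathcal P)$ counts only the outgoing edges of $a$ that stay inside $a$'s coalition, deleting an outgoing edge of $m^-$ leaves $u(a,\mathcal P)$ unchanged for every $a\neq m^-$ and only \emph{decreases} $m^-$'s reported degree, so a removal can affect $O_{\ME}(\cdot)$ only through the way $m^-$'s smaller reported degree changes the egalitarian value. Second, since a \SIM\ implies an \LBM\ and a \WIM\ requires both an \LBM\ and a \UBM, ``$m^-$ has a manipulation'' is equivalent to ``some removal is an \LBM\ or a \UBM'', so the construction must block \emph{both} in the satisfiable case. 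The delicate point is that shrinking its reported degree \emph{enlarges} the set of egalitarian-optimal partitions (the bar $\min_a u(a,\mathcal P)$ becomes easier to meet), which is precisely the lever behind \UBM; hence the gadget must be arranged so that even after such a collapse $m^-$ cannot extract more real utility than it already has.

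For the gadget, fix a constant $k$ (one may take $k=2$ and pad with dummy components to reach any larger fixed $k$). From $\phi$ with variables $x_1,\dots,x_n$ and clauses $c_1,\dots,c_\ell$ we construct: a \emph{variable part}, one rigid block per variable that every optimal partition must place as a whole on one of two ``sides'', the side encoding the variable's value, so that optimal partitions of this part correspond to truth assignments; a \emph{clause part}, attached so that in the partition induced by an assignment $\sigma$ each clause vertex reaches its target internal degree exactly when $\sigma$ satisfies that clause; a large \emph{filler part} guaranteeing that every non-$m^-$ agent attains a fixed high utility $L$ in every partition, so the egalitarian value equals $\min\{L,\ \text{value contributed by the clause part},\ u^{R}(m^-,\mathcal P)\}$; and finally $m^-$, whose outgoing edges are placed so that (i) there is a partition $\mathcal P^\star$ in which $m^-$ attains its global maximum $M_G(m^-)$ while the egalitarian value is as large as possible, and (ii) one particular truthfully reported edge $e$ of $m^-$ keeps alive a ``competing'' optimal partition in which $m^-$ gets strictly less than $M_G(m^-)$; $e$ is the candidate edge to delete.

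Correctness then runs as follows. If $\phi$ is satisfiable, the partitions coming from satisfying assignments are precisely the egalitarian optima of $G$, and the gadget is tuned so that $m^-$ obtains $M_G(m^-)$ in every one of them; hence $LB_{\ME}(G,m^-)=UB_{\ME}(G,m^-)=M_G(m^-)$, and since no report can give $m^-$ more than $M_G(m^-)$ real friends, no \LBM, \UBM, \WIM\ or \SIM\ exists. If $\phi$ is unsatisfiable, every assignment-induced partition loses egalitarian value at the clause part, the optima of $G$ drop into the ``competing''/$\mathcal P^\star$ regime, and deleting $e$ removes the low-utility competing optimum (or raises the attainable egalitarian value), which — with the degrees tuned appropriately — makes every optimal partition of $G(m^-)$ strictly better for $m^-$ than every optimal partition of $G$, i.e.\ a \SIM. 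I expect the main obstacle to be exactly the arithmetic of these target degrees: setting the filler part and the value $M_G(m^-)$ so that the \UBM-collapse cannot help $m^-$ when $\phi$ is satisfiable, while still leaving a genuine, exploitable gap when $\phi$ is unsatisfiable.
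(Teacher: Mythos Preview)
Your high-level strategy is the right one and matches the paper: reduce from the complement of $3$-SAT, build $(G,m^-)$ so that $\phi$ satisfiable $\Leftrightarrow$ $m^-$ already attains its global maximum $|N(m^-)|$ in every \ME-optimal partition (hence no \LBM/\UBM/\WIM/\SIM\ is possible), and $\phi$ unsatisfiable $\Rightarrow$ a single removal report is a \SIM. Your two structural remarks are also correct, and your observation that the satisfiable case is handled simply by $LB=UB=M_G(m^-)$ dissolves your later worry about a ``\UBM-collapse'': since real utility is capped at $|N(m^-)|$, no report can raise the upper bound.

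The genuine gap is that you have not given the construction, and the sketch you offer (``rigid variable blocks placed on one of two sides'', generic filler, a single distinguished edge $e$) is not yet a gadget: nothing in your description forces optimal egalitarian partitions to correspond to truth assignments, nor pins down the arithmetic you yourself flag as the obstacle. The paper's construction is quite specific and not the Schaefer-style two-sided layout you suggest. It uses the \emph{ring graph} of Bang-Jensen et al.\ encoding $\mathcal F$: switches joined by positive/negative paths so that the ring decomposes into exactly two directed cycles, each cycle encoding a truth assignment. Every ring vertex, clause vertex, and a special vertex $r$ is attached to its own $9$-clique, with edge counts tuned so that achieving minimum in-coalition out-degree $8$ forces (i) each $9$-clique to stay intact, (ii) the ring to split into its two cycles, and (iii) each clause vertex to sit with a cycle that satisfies it. Hence the egalitarian optimum is $8$ iff $\mathcal F$ is satisfiable, and drops to $7$ otherwise (witnessed by isolating one clique). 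The manipulator has exactly $8$ outgoing edges: $7$ into its private clique $Q_m$ and $1$ into the ``hub'' clique $Q_r$. If $\mathcal F$ is satisfiable, the optimum is $8$ and $m^-$ must be with both $Q_m$ and $Q_r$, so $m^-$ gets all $8$ real friends and no manipulation exists. If $\mathcal F$ is unsatisfiable, the unique way to reach $7$ is $\{m^-\}\cup Q_m$ versus the rest, giving $m^-$ only $7$; removing \emph{two} edges into $Q_m$ (not one edge $e$) drops $m^-$'s reported degree to $6$, after which every egalitarian optimum of $G(m^-)$ (value $6$) must place $m^-$ with $Q_m$ \emph{and} $Q_r$, so $m^-$'s real utility is $8>7=UB_{\ME}(G,m^-)$, a \SIM.

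In short: your plan is sound, but the proof lives in the gadget, and the paper's gadget (ring graph $+$ size-$9$ cliques $+$ the $7{+}1$ split of $m^-$'s edges, with a two-edge removal) is what you are missing.
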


The reduction is from the complementary problem of the NP-complete $3$-SAT problem.
\begin{definition}[3-SAT problem]
Let $\mathcal{F}$ be a Boolean CNF formula, such that each clause has three literals. We are asked whether there exists a truth assignment that satisfies  $\mathcal{F}$.
\end{definition}

Our gadget for the hardness proof is a ring graph, which represents a CNF Boolean formula, as was introduced by \cite{bang2016finding}.

\begin{figure}[H]
          
       \hfill

        \centering
\includegraphics[page=1,width=0.1\textwidth]{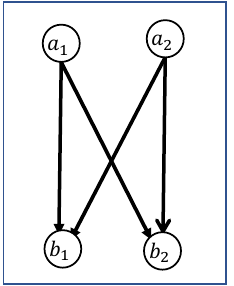}

    \caption{A switch } \label{fig:switch}
    
\end{figure}
\vspace{0.2cm}
\begin{figure}[hbp]
    
    \begin{minipage}{\linewidth}
          
        \hfill
         
            \begin{subfigure}{0.5\textwidth}
            \centering
            \includegraphics[page=1,width=\textwidth]{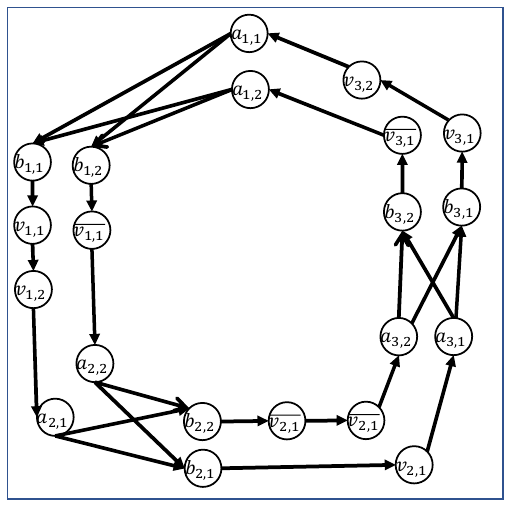}
 \caption{}
            \label{fig:ring}
        \end{subfigure}    
        \hfill
        \begin{subfigure}{0.5\textwidth}
            \centering
            \includegraphics[page=1,width=\textwidth]{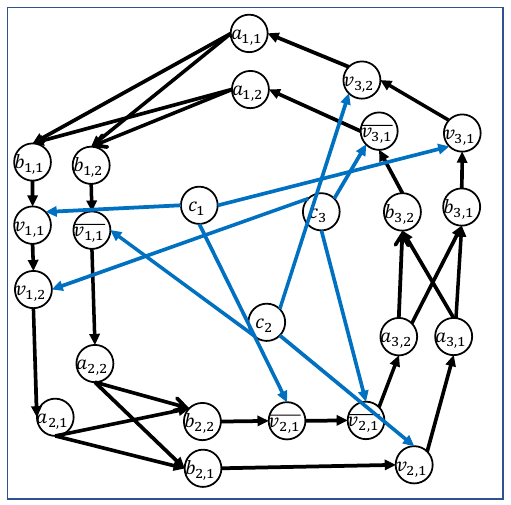}
            
\caption{}
            \label{fig:ring_clauses}
        \end{subfigure}

        \vspace{0.2cm}
        
    \caption{(a) An illustration of a ring graph for $\mathcal{F} = (x_1 \lor \bar{x_2} \lor x_3) \land (\bar{x_1} \lor x_2 \lor x_3) \land (x_1 \lor \bar{x_2} \lor \bar{x_3})$ (b) The ring graph for the same $\mathcal{F}$ with the additions of clause vertices.}
    \end{minipage}
\end{figure}
\begin{definition}[Ring graph]
Let  $\mathcal{F}$ be a  Boolean CNF formula with $m$ clauses, $C_1,\ldots, C_m$, and $n$ variables, $x_1,\ldots, x_n$.
A switch $s$ is a directed $K_{2,2}$ graph, that is, $s=(\{a_1,a_2,b_1,b_2\},\{a_1,a_2\}\times \{b_1,b_2\})$ (see Figure \ref{fig:switch}). 
The ring graph is composed  of $n$ switches, $s_1,\ldots, s_n$, where   $s_{i}$ is connected to $s_{i+1}$ with two directed paths (in the case that $i=n$ we connect $s_n$ to $s_1$), one  from $b_{i,1}$ to $a_{i+1,1}$ (the positive path) and another from $b_{i,2}$ to $a_{i+1,2}$ (the negative  path). For each positive (negative) occurrence of $x_i$ in $\mathcal{F}$ we add a vertex to the positive (negative) path. We refer to these vertices as literal vertices (see Figure \ref{fig:ring}).
\end{definition}

In addition to the basic ring graph structure, given a CNF formula we add $m$ clause vertices, $c_1, \ldots, c_m$. For each positive (negative) occurrence of $x_i$ in $C_j$ we add an edge from $c_j$ to one of the literal vertices on the positive (negative) path between $s_j$ and $s_{j+1}$, that has no incoming edge from another clause vertex (see Figure \ref{fig:ring_clauses}).

Note that a ring graph can be partitioned into exactly two disjoint cycles, where each cycle defines a truth assignment. This is due to the fact that in the switch, there is an option to choose between the positive path ($b_{i,1}$ to $a_{i+1,1}$) and the negative path ($b_{i,2}$ to $a_{i+1,2}$). There is only one directed path from $b_{i,1}$ to $a_{i+1,1}$ and only one directed path from $b_{i,2}$ to $a_{i+1,2}$. Therefore, a cycle that uses the positive path from  $b_{i,1}$ to $a_{i+1,1}$ represents a truth assignment in which $x_i$ is true, and a  cycle that uses the negative path from $b_{i,2}$ to $a_{i+1,2}$ represents a truth assignment in which $x_i$ is false.

Our proof works as follows.
Given a Boolean CNF formula $\mathcal{F}$, we construct a graph $G$ with a vertex $m^-$ that we identify as the manipulator, such that $m^-$ has a manipulation if and only if  $\mathcal{F}$ is not satisfiable.
In particular, we show that if a satisfying truth assignment exists, then there exists a graph partition where each vertex has at least $8$ neighbors (i.e., outgoing edges within his coalition). Since the manipulator has $8$ neighbors in $G$ there is not any type of manipulation.
On the other hand, if there is no satisfying truth assignment, we show that any optimal partition (according to the \ME\ objective) guarantees that each vertex gets at least $7$ neighbors in his coalition, and the manipulator $m^-$ gets exactly $7$ neighbors in his coalition. However, the manipulator has a \SIM, denoted as $E^-(m^-)$, such that $LB(G(m^-),m^-) = 8$. As \SIM\ makes the lower bound after the manipulation greater than the upper bound before the manipulation, every \SIM\ is also \LBM, \UBM, and \WIM. Therefore, the proof holds for all manipulation types.
Due to space constraints, the full proof and several other proofs are deferred to 
the appendix.

We now show that the improvement problem is also computationally hard. Moreover, even deciding whether a given report is a manipulation is computationally hard. The proof is almost identical to the proof of Theorem~\ref{thm:sim_me_directed_hard} 
\begin{theorem}
Given a graph $G$, a manipulator $m^- \in A$, a manipulation type $type$, and a report $E^R(m)$, deciding whether $I_{type}(E^R(m)) > 0$, when the objective is \ME\ is co-$NP$-hard.  
\label{thm:improvment_me_directed_hard}
\end{theorem}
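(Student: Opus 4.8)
The plan is to reuse the construction of Theorem~\ref{thm:sim_me_directed_hard} essentially verbatim, observing that the reduction there already produces, for every CNF formula $\mathcal{F}$, a graph $G$ with a designated manipulator $m^-$ such that: if $\mathcal{F}$ is satisfiable then $LB_{\ME}(G,m^-) = UB_{\ME}(G,m^-) = 8$ and no manipulation helps, while if $\mathcal{F}$ is unsatisfiable then $m^-$ has an explicit beneficial removal $E^-(m^-)$ achieving $LB_{\ME}(G(m^-),m^-) = 8 > 7 = UB_{\ME}(G,m^-)$. The key point is that in the unsatisfiable case the witness manipulation is \emph{constructed explicitly} in that proof — it is not merely shown to exist — so we can hand it to the Improvement Problem as the input report $E^R(m) = E(m^-) \setminus E^-(m^-)$.

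First I would recall (or restate) the exact graph $G$, manipulator $m^-$, and candidate report $E^R(m)$ from the proof of Theorem~\ref{thm:sim_me_directed_hard}. Then I would argue the two directions. If $\mathcal{F}$ is satisfiable, then by the analysis in that proof there is an $\ME$-optimal partition in which every vertex has $8$ friends in its coalition, so $LB_{\ME}(G,m^-) = UB_{\ME}(G,m^-) = 8$; since the manipulator already has $8$ friends and the graph $G(m)$ produced by the report has no more friends for $m^-$ on the real graph, no manipulation of any type can strictly increase $LB$ or $UB$, hence $I_{type}(E^R(m)) = 0$ for every $type$. Conversely, if $\mathcal{F}$ is unsatisfiable, the same argument as in Theorem~\ref{thm:sim_me_directed_hard} shows that $UB_{\ME}(G,m^-) = 7$ whereas the specific report $E^R(m)$ yields $LB_{\ME}(G(m),m^-) = 8$, so $E^R(m)$ is a \SIM\ (and therefore also \LBM, \UBM, \WIM) with $I_{type}(E^R(m)) > 0$ for every $type$. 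This gives $I_{type}(E^R(m)) > 0 \iff \mathcal{F}$ is unsatisfiable, i.e.\ a polynomial-time reduction from the complement of $3$-SAT, establishing co-$NP$-hardness.

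The main obstacle — really the only substantive one — is making sure the \emph{fixed} report $E^R(m)$ behaves correctly in \emph{both} branches simultaneously: in the satisfiable branch it must fail to improve anything (which follows because $m^-$'s lower and upper bounds are already pinned at the global value $8$ and the utility is always evaluated on the true graph $G$, so deleting outgoing edges of $m^-$ cannot raise $u(m^-,\cdot)$), and in the unsatisfiable branch it must be exactly the \SIM\ exhibited in the earlier proof. Since the earlier proof already carries out the delicate combinatorial bookkeeping on the ring/switch/clause gadget (lower bound $7$ for all vertices in an optimal partition when $\mathcal{F}$ is unsatisfiable, and $m^-$ attaining exactly $7$), there is nothing new to compute here; I would simply cite those facts and note that the construction is polynomial-time computable from $\mathcal{F}$, so the reduction is valid. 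I would close by remarking that the same instance shows hardness of the Improvement Problem restricted to any single manipulation type in $\{\LBM,\UBM,\WIM,\SIM\}$.
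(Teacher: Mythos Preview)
Your proposal is correct and is essentially the same as the paper's own proof: the paper also reuses the graph from Theorem~\ref{thm:sim_me_directed_hard} verbatim and fixes $E^R(m^-)$ to be the explicit report in which $m^-$ removes two outgoing edges to $Q_m$, then invokes the analysis of Theorem~\ref{thm:sim_me_directed_hard} to conclude that $I_{type}(E^R(m^-))>0$ iff $\mathcal{F}$ is unsatisfiable. Your treatment of the satisfiable branch (noting that $m^-$ has exactly $8$ outgoing edges in $G$, so $u(m^-,\cdot)\le 8$ and hence neither bound can rise above the already-attained value $8$) is a bit more explicit than the paper's, but the argument is the same.
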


The \ALO\ objective is less demanding than the \ME\ objective. 
In addition, it is susceptible only to \LBM\ by removing edges.
However, finding an optimal \LBM\ is computationally hard.
\begin{theorem}
Unless $P=NP$, there is no polynomial time algorithm for finding an optimal \LBM\ for a manipulator $m^-$ when the objective is \ALO.
\label{thm:alo_hard_compute}
\end{theorem}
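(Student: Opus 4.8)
The plan is to reduce an $NP$-hard problem to the Manipulation Problem for \ALO, so that a polynomial-time algorithm returning an optimal \LBM\ would solve it. The first ingredient is a structural simplification of the manipulator's options that I would prove as a preliminary lemma: for \ALO\ it is never worse for $m^-$ to keep exactly one outgoing edge. Indeed, if $m^-$ keeps the set $S\subseteq N(m^-)$ of reported edges, then $O_{\ALO}(G(m^-))=\{\mathcal{P}\in O_{\ALO}(G): C(m^-)\cap S\neq\emptyset\}$ (removing edges only shrinks the feasible set, and $S\subseteq N(m^-)$). Writing $g(s):=\min\{u(m^-,\mathcal{P}):\mathcal{P}\in O_{\ALO}(G),\ s\in C(m^-)\}$, with $g(s):=0$ when that set is empty, this gives $LB_{\ALO}(G(m^-),m^-)=\min_{s\in S}g(s)$, which is maximised by the smallest possible $S$, i.e.\ $S=\{s^\ast\}$ with $s^\ast=\arg\max_{s\in N(m^-)}g(s)$; moreover the no-manipulation value is $LB_{\ALO}(G,m^-)=\min_{s\in N(m^-)}g(s)$. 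Hence an \LBM\ exists if and only if the numbers $\{g(s)\}_{s\in N(m^-)}$ are not all equal, and a polynomial-time algorithm for the Manipulation Problem would in particular decide \emph{that} question. So it suffices to build an instance with $O_{\ALO}(G)\neq\emptyset$ (hence $LB_{\ALO}(G,m^-)\geq1$) in which deciding whether all $g(s)$ coincide is $NP$-hard.

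For the reduction I would start from an $NP$-hard problem — a natural candidate is $3$-SAT via a directed-cycle gadget in the spirit of the ring graph of Theorem~\ref{thm:sim_me_directed_hard}, since the crucial feature there, that inside a directed cycle every vertex has out-degree exactly one within its coalition, is precisely the rigidity tool one needs under the otherwise permissive \ALO\ objective. I would equip $m^-$ with one ``dummy'' friend $s_i$ for each of several local choices, plus a single ``critical'' friend $s_0$ wired into the gadget. Each dummy $s_i$ gets a private partner $t_i$ whose only friend is $s_i$; this makes it easy to exhibit, for every dummy, a feasible \ALO\ partition whose $m^-$-coalition is exactly $\{m^-,s_i\}$ — place every other dummy with its partner and route the remaining vertices through the cycle gadgets — so that $g(s_i)=1$ unconditionally. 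The whole design effort concentrates on $s_0$: the aim is that every \ALO-feasible partition placing $s_0$ together with $m^-$ is forced, through the directed-cycle structure encoding $\mathcal{F}$, to also pull a second friend of $m^-$ into $C(m^-)$ exactly when $\mathcal{F}$ is unsatisfiable, whereas a satisfying assignment yields a feasible partition leaving $m^-$ with only $s_0$. Then $g(s_0)\in\{1,2\}$ and its value records the answer, so an \LBM\ exists precisely when $\mathcal{F}$ is unsatisfiable; since $3$-SAT is $NP$-complete, a polynomial-time Manipulation-Problem algorithm would decide the complement of $3$-SAT in polynomial time, forcing $P=NP$.

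The main obstacle is controlling $g(s_0)$, i.e.\ pinning down the \emph{worst} \ALO-feasible partition that uses the critical anchor. This worst partition is chosen by an adversary that tries to ``re-park'' as many of $m^-$'s friends as possible outside $C(m^-)$ while keeping all $k$ coalitions non-empty and every vertex paired with a same-coalition out-neighbour; because any set of vertices containing a matching can be made \ALO-feasible on its own, such rigidity must be engineered rather than assumed. The directed-cycle gadgets are the lever: once one vertex of a cycle is committed to $C(m^-)$, the rest of that cycle is dragged in, which is how the logical structure of $\mathcal{F}$ becomes a hard constraint on which friends of $m^-$ can be parked simultaneously elsewhere. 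The delicate points — exactly as in the ring-graph argument — are to make the gadget constraints both necessary and sufficient and to rule out unintended cross-coalition interactions (an adversary ``stealing'' a partner vertex, or exploiting the non-emptiness requirement on coalitions); verifying $O_{\ALO}(G)\neq\emptyset$ and computing $\min_s g(s)$ exactly are comparatively routine. Finally, since the same construction, together with a fixed report $E^R(m)=E(m)\setminus E^-(m)$ that keeps only $s_0$, yields an instance where deciding $I_{\LBM}(E^R(m))>0$ is $NP$-hard, I would record the hardness of the Improvement Problem for \ALO\ as a corollary.
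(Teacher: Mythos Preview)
Your preliminary lemma --- that for \ALO\ the manipulator may as well retain a single outgoing edge, reducing the problem to maximising $g(s)$ over $s\in N(m^-)$ --- is correct and a genuine clarification the paper never isolates (one caveat: with the convention $g(s)=0$ for empty anchor sets, the identity $LB=\min_{s\in S}g(s)$ fails when $S$ mixes empty and non-empty anchors; the monotonicity $S'\subseteq S\Rightarrow LB(S')\geq LB(S)$ still yields the singleton conclusion, so the lemma survives). The reductions then take different routes. The paper attaches a directed $8$-cycle $D_8$ as a fixed-value anchor ($g=8$ unconditionally) together with a critical clause block engineered so that its $g$-value is below $8$ when $\mathcal{F}$ is satisfiable and at least $9$ otherwise; an optimal-\LBM\ oracle thus decides SAT by whether the optimum exceeds $8$. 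You instead pin the floor at $1$ via dummies and let a single critical anchor's $g$-value jump to $\geq 2$ exactly when $\mathcal{F}$ is unsatisfiable, so the oracle decides UNSAT. Both are valid; yours is conceptually leaner once the single-edge lemma is in hand, while the paper's has the small advantage that an \LBM\ always exists in the constructed instance, so the answer is read off the \emph{form} of the returned manipulation rather than from a ``no manipulation'' verdict.

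A few points to tighten before this becomes a proof. The coalition you write as $\{m^-,s_i\}$ cannot be \ALO, since $t_i$'s sole out-neighbour is $s_i$; you need $\{m^-,s_i,t_i\}$, and $t_i$ must not be a friend of $m^-$. Your aside that ``any set of vertices containing a matching can be made \ALO-feasible on its own'' is an undirected intuition; in the directed setting every \ALO\ coalition must contain a directed cycle (the paper records this as a lemma in the appendix), and that is exactly the rigidity you should invoke. The gadget for $s_0$ that you leave unspecified is essentially the construction behind Theorem~\ref{thm:improvment_alo_directed_hard}: take $s_0=c_{m+1}$ in a ring encoding $\mathcal{F}'=\mathcal{F}\wedge(a\vee\bar a)$ and let the remaining friends of $m^-$ be the other clause vertices; then $g(s_0)=1$ iff $\mathcal{F}$ is satisfiable. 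Finally, note that your improvement-problem corollary, with report $\{s_0\}$, yields co-$NP$-hardness rather than $NP$-hardness --- still sufficient, but the opposite direction from the paper's Theorem~\ref{thm:improvment_alo_directed_hard}.
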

The reduction is from a variant of the 3-SAT problem, in which each variable appears in at most $3$ clauses, each literal appears in at most $2$ clauses, and each clause has $2$ or $3$ literals.
The improvement problem is also computationally hard.
\begin{theorem}
Given a graph $G$, a manipulator $m^- \in A$,  and a report $E^R(m)$, deciding whether $I_{\LBM}(E^R(m)) > 0$  when the objective is \ALO\ is $NP$-hard.  
\label{thm:improvment_alo_directed_hard}
\end{theorem}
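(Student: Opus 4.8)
The plan is to adapt the reduction used to prove Theorem~\ref{thm:alo_hard_compute}, which already maps an instance $\mathcal{F}$ of the restricted 3-SAT variant (each variable in at most $3$ clauses, each literal in at most $2$ clauses, each clause with $2$ or $3$ literals) to a graph $G$ with a distinguished manipulator $m^-$. The only change is that, instead of asking whether \emph{some} beneficial \LBM\ exists, I fix the report in advance: let $E^-_\star \subseteq E(m^-)$ be the ``canonical'' set of outgoing edges of $m^-$ that the analysis of Theorem~\ref{thm:alo_hard_compute} identifies as the relevant removal (the one that, whenever it helps, certifies satisfiability), and set $E^R(m^-) = E(m^-)\setminus E^-_\star$. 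The reduction then outputs the Improvement Problem instance consisting of $G$, the objective \ALO, the manipulator $m^-$, the type \LBM, and this report, and it remains to prove the equivalence $I_{\LBM}(E^R(m^-)) > 0 \iff \mathcal{F}$ is satisfiable.

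To carry this out I would first establish the baseline: for every $\mathcal{F}$ we have $O_{\ALO}(G)\neq\emptyset$ and $LB_{\ALO}(G,m^-)=\ell$ for a fixed value $\ell\ge 1$, witnessed by a ``bad'' feasible partition of $G$ available unconditionally (one in which $m^-$'s coalition contains as few of $m^-$'s real neighbors as the gadget permits). For the ``if'' direction, from a satisfying assignment of $\mathcal{F}$ I would construct an \ALO-feasible partition of the reported graph $G(m^-)$ and argue, by inspecting the gadget, that \emph{every} partition in $O_{\ALO}(G(m^-))$ must place at least $\ell+1$ real neighbors of $m^-$ in his coalition; since $u(m^-,\cdot)$ is evaluated on the true graph $G$, this gives $LB_{\ALO}(G(m^-),m^-)\ge \ell+1 > \ell$, so $E^R(m^-)$ is an \LBM\ with positive improvement. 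For the ``only if'' direction, when $\mathcal{F}$ is unsatisfiable I would show $O_{\ALO}(G(m^-))=\emptyset$ (the hidden edges $E^-_\star$ were precisely what allowed the gadget to be partitioned consistently), so by the paper's convention all utilities on the reported graph are $0$ and $LB_{\ALO}(G(m^-),m^-)=0\le\ell$; hence $E^R(m^-)$ is not an \LBM\ and $I_{\LBM}(E^R(m^-))=0$. If instead the construction leaves $G(m^-)$ feasible in the unsatisfiable case, the same conclusion follows by exhibiting one feasible partition of $G(m^-)$ in which $m^-$ retains only $\ell$ real neighbors.

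The main obstacle is exactly this ``pin down a single report'' step. The manipulation-problem hardness needs only the \emph{existence} of a beneficial removal, so it may let $E^-_\star$ depend on the satisfying assignment, whereas here the report must be one fixed, efficiently computable set whose effect on $LB_{\ALO}$ is entirely governed by satisfiability. Making this work forces the construction to satisfy a sharp equivalence — $G(m^-)$ admits an \ALO-feasible partition keeping $m^-$'s utility above $\ell$ if and only if $\mathcal{F}$ is satisfiable — and to guarantee that the baseline value $\ell$ is attained in $G$ but strictly unattainable in every feasible partition of $G(m^-)$ in the satisfiable case. Verifying both directions of that equivalence, together with the degree-counting showing the gadget leaves $m^-$ with exactly the intended number of in-coalition real neighbors, is the technical core; the remaining bookkeeping (non-emptiness of the $k$ parts, correctness of the ``bad'' partition) is routine.
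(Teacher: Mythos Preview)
Your plan to recycle the gadget of Theorem~\ref{thm:alo_hard_compute} and simply hardcode the ``canonical'' removal has a real gap that you flag but never close. That theorem separates satisfiable from unsatisfiable instances by \emph{which} removal is optimal (keep only the $D_8$ edges versus keep only the $C_{m+1}$ edges), not by whether any single fixed removal is an \LBM. Concretely: the partition $\{m^-\cup D_8,\ \text{rest}\}$ is an \ALO\ partition of $G$ with $u(m^-,\cdot)=8$ regardless of $\mathcal F$, and the ``keep only $D_8$'' removal yields $LB_{\ALO}(G(m^-),m^-)=8$; whether this is a strict increase over $LB_{\ALO}(G,m^-)$ depends on whether $G$ admits an \ALO\ partition giving $m^-$ fewer than eight friends, and that is controlled by near-satisfiability of $\mathcal F$ (existence of an assignment satisfying all but one or two clauses), not by satisfiability itself. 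Your fallback of exhibiting a partition of $G(m^-)$ at exactly $\ell$ requires knowing $\ell=LB_{\ALO}(G,m^-)$ precisely, which the Theorem~\ref{thm:alo_hard_compute} proof never pins down.

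The paper does not patch that gadget; it builds a different, simpler one. It again starts from the ring for $\mathcal F'=\mathcal F\wedge(a\vee\bar a)$ with clause vertices $c_1,\dots,c_{m+1}$, but drops $D_8$ entirely and attaches a \emph{single} auxiliary vertex $c'_i$ (with sole outgoing edge to $c_i$) for each $i\le m$ only. The manipulator has edges to every $c_i$ and every $c'_i$, and the fixed report removes just the one edge $(m^-,c_{m+1})$. If $\mathcal F$ is satisfiable, putting all $c_i,c'_i$ ($i\le m$) with the satisfying cycle and $m^-,c_{m+1}$ with the complementary cycle is feasible and gives $u(m^-,\cdot)=1$, so $LB_{\ALO}(G,m^-)=1$; after the removal, $m^-$ must share a coalition with some $c_i$ ($i\le m$) and hence also $c'_i$, pushing $LB$ to at least $2$. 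Conversely, if the removal is an \LBM\ then some \ALO\ partition of $G$ dies in $G(m^-)$; the only way this happens is if $c_{m+1}$ was $m^-$'s sole friend there, which forces all $c_i$ ($i\le m$) into the other coalition together with a ring cycle satisfying them all---a satisfying assignment. The clean dichotomy you were looking for is engineered into this new construction rather than extracted from the old one.
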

The reduction is from the same variant of 3-SAT that is used in the proof of Theorem~\ref{thm:alo_hard_compute}.

The \MU\ objective is susceptible to all types of manipulations, with either $m^+$ or $m^-$.
For the analysis of the \MU\ objective, we use the following definitions.

Given a directed graph $G=(A,E)$, we can naively convert $G$ to a weighted undirected graph $G^U=(A, E^U)$ by creating an edge $(u,v)$ whenever either $(u,v)$ or $(v,u)$ exists in $G$. If both $(u,v)$ and $(v,u)$ exist in $G$, then the edge weight is set to $2$. Otherwise, the edge weight is set to $1$. 
Given an undirected graph $G^U=(A,E^U)$, a set of edges $Y\subseteq E^U$ is a $k$-cut of $G^U$ if $G'=(A, E^U\setminus Y)$ is a graph that contains at least $k$ connected components. We refer to $2$-$cut$ simply as $cut$.
A $k$-cut $Y^*$ is a $min$-$k$-$cut$ if $\sum_{y \in Y^*} w(y)$ is minimal amongst all possible $k$-cuts of $G^U$. We denote $Y^*$ by $min$-$k$-$cut(G)$.
%
Note that any partition has a corresponding $k$-$cut$, and a \MU\ partition of $G$ has a corresponding $min$-$k$-$cut$  in $G^U$ (Based on Lemma 1 from ~\cite{branzei2009coalitional_and_stability}).
For simplicity, we will refer to the cuts in $G^U$ as cuts of $G$.

Unlike with the \ME\ and \ALO\ objectives, we show a polynomial time algorithm for solving the improvement problem with the \MU\ objective. 
\begin{theorem}
    The improvement problem can be solved in polynomial time when the objective is \MU, with any type of manipulator, and any manipulation type.
    \label{thm:mu_specific_man_improvment}
\end{theorem}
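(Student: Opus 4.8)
The plan is to reduce $I_{type}(E^R(m))$, for each manipulation type, to four numbers and to compute each of them in polynomial time. The graph $G(m)$ is obtained from $G$ and $E^R(m)$ in linear time, and from the definitions of \LBM, \UBM, \WIM\ and \SIM\ one sees that, once $LB_{\MU}(G,m)$, $UB_{\MU}(G,m)$, $LB_{\MU}(G(m),m)$ and $UB_{\MU}(G(m),m)$ are known, deciding whether $E^R(m)$ is a manipulation of the given type and computing its improvement costs only a constant number of comparisons and additions of these numbers: for the \LBM\ type the improvement is $\max\{0,\,LB_{\MU}(G(m),m)-LB_{\MU}(G,m)\}$; for \UBM\ it is the symmetric expression in the two upper bounds; and for \WIM\ and \SIM\ it is the sum of the lower- and upper-bound differences whenever the respective trigger condition holds, and $0$ otherwise. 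So it suffices to show that for an arbitrary graph $H$, to be instantiated as $G$ and as $G(m)$, one can compute $LB_{\MU}(H,m)$ and $UB_{\MU}(H,m)$ in polynomial time; the subtlety to keep in mind is that the partition ranges over $O_{\MU}(H)$, so optimality is judged on $H$, whereas $u(m,\mathcal P)=|N(m)\cap C_m(\mathcal P)|$ (with $N(m)$ the true out-neighbourhood and $C_m(\mathcal P)$ the part of $\mathcal P$ containing $m$) is always evaluated on the true graph $G$.

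I would then use the correspondence of Lemma 1 of \cite{branzei2009coalitional_and_stability}: $O_{\MU}(H)$ is exactly the set of partitions of $A$ into $k$ non-empty parts of minimum crossing weight in the weighted undirected graph $H^U$, i.e.\ the $min$-$k$-$cut$s of $H$. Hence $LB_{\MU}(H,m)=\min_{\mathcal P}|N(m)\cap C_m(\mathcal P)|$ and $UB_{\MU}(H,m)=\max_{\mathcal P}|N(m)\cap C_m(\mathcal P)|$, where $\mathcal P$ ranges over the $min$-$k$-$cut$s of $H$. Since $k$ is fixed, the weight $W^{*}$ of a $min$-$k$-$cut$ of $H$ is computable in polynomial time (Goldschmidt--Hochbaum), so the real task is to optimise the secondary objective $|N(m)\cap C_m(\cdot)|$ over the $k$-partitions of crossing weight exactly $W^{*}$, of which there may be exponentially many.

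For $UB_{\MU}(H,m)$ this reduces to one weighted $min$-$k$-$cut$ computation through a lexicographic perturbation. Multiply every weight of $H^U$ by $M:=|N(m)|+1$, and then add $1$ to the weight of the edge $\{m,b\}$ for each $b\in N(m)$, creating the edge with weight $1$ if it is absent. For every $k$-partition the crossing weight in the perturbed graph equals $M$ times its crossing weight in $H^U$ plus the number of vertices $b\in N(m)$ separated from $m$, and that number is at most $M-1$; hence a $min$-$k$-$cut$ of the perturbed graph is a $min$-$k$-$cut$ of $H$ that, among all of them, separates the fewest true neighbours of $m$ from $m$, so its value $|N(m)\cap C_m|$ equals $UB_{\MU}(H,m)$. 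For $LB_{\MU}(H,m)$ one would like the mirror construction --- \emph{subtract} $1$ from the weight of each edge $\{m,b\}$ (it stays positive since after scaling it is at least $M\ge 2$) --- so that a $min$-$k$-$cut$ of the perturbed graph separates the maximum number of true neighbours of $m$ and therefore minimises $|N(m)\cap C_m|$.

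The delicate point, which I expect to be the main obstacle, is precisely this $LB$ direction when $H=G(m)$ for an edge-removing manipulator $m^-$: a true neighbour $b$ of $m$ such that $m^-$ reported no edge $(m,b)$ and $b$ itself reports no edge $(b,m)$ contributes no edge to $H^U$, so there is nothing to subtract from, and --- unlike in the $UB$ case --- such a ``phantom'' neighbour cannot be simulated by inserting a small positive-weight edge, because a cut can be rewarded only for keeping two vertices together, never for separating them. To handle this case, and to make the argument uniform, I would instead rely on the fact that for fixed $k$ a graph has only $n^{O(k)}$ distinct $min$-$k$-$cut$s and all of them can be enumerated in polynomial time; enumerating them, evaluating $|N(m)\cap C_m|$ on each, and taking the minimum (respectively the maximum) yields $LB_{\MU}(H,m)$ and $UB_{\MU}(H,m)$ directly, with the perturbation above serving as a convenient shortcut for the $UB$ side. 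Instantiating $H$ as $G$ and as $G(m)$ and substituting the four values into the expressions of the first paragraph gives a polynomial-time algorithm for every manipulator type and every manipulation type, proving the theorem.
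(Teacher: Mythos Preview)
Your proposal is correct and, once the perturbation detour is set aside, rests on exactly the same key fact the paper uses: for fixed $k$ there are only $n^{O(k)}$ distinct $min$-$k$-$cut$s and they can all be enumerated in polynomial time, so one simply lists $O_{\MU}(G)$ and $O_{\MU}(G(m))$, reads off the four quantities $LB_{\MU}(\cdot,m)$ and $UB_{\MU}(\cdot,m)$, and plugs them into the improvement formula. The lexicographic-perturbation trick you develop for $UB$ is a valid shortcut (and your diagnosis of why it fails for $LB$ under edge removal is accurate), but the paper dispenses with it entirely and goes straight to enumeration for all four values.
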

\begin{proof}
    The number of $min$-$k$-$cut$s of any graph is $\mathcal{O}(n^{2(k-1)})$ ~\cite{karger1996new}, and the set of all the $min$-$k$-$cut$s can be computed in polynomial time for a fixed $k$ (e.g., using the algorithm of~\cite{gupta2019number}). Since we assume that $k$ is fixed, the set $O_{\MU}(G)$, which contains the corresponding partitions for every $min$-$k$-$cut$ of $G$, can be computed in polynomial time.
    Similarly, the set $O_{\MU}(G(m))$, which contains the corresponding partitions for every $min$-$k$-$cut$ of $G(m)$, can be computed in polynomial time. Clearly, given $O_{\MU}(G)$ and $O_{\MU}(G(m))$, we can compute $LB(G,m)$, $LB(G(m),m)$, $UB(G,m)$, and $UB(G(m),m)$. Therefore, for any manipulation type, the improvement problem can be solved in polynomial time. 
\end{proof} 
As a consequence, given two reports, $E^R_1(m)$ and $E^R_2(m)$, we can decide whether $E^R_1(m)$ is better than $E^R_2(m)$ in polynomial time.
We believe that finding an optimal manipulation with the \MU\ objective is computationally hard, and we leave it as an important open problem. Indeed, we present a general XP algorithm for finding any type of optimal manipulation, with any type of manipulator. 
The algorithm is parameterized by $MMC$, the maximum size of the $min$-$k$-$cut$ that can result from any possible manipulation. Formally, $MMC = \max_{E^R(m)} |min$-$k$-$cut(G(m))|$. 


We first define a connection between a partition and a report of a manipulator.
\begin{definition}
 Let $\mathcal{P}=\{C_1,C_2\ldots C_k\}$ be a partition and let $m\in C_1$. 
When $m = m^+$,  we say that a report $E^R(m^+)$ respects $\mathcal{P}$ if  $E^+(m^+) \subseteq \{\{m^+\}\times C_1\}$ and  $\{\{m^+\}\times C_1\} \subseteq E^R(m^+)$.
When $m = m^-$,  we say that a report $E^R(m^-)$ respects $\mathcal{P}$ if $E^-(m^-) \subseteq \{\{m^-\}\times C_i| i \in [2,k]\}$ and $E^R(m^-) \cap \set{\{m^-\}\times C_i|  i \in [2,k]} = \emptyset$.
\end{definition}
That is, a report $E^R(m^+)$ respects a partition $\mathcal{P}$ if the manipulator adds only edges to vertices within his coalition ($C_1$), and $G(m^+)$ contains all the edges between $m^+$ and the vertices within his coalition. Similarly, a report $E^R(m^-)$ respects a partition $\mathcal{P}$ if the manipulator removes only edges to vertices outside of his coalition, and $G(m^-)$ does not contain any edge between $m^-$ and the vertices outside of his coalition. Note that a report $E^R(m)$ that respects a partition $\mathcal{P}$ is not necessarily better for $m$ than the truthful report $E(m)$. If $E^R(m)$  respects a partition $\mathcal{P}$ and $E^R(m)$  is better for $m$ than $E(m)$, we say that $E^R(m)$ is a manipulation that respects a partition $\mathcal{P}$.

We now define a specific set of partitions, $\Pbar$.
$\Pbar$ contains every partition $\mathcal{P}$, which is surely beneficial for $m$, that the organizer may select as a result of a manipulation of $m$.
That is, there is no manipulation by $m$ such that $\mathcal{P} \in O_{\MU}(G(m))$ and $u(m, \mathcal{P}) > LB_{\MU}(G,m)$ but $\mathcal{P} \notin \bar{\mathcal{P}}$.

Our XP algorithm, Algorithm~\ref{alg:algorithm_max_util}, works as follows. It first calls Algorithm~\ref{alg:compute_p_bar_regular_manipulation} for computing $\bar{\mathcal{P}}$. It then iterates over all the partitions in $\Pbar$. For each $\mathcal{P} \in \Pbar$, the algorithm computes the report $E^R(m)$ that respects $\mathcal{P}$. The algorithm then compares $E^R(m)$ with the current best manipulation and updates it accordingly. 
The algorithm returns the best manipulation that respects a partition in $\Pbar$ (if such manipulation exists).

In order to prove the correctness of Algorithm~\ref{alg:algorithm_max_util}, we first show that Algorithm \ref{alg:compute_p_bar_regular_manipulation} correctly computes the set $\Pbar$.

\begin{algorithm}

\begin{algorithmic}[1] 
\REQUIRE $G=(A,E) , m\in A$\\
\ENSURE The set $\bar{\mathcal{P}}$

\STATE Let $E^R_{all}(m) = \{(m,a)|a\in A\}$
\IF {$m=m^+$}
\STATE {$maxSize \gets |min$-$k$-$cut(G_{all}(m))|$}
\ELSE [$\slash * m=m^- * \slash$]

\STATE {$maxSize \gets 2 \cdot |min$-$k$-$cut(G)|$}

\ENDIF
\FORALL{$k$-$cut$ $Y$ in $G$ of size at most $maxSize$}
    \STATE $\mathcal{P} \gets$ the corresponding partition of $Y$
    \IF {$u(m, \mathcal{P}) > LB_{\MU}(G,m)$}
        \STATE Add $\mathcal{P}$  to $\Pbar$
    \ENDIF
\ENDFOR
\RETURN  $\Pbar$
\end{algorithmic}
\caption{Compute the set $\bar{\mathcal{P}}$.}

\label{alg:compute_p_bar_regular_manipulation}
\end{algorithm}

\begin{algorithm}[bhpt]
\caption{Manipulation for \MU\ objective}
\label{alg:algorithm_max_util}

\begin{algorithmic}[1] 
\REQUIRE $G=(A,E) , m\in A$\\
\ENSURE $E^R(m)$

\STATE Compute the set $\bar{\mathcal{P}}$ by Algorithm \ref{alg:compute_p_bar_regular_manipulation} \label{line:compute_pbar}
\STATE $manip \gets E(m)$
\FORALL{$\mathcal{P}\in \bar{\mathcal{P}}$}

\STATE Compute $E^R(m)$ that respects $\mathcal{P}$
\IF {$E^R(m)$  is better then $manip$}
\STATE $manip \gets E^R(m)$
\ENDIF
\ENDFOR
\IF {$manip = E(m)$}
\RETURN No manipulation
\ENDIF
\RETURN $manip$
\end{algorithmic}
\end{algorithm}

\begin{lemma}
    \label{lemma:alg_correct_pbar}
    Algorithm \ref{alg:compute_p_bar_regular_manipulation} correctly computes the set $\bar{\mathcal{P}}$.
\end{lemma}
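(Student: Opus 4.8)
The plan is to show a two-way inclusion: every partition that Algorithm~\ref{alg:compute_p_bar_regular_manipulation} adds to $\Pbar$ genuinely belongs to $\Pbar$ (soundness), and every partition that ought to be in $\Pbar$ — i.e.\ every $\mathcal{P}$ for which some manipulation $E^R(m)$ makes $\mathcal{P}\in O_{\MU}(G(m))$ with $u(m,\mathcal{P}) > LB_{\MU}(G,m)$ — is in fact enumerated by the loop and added (completeness). Soundness is the easy direction: the algorithm only adds $\mathcal{P}$ when $u(m,\mathcal{P}) > LB_{\MU}(G,m)$, and since $u$ is evaluated on the real graph $G$, this is exactly the defining ``surely beneficial'' condition; one still has to argue that such a $\mathcal{P}$ is actually realizable as an \MU\ outcome of \emph{some} manipulation, which I would do by exhibiting the report that respects $\mathcal{P}$ (as in the proof of Theorem~\ref{thorem:mu_alg_not_harm}/the algorithm's own construction) and invoking the correspondence between partitions and $k$-cuts.

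The substance is in completeness, and this is where the parameter $maxSize$ does the work. Suppose $E^R(m)$ is a manipulation with $\mathcal{P}\in O_{\MU}(G(m))$ and $u(m,\mathcal{P}) > LB_{\MU}(G,m)$. Then $\mathcal{P}$ corresponds to a $min$-$k$-$cut$ of $G(m)^U$, so I need to bound $|min$-$k$-$cut(G(m))|$ in terms of quantities computable from $G$ alone. For $m^+$: adding outgoing edges from $m$ can only add weight to edges incident to $m$, so $G_{all}(m)$ (all possible added edges present) dominates every $G(m)$ edge-wise; hence any $min$-$k$-$cut$ of any $G(m)$ has size at most that of $G_{all}(m)$ — I would make this monotonicity precise by comparing cut weights and using that a $min$-$k$-$cut$ of the heavier graph is an upper bound on the min-cut value, then translating the value bound into a cardinality bound on the cut \emph{as a subset of $E^U$} (being careful that ``size'' in the algorithm means number of edges, so I must account for the fact that $G(m)$'s edge set is contained in $G_{all}(m)$'s). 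For $m^-$: removing edges from $m$ changes the graph by at most $\deg(m)$ edges, and removing an edge can decrease a $min$-$k$-$cut$'s weight by at most the edge weight; the factor $2$ in $2\cdot|min$-$k$-$cut(G)|$ is the slack needed because a removed edge of weight up to $2$ that was cut need not be cut afterward, and conversely new min-cuts can appear — I would bound $|min$-$k$-$cut(G(m^-))| \le 2|min$-$k$-$cut(G)|$ by a direct weight/cardinality comparison argument. In either case, the corresponding $k$-cut of $G(m)$, viewed as a $k$-cut of $G$ (edges not incident to $m$ are shared; edges incident to $m$ are a bounded correction), has size at most $maxSize$, so the loop over all $k$-cuts of $G$ of size at most $maxSize$ reaches it, and the membership test $u(m,\mathcal{P}) > LB_{\MU}(G,m)$ succeeds.

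The main obstacle I anticipate is the bookkeeping around ``size'' meaning edge cardinality rather than cut weight, combined with the fact that a $k$-cut realizing $\mathcal{P}$ in $G(m)$ is not literally a $k$-cut of the same cardinality in $G$: edges incident to $m$ that were added or removed must be reconciled. I would handle this by fixing, for the target partition $\mathcal{P}$ with $m\in C_1$, the \emph{canonical} $k$-cut of $G$ induced by $\mathcal{P}$ — all $G$-edges crossing between distinct $C_i$'s — and showing (i) this canonical cut differs from the $G(m)$-cut realizing $\mathcal{P}$ only in edges incident to $m$, of which there are at most $|A|-1$, and (ii) its size is still at most $maxSize$ given the bounds above, possibly after absorbing the $\deg(m)$-sized correction into the stated slack; alternatively, and more cleanly, I would argue directly that the canonical cut's cardinality is dominated by $|min$-$k$-$cut(G_{all}(m))|$ (resp.\ $2|min$-$k$-$cut(G)|$) because $\mathcal{P}\in O_{\MU}(G(m))$ forces $C_1,\dots,C_k$ to be ``tightly packed'' in $G$ as well. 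Finally I would note that the loop terminates and runs in XP time because the number of $k$-cuts of size at most $maxSize$ is $n^{O(maxSize)}$, which is why the lemma is the right lever for the XP bound on Algorithm~\ref{alg:algorithm_max_util}.
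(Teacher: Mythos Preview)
Your $m^+$ argument is essentially the paper's, and your instinct that the whole proof comes down to bounding $|Y|$ (the cut of the target partition $\mathcal{P}$ in $G$, not in $G(m)$) by $maxSize$ is correct. But two things are off.

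\textbf{Soundness is not part of the lemma.} The paper defines $\bar{\mathcal{P}}$ only by the property that it \emph{contains} every partition $\mathcal{P}$ for which some manipulation makes $\mathcal{P}\in O_{\MU}(G(m))$ with $u(m,\mathcal{P})>LB_{\MU}(G,m)$. The proof of Lemma~\ref{lemma:alg_correct_pbar} in the paper is purely the completeness direction; it never claims the algorithm's output is minimal. Your proposed soundness argument --- ``exhibit the report that respects $\mathcal{P}$'' --- would in fact not work, because a partition with a small $k$-cut in $G$ need not become a $min$-$k$-$cut$ of $G(m)$ under the respecting report (some other partition could have an even smaller cut there). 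Fortunately this is irrelevant: Algorithm~\ref{alg:algorithm_max_util} only iterates over $\bar{\mathcal{P}}$ and tests each candidate, so a superset is harmless.

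\textbf{The $m^-$ bound has a real gap.} You observe correctly that $|Y_1|\le |min\text{-}k\text{-}cut(G)|$ (removing edges can only shrink the min-cut) and that $|Y|-|Y_1|$ is controlled by edges incident to $m$. But you then try to absorb a ``$\deg(m)$-sized correction'' into the slack $2|min\text{-}k\text{-}cut(G)|$. That cannot work in general: $\deg(m)$ may be $\Theta(n)$ while $|min\text{-}k\text{-}cut(G)|$ is $O(1)$. The missing idea --- and the one step where the hypothesis $u(m,\mathcal{P})>LB_{\MU}(G,m)$ is actually used --- is that this beneficial condition bounds the number of $m$'s edges crossing $\mathcal{P}$. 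Concretely, $|Y\cap E(m)|=|E(m)|-u(m,\mathcal{P})$, and likewise for the lower-bound partition $\mathcal{P}_{LB}\in O_{\MU}(G)$ one has $|Y_{LB}\cap E(m)|=|E(m)|-LB_{\MU}(G,m)$; the inequality $u(m,\mathcal{P})>LB_{\MU}(G,m)$ therefore gives $|Y\cap E(m)|<|Y_{LB}\cap E(m)|\le |Y_{LB}|=|min\text{-}k\text{-}cut(G)|$. Combined with $|Y|-|Y\cap E(m)|\le |Y_1|\le |min\text{-}k\text{-}cut(G)|$, this yields $|Y|<2|min\text{-}k\text{-}cut(G)|=maxSize$. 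Your proposal never invokes the beneficial condition for $m^-$, and without it the factor $2$ has no justification.
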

\begin{proof}
    Assume by contradiction that there is a manipulation $E^R_1(m)$ such that $\mathcal{P}\in O_{\MU}(G_1(m))$ and $u(m, \mathcal{P}) > LB_{\MU}(G,m)$  but $\mathcal{P} \not\in \bar{\mathcal{P}}$. 
    Let $Y_1$ be $\mathcal{P}$'s corresponding $k$-$cut$ in $G_1(m)$.
    Since any manipulation does not change the number of vertices, a partition in $G_1(m)$ is also a partition in $G$.         
    Let $Y$ be $\mathcal{P}$'s corresponding $k$-$cut$ in $G$.
        
        \textbullet{\ If $m=m^+$:}
        Adding edges can only increase the size of the $min$-$k$-$cut$ of a graph. That is, $|min$-$k$-$cut(G)| \leq |min$-$k$-$cut(G_1(m))|$.
        Since $E^R_{all}(m)$ consists of all of the edges from $m$, $|min$-$k$-$cut(G_1(m))| \leq |min$-$k$-$cut(G_{all}(m))| = maxSize$.  Thus, $|Y| \leq |Y_1| = |min$-$k$-$cut(G_1(m))| \leq maxSize$.
        Therefore, by the algorithm's construction, $\mathcal{P} \in \Pbar$. A contradiction.
    
    \textbullet{\ If $m = m^-$:}
        Removing edges can only decrease the size of the $min$-$k$-$cut$ of a graph. Hence:  
        $|Y_1| \leq |min$-$k$-$cut(G)|$.
        In addition, the difference between the sizes of $Y$ and $Y_1$ is at most the number of edges that $m$ is able to remove. That is,  
        $|Y|  - |Y_1| \leq |Y \cap E(m)|$.
        Therefore, 
       \begin{equation}
            \label{eq:diff_Y}
            |Y| - |Y \cap E(m)|  \leq |min\text{-}k\text{-}cut(G)|
       \end{equation}
        Let $\mathcal{P}_{LB} \in O_{\MU}(G)$ be a partition such that $u(m, \mathcal{P}_{LB}) = LB_{\MU}(G,m)$, and let $Y_{LB}$ be the corresponding $k$-$cut$. Recall that $Y_{LB}$ is a $min$-$k$-$cut$ in $G$.
        Since $u(m, \mathcal{P}) > LB_{\MU}(G,m)$, then
        $|Y \cap E(m)| < |Y_{LB} \cap E(m)| \leq  |min$-$k$-$cut(G)|$.
        Combined with \eqref{eq:diff_Y} we get that $|Y| < 2 \cdot|min$-$k$-$cut(G)| = maxSize$.
        Therefore, by the algorithm's construction, $\mathcal{P} \in \Pbar$. A contradiction.
%
\end{proof}

We now show that if there is a manipulation $E^R_{*}(m)$, then there exists a partition $\mathcal{P} \in \bar{\mathcal{P}}$ such that the report that respects $\mathcal{P}$, $E^R(m)$, is a manipulation, and $E^R_{*}(m)$ is not better than $E^R(m)$. Our proof is for \LBM\ by a manipulator $m^-$. The proof for $m^+$ is very similar, and we provide it in the appendix. It is also straightforward to adapt both proofs for \UBM. However, for \WIM\ and \SIM\ the proofs are very similar, but they are slightly different from the proofs for \LBM\ and \UBM. We thus provide the proof for \WIM\ and $m^-$ in the appendix.

\begin{lemma}
If  $E_1^R(m^-)$ is an \LBM\ in which $LB_{\MU}(G_1(m^-), m^-) = x$, then there is a manipulation $E_2^R(m^-)$ such that (i) $E_2^R(m^-)$ respects a partition $\mathcal{P}\in \bar{\mathcal{P}}$  and  (ii) $LB_{\MU}(G_2(m^-), m^-) = x$.
\label{lemma:remove_regular manipulation_lb}
\end{lemma}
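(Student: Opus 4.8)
The plan is to take the manipulation $E_1^R(m^-)$ with its resulting graph $G_1(m^-)$, fix a $\MU$-partition $\mathcal{P}_1 \in O_{\MU}(G_1(m^-))$ witnessing the lower bound, i.e.\ $u(m^-,\mathcal{P}_1) = LB_{\MU}(G_1(m^-),m^-) = x$, and then let $\mathcal{P}$ be exactly this partition $\mathcal{P}_1$ viewed as a partition of $G$ (legitimate, since manipulation does not change the vertex set). Write $C_1$ for the coalition of $m^-$ in $\mathcal{P}$. I would then define $E_2^R(m^-)$ to be the report that respects $\mathcal{P}$: namely $m^-$ reports exactly the truthful edges from $m^-$ into $C_1$ and nothing going out of $C_1$, so $E^-_2(m^-) = E(m^-) \cap (\{m^-\}\times (A\setminus C_1))$. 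The first thing to check is that $u(m^-,\mathcal{P}) > LB_{\MU}(G,m^-)$, so that by the construction of Algorithm \ref{alg:compute_p_bar_regular_manipulation} (and Lemma \ref{lemma:alg_correct_pbar}) we indeed have $\mathcal{P}\in\bar{\mathcal{P}}$; this follows because $E_1^R(m^-)$ is an \LBM, so $x = LB_{\MU}(G_1(m^-),m^-) > LB_{\MU}(G,m^-)$, and $u(m^-,\mathcal{P}_1)=x$ does not depend on which graph we view $\mathcal{P}_1$ in (utility is always computed on the real graph $G$). So (i) holds.

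For (ii), the heart of the argument is to show $LB_{\MU}(G_2(m^-),m^-) = x$, which splits into the two inequalities. For the lower bound direction, $LB_{\MU}(G_2(m^-),m^-) \ge x$: I need that every $\MU$-partition of $G_2(m^-)$ gives $m^-$ at least $x$ friends. The key observation is that the manipulation $E_2^R(m^-)$ is ``at least as aggressive'' as $E_1^R(m^-)$ in the sense that it removes all outgoing edges of $m^-$ that leave $C_1$; I would argue that in $G_2(m^-)$ every $\MU$-partition must place $m^-$ together with all of its reported (= surviving) neighbors, because $m^-$ in $G_2(m^-)$ only has edges into $C_1$ — more precisely, I would use the $min$-$k$-$cut$ correspondence: a $\MU$-partition of $G_2(m^-)$ corresponds to a $min$-$k$-$cut$ of $G_2(m^-)^U$, and cutting any surviving edge at $m^-$ can be avoided or improved by moving $m^-$ to the side of its neighbors, since $m^-$ has no other incident edges pulling it elsewhere. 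Combined with the fact that $\mathcal{P}$ itself is a feasible partition in $G_2(m^-)$ realizing value $x$ for $m^-$ (and one must check it is actually $\MU$-optimal there, or at least that the optimum cannot do worse for $m^-$), this gives $LB_{\MU}(G_2(m^-),m^-)\ge x$. For the upper bound direction, $LB_{\MU}(G_2(m^-),m^-)\le x$: here I would show that $\mathcal{P}$ is itself a $\MU$-partition of $G_2(m^-)$ — this is where respecting $\mathcal{P}$ pays off, since $G_2(m^-)$ agrees with $G_1(m^-)$ on all edges except possibly edges at $m^-$, and in both graphs $m^-$'s surviving edges all stay inside $C_1$; a cut-exchange argument shows the $min$-$k$-$cut$ value and structure relevant to $m^-$ is preserved, so $\mathcal{P}\in O_{\MU}(G_2(m^-))$, whence $LB_{\MU}(G_2(m^-),m^-)\le u(m^-,\mathcal{P}) = x$.

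The main obstacle I anticipate is the lower bound direction of (ii): showing that \emph{no} $\MU$-partition of $G_2(m^-)$ can give $m^-$ fewer than $x$ friends. One cannot simply compare with $G_1(m^-)$ edge-by-edge because $E_2^R$ and $E_1^R$ may remove different sets of edges (e.g.\ $E_1^R$ might remove some edge inside $C_1$, or keep some edge leaving $C_1$). The clean way around this is to prove a structural claim: because in $G_2(m^-)$ all of $m^-$'s outgoing edges go into $C_1$ and $|N(m^-,C_1)| = x$ (the truthful count), any $k$-cut that separates $m^-$ from part of $C_1$ can be modified — by relocating $m^-$ alone into the component containing the bulk of its neighbors — to a $k$-cut of no greater weight, so there is always a $\MU$-partition of $G_2(m^-)$ in which $m^-$ sits with all $x$ of its surviving neighbors, and \emph{every} $\MU$-partition must match that weight, forcing $u(m^-,\cdot)\ge x$ by the same relocation argument applied to an arbitrary optimal partition. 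Care is needed to ensure the relocation keeps all $k$ parts non-empty; since $m^-$ is a single vertex being moved into a non-empty part and $k$ is fixed with all other parts untouched, this is routine. I would also double-check the degenerate case where $C_1=\{m^-\}$, which cannot arise when $x>LB_{\MU}(G,m^-)\ge 0$ forces $x\ge 1$.
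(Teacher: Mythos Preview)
Your setup and part (i) are correct and match the paper: choose $\mathcal{P}\in O_{\MU}(G_1(m^-))$ with $u(m^-,\mathcal{P})=x$, let $E_2^R(m^-)$ be the report respecting $\mathcal{P}$, and invoke the definition of $\bar{\mathcal{P}}$.

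The gap is in the lower-bound direction of (ii). Your relocation argument rests on the assertion that in $G_2(m^-)$ the vertex $m^-$ ``has no other incident edges pulling it elsewhere'', so that sliding $m^-$ toward its surviving outgoing neighbours never increases the $k$-cut. This is false: the manipulator $m^-$ controls only \emph{outgoing} edges, so every incoming edge $(v,m^-)\in E$ is still present in $G_2(m^-)$ and contributes weight in $G_2(m^-)^U$. Such edges can sit outside $C_1$ and pull $m^-$ away; moving $m^-$ may then strictly increase the cut. Moreover, even granting that relocation never increases the cut, you only conclude that \emph{some} optimal partition puts $m^-$ with many of its neighbours---when the relocation yields an equal-weight cut, the original optimal partition $\mathcal{Q}$ with $u(m^-,\mathcal{Q})<x$ is still optimal, and $LB_{\MU}(G_2(m^-),m^-)<x$ is not ruled out. (There is a further issue: the $x$ surviving neighbours may be split across several parts of $\mathcal{Q}$, so a single relocation of $m^-$ need not recover all $x$.) Your premise for the upper-bound direction, that ``in both graphs $m^-$'s surviving edges all stay inside $C_1$'', is also incorrect for $G_1(m^-)$, since $E_1^R(m^-)$ may retain outgoing edges leaving $C_1$.

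The paper avoids all of this by proving the containment $O_{\MU}(G_2(m^-))\subseteq O_{\MU}(G_1(m^-))$ together with $\mathcal{P}\in O_{\MU}(G_2(m^-))$; the containment immediately gives $LB_{\MU}(G_2(m^-),m^-)\ge LB_{\MU}(G_1(m^-),m^-)=x$. Ironically, the route is exactly the one you dismissed as the ``main obstacle'': a careful edge-by-edge comparison of $E_1^-(m^-)$ and $E_2^-(m^-)$. One partitions $E_1^-(m^-)\cup E_2^-(m^-)$ into $H_1=E_1^-\cap E_2^-$, $H_2=E_1^-\setminus H_1$, $H_3=E_2^-\setminus H_1$, and tracks, for an arbitrary partition $\mathcal{Q}$, how each piece affects the difference $|Y^{\mathcal{Q}}_2|-|Y^{\mathcal{Q}}_1|$ between the corresponding cuts in $G_2(m^-)$ and $G_1(m^-)$. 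Because $E_2^R$ respects $\mathcal{P}$, edges in $H_2$ lie inside $C_1$ and edges in $H_3$ cross $\mathcal{P}$, which forces this difference to be minimised at $\mathcal{Q}=\mathcal{P}$; from this, both $\mathcal{P}\in O_{\MU}(G_2(m^-))$ and the containment follow. The incoming edges to $m^-$ never enter the argument, since they are identical in $G_1(m^-)$ and $G_2(m^-)$ and cancel in every such difference.
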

\begin{proof}
Given an \LBM\ $E_1^R(m^-)$ in which $LB_{\MU}(G_1(m^-), m^-) = x$, let $\mathcal{P}$ be a partition such that 
$\mathcal{P}\in O_{\MU}(G_1(m^-))$ and $u(m^-,\mathcal{P}) = x$. Let $\mathcal{P} =\{ C_1, C_2 \ldots C_k\}$ and $m^-\in C_1$. By definition of $\Pbar$, $\mathcal{P}\in \Pbar$.
Let $E_2^R(m^-)$ be the report that respects $\mathcal{P}$.
We need to show that $LB_{\MU}(G_2(m^-), m^-) = x$. 
We do so by showing that 
$O_{\MU}(G_2(m^-)) \subseteq O_{\MU}(G_1(m^-))$ and $\mathcal{P} \in O_{\MU}(G_2(m^-))$.

Let $Y$ be the $k$-$cut$ that corresponds to $\mathcal{P}$ in $G$, and let $Y_i$ be the $k$-$cut$ that corresponds to $\mathcal{P}$ in $G_i(m^-)$, where $i\in \{ 1,2\}$. 
%
Given $\mathcal{P}' \not\in O_{\MU}(G_1(m^-))$, let $Y'$ be its corresponding $k$-$cut$ in $G$, and let $Y'_i$ be its corresponding $k$-$cut$ in $G_i(m^-)$, where $i\in \{ 1,2\}$.
Given $\mathcal{P}''\in O_{\MU}(G_1(m^-))$, let $Y''$ be its corresponding $k$-$cut$ in $G$, and let $Y''_i$ be its corresponding $k$-$cut$ in $G_i(m^-)$, where $i\in \{ 1,2\}$.
Let $q=|Y_2|-|Y_1|$, let $q'=|Y'_2|-|Y'_1|$ and let $q''=|Y''_2|-|Y''_1|$.

We begin by proving the following claim:
\begin{Claim}
    $q\leq q'$  and  $q\leq q''$.
    \label{claim:q_q'}
\end{Claim}


\begin{proof}

We divide the set $ E_1^-(m^-)\cup E_2^-(m^-)$  into the following three disjoint sets:  
\begin{enumerate}
\item The set $H_1= E_1^-(m^-)\cap E_2^-(m^-)$. 
\item The set $H_2= E_1^-(m^-)\setminus H_1$.
\item The set $H_3= E_2^-(m^-)\setminus H_1$.
\end{enumerate}

The set $H_1$ is contained in both $E_1^-(m^-)$ and $E_2^-(m^-)$, and thus the edges from $H_1$ are in neither $G_1(m^-)$ nor $G_2(m^-)$. That is, the edges from $H_1$ are not in $Y_i$, $Y'_i$ and $Y''_i$, for $i \in \{1,2\}$.
Since  $q=|Y_2|-|Y_1|$, $q'=|Y'_2|-|Y'_1|$ and $q''=|Y''_2|-|Y''_1|$, then the values of $q$, $q'$ and $q''$ do not depend on the edges from $H_1$. 

Now consider the set $H_2$. The edges from $H_2$ are not in $G_1(m^-)$ but they are in $G_2(m^-)$.
Therefore, the edges of $H_2$ are not in $Y_1$, $Y'_1$, and $Y''_1$.
The manipulation  $E^R_2(m)$ respect $\mathcal{P}$, and thus $H_2\subseteq \{ (m^-,x)\mid x\in C_1\}$.
Therefore, the edges of $H_2$ are not included in any $k$-$cut$ that corresponds to $\mathcal{P}$, i.e., the edges of $H_2$ are also not in $Y_2$. That is, the value of $q$ does not depend on the edges from $H_2$.
However, the edges of $H_2$ may be in $Y'_2$ or in $Y''_2$, and each edge from $H_2$ that is in $Y'_2$ or in $Y''_2$ increases the value of $q'$ or $q''$, respectively. 

Finally, consider the set $H_3$. 
The edges from $H_3$ are not in $G_2(m^-)$ but they are in $G_1(m^-)$.
Therefore, the edges of $H_3$ are not in $Y_2$, $Y'_2$, and $Y''_2$.
The manipulation $E^R_2(m)$ respects $\mathcal{P}$, and thus $H_3\subseteq \{ (m^-,x)\mid x\in \{C_i| i \in [2,k]\}\}$.  Therefore, the edges of $H_3$ are in any $k$-$cut$ that corresponds to $\mathcal{P}$, i.e., the edges of $H_3$ are in $Y_1$. That is, each edge from $H_3$ decreases the value of $q$.
In addition, the edges of $H_3$ may be in $Y'_1$ or in $Y''_1$, and each edge from $H_3$ that is in $Y'_1$ or in $Y''_1$ decreases the value of $q'$ or $q''$, respectively. 

Overall, $q\leq q'$ and $q\leq q''$.
\end{proof}

We now show that $\mathcal{P} \in O_{\MU}(G_2(m^-))$, and we do so by showing that $Y_2$ is a $min$-$k$-$cut$ in $G_2(m^-)$.
$\mathcal{P}' \not\in O_{\MU}(G_1(m^-))$, and thus  $Y'_1$ is not a $min$-$k$-$cut$ in $G_1(m^-)$. On the other hand, $\mathcal{P} \in O_{\MU}(G_1(m^-))$, and thus $Y_1$ is a $min$-$k$-$cut$ in $G_1(m^-)$. That is,  $|Y_1| < |Y'_1|$. From Claim \ref{claim:q_q'} we have that $q\leq q'$, and thus $|Y_2| < |Y'_2|$. 
Now, $\mathcal{P}'' \in O_{\MU}(G_1(m^-))$ and $\mathcal{P} \in O_{\MU}(G_1(m^-))$. That is, both $Y_1$ and $Y''_1$ are $min$-$k$-$cut$s in $G_1(m^-)$. Therefore, $|Y_1| = |Y''_1|$. From Claim \ref{claim:q_q'} we have that $q\leq q''$ and thus $|Y_2| \leq |Y''_2|$. Overall, $|Y_2| < |Y'_2|$, and  $|Y_2| \leq |Y''_2|$. That is, the size of $Y_2$ is at most the size of any $k$-$cut$. Therefore, $|Y_2|$ is a $min$-$k$-$cut$ in $G_2(m^-)$.

It remains to show that $O_{\MU}(G_2(m^-)) \subseteq O_{\MU}(G_1(m^-))$. Indeed, we showed that  $|Y_2| < |Y'_2|$. Thus, $Y'_2$ is not a $min$-$k$-$cut$ in $G_2(m^-)$. Therefore, $\mathcal{P}' \not\in O_{\MU}(G_2(m^-))$.
\end{proof}

Overall, we get:
\begin{theorem}
    Algorithm \ref{alg:algorithm_max_util} solves the manipulation problem with the \MU\ objective in time $\mathcal{O}(n^{2(maxSize + k)})$.
    \label{thm:alg_mu_regular}
\end{theorem}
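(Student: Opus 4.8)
The plan is to establish correctness and then bound the running time. Correctness follows by combining the two lemmas already proved: Lemma~\ref{lemma:alg_correct_pbar} guarantees that Algorithm~\ref{alg:compute_p_bar_regular_manipulation} correctly computes $\bar{\mathcal{P}}$, and Lemma~\ref{lemma:remove_regular manipulation_lb} (together with its analogues for $m^+$ and for \UBM/\WIM/\SIM, which are deferred to the appendix) shows that whenever \emph{some} manipulation $E^R_*(m)$ achieves a given improvement, there is a partition $\mathcal{P}\in\bar{\mathcal{P}}$ whose respecting report $E^R(m)$ achieves at least that improvement. Since Algorithm~\ref{alg:algorithm_max_util} iterates over every $\mathcal{P}\in\bar{\mathcal{P}}$, computes the report that respects $\mathcal{P}$, and keeps the best one (measured by $I_{type}$, which by Theorem~\ref{thm:mu_specific_man_improvment} is computable in polynomial time), the manipulation it returns is optimal; and if no $\mathcal{P}\in\bar{\mathcal{P}}$ yields a beneficial report then, by the same lemma, no manipulation is beneficial at all, so returning ``No manipulation'' is correct.

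For the running time, the first step is to bound $|\bar{\mathcal{P}}|$ and the cost of Algorithm~\ref{alg:compute_p_bar_regular_manipulation}. That algorithm enumerates all $k$-cuts of $G$ of size at most $maxSize$, where $maxSize$ is $|min\text{-}k\text{-}cut(G_{all}(m))|$ for $m^+$ and $2\cdot|min\text{-}k\text{-}cut(G)|$ for $m^-$; in either case $maxSize \le MMC$ by definition of $MMC$, and for the statement as phrased we simply call this quantity $maxSize$. The number of $k$-cuts of total weight at most $maxSize$ — equivalently, with at most $maxSize$ edges in the weighted graph $G^U$ — is $\mathcal{O}(n^{2\cdot maxSize})$ by the counting bounds for small cuts (extending Karger's $\mathcal{O}(n^{2(k-1)})$ bound for $min$-$k$-$cut$s, e.g. via the contraction/tree-packing machinery of~\cite{karger1996new,gupta2019number}), and they can be enumerated in time $\mathcal{O}(n^{2\cdot maxSize}\cdot \mathrm{poly}(n))$ for fixed $k$. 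For each enumerated $k$-cut we form its partition, test $u(m,\mathcal{P}) > LB_{\MU}(G,m)$ in polynomial time (using $O_{\MU}(G)$, computable in polynomial time as in Theorem~\ref{thm:mu_specific_man_improvment}), and possibly add $\mathcal{P}$ to $\bar{\mathcal{P}}$; hence $|\bar{\mathcal{P}}| = \mathcal{O}(n^{2\cdot maxSize})$ and the whole of Algorithm~\ref{alg:compute_p_bar_regular_manipulation} runs within $\mathcal{O}(n^{2\cdot maxSize}\cdot \mathrm{poly}(n))$.

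Finally I would account for the main loop of Algorithm~\ref{alg:algorithm_max_util}. It performs $|\bar{\mathcal{P}}| = \mathcal{O}(n^{2\cdot maxSize})$ iterations; in each one, computing the report $E^R(m)$ that respects $\mathcal{P}$ is immediate from the definition (keep/remove exactly the edges from $m$ to its own coalition, or add exactly those edges, in $\mathcal{O}(n)$ time), and comparing it to the current best via $I_{type}$ costs polynomial time by Theorem~\ref{thm:mu_specific_man_improvment}. The reason the bound in the theorem carries an extra $n^{2k}$ factor — giving $\mathcal{O}(n^{2(maxSize+k)})$ rather than $\mathcal{O}(n^{2\cdot maxSize}\cdot\mathrm{poly}(n))$ — is that each invocation of the improvement subroutine must itself enumerate the $\mathcal{O}(n^{2(k-1)})$ $min$-$k$-$cut$s of $G(m)$ (and of $G$), so the per-iteration cost is $\mathcal{O}(n^{2(k-1)}\cdot\mathrm{poly}(n)) = \mathcal{O}(n^{2k})$ up to lower-order factors; multiplying the two gives the claimed $\mathcal{O}(n^{2(maxSize+k)})$. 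I expect the only real subtlety to be this bookkeeping of the exponents — making sure the cut-enumeration bound is stated for weighted $G^U$ with a weight budget rather than a cardinality budget, and that the $maxSize$ used in Algorithm~\ref{alg:compute_p_bar_regular_manipulation} is exactly the parameter appearing in the stated complexity — rather than anything conceptually hard, since all the structural work has already been done in the two lemmas.
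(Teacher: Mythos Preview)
Your proposal is correct and follows the same two-step structure as the paper (correctness from Lemmas~\ref{lemma:alg_correct_pbar} and~\ref{lemma:remove_regular manipulation_lb}, then a running-time tally), but two points are worth flagging. First, you invoke Karger-style small-cut counting to bound the number of $k$-cuts of size at most $maxSize$; the paper does something much more elementary --- it simply enumerates \emph{all} edge subsets of size at most $maxSize$, of which there are at most $(n^2)^{maxSize}=n^{2\cdot maxSize}$, and checks for each whether it is a $k$-cut. This avoids any appeal to structural cut results and makes the $n^{2\cdot maxSize}$ bound immediate. Second, your parenthetical claim that ``in either case $maxSize\le MMC$'' is false for $m^-$: Algorithm~\ref{alg:compute_p_bar_regular_manipulation} sets $maxSize=2\cdot|min\text{-}k\text{-}cut(G)|=2\cdot MMC$ in that case (as the paper notes just after the theorem). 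This does not affect the proof of the theorem as stated, since the bound is phrased in terms of $maxSize$, but you should drop or correct that remark.
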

\begin{proof}
    The correctness of the algorithm is a direct consequence of the Lemmas \ref{lemma:alg_correct_pbar} and \ref{lemma:remove_regular manipulation_lb}. As for the running time, Algorithm~\ref{alg:algorithm_max_util} calls  Algorithm~\ref{alg:compute_p_bar_regular_manipulation} for computing $\Pbar$. Algorithm~\ref{alg:compute_p_bar_regular_manipulation} computes all the $k$-$cut$s of size at most $maxSize$, and checks if the corresponding partition is beneficial for $m$. There are at most ${(n^2)}^{maxSize}$ subsets of edges of size at most $maxSize$. In addition, computing the corresponding partition for each subset of edges and checking if the partition is beneficial for $m$ takes at most $O(n^2)$. Therefore, the running time of Algorithm~\ref{alg:compute_p_bar_regular_manipulation} is $\mathcal{O}(n^{2maxSize} \cdot n^2)$. After computing $\Pbar$, Algorithm~\ref{alg:algorithm_max_util} iterates over all $\mathcal{P} \in \Pbar$. In each iteration, the algorithm first computes $E^R(m)$ that respects $\mathcal{P}$, which takes at most $\mathcal{O}(n^2)$. It then checks if $E^R(m)$ is better than $manip$, and thus it needs to iterate over all the $min$-$k$-$cut$s of $G(m)$. Computing all the $min$-$k$-$cut$ takes at most $\mathcal{O}(n^{2k})$ \cite{gupta2019number}. 
    Overall, the running time is $\mathcal{O}(n^{2maxSize} \cdot n^2 + n^{2maxSize} \cdot(n^2 + n^{2k})) = O(n^{2(maxSize + k)})$.
\end{proof}

Recall that $MMC$ is the maximum size of the $min$-$k$-$cut$ that can result from any possible manipulation. 
Now, removing edges can only decrease the size of the $min$-$k$-$cut$ of a graph. Therefore, if $m=m^-$, $MMC = |min$-$k$-$cut(G)| = \frac{1}{2} maxSize$. Adding edges can only increase the size of the $min$-$k$-$cut$ of a graph. Therefore, if $m=m^+$, $MMC = |min$-$k$-$cut(G_{all}(m))| = maxSize$ (since $E^R_{all}(m)$ consists of all of the edges from $m$). 
Therefore, Theorem \ref{thm:alg_mu_regular} essentially shows that Algorithm \ref{alg:algorithm_max_util} is an XP algorithm, parameterized by $MMC$.


Note that for $m^-$, it is possible to slightly improve Algorithm~\ref{alg:algorithm_max_util}, to get a running time of $\mathcal{O}(n^{\frac{1}{2}maxSize + 2k)})$. The details are in the appendix.

\section{Socially-aware Manipulation}

\label{sam_section}
We now introduce a new type of manipulation, which we call Socially-aware Manipulation (\SAM). In \SAM, the manipulator $m$ would like to increase his utility without decreasing the social welfare. That is, $O_{obj}(G(m)) \subseteq O_{obj}(G)$, and thus SAM is necessarily an \LBM.
\begin{definition}[Socially-Aware Manipulation (\SAM)]
A manipulation $E^R(m)$ is SAM if $O_{obj}(G(m)) \subseteq O_{obj}(G)$ and $LB_{obj}(G(m),m) > LB_{obj}(G,m)$.
\end{definition}
Note that term SAM  may remind the concept of Pareto optimality.
However, in SAM, the manipulator aims to increase his utility without decreasing the social welfare. That is, the manipulator would not want to harm the organizer's objective, but his manipulation may decrease the utility of one of the agents, unlike the notion of Parteo optimality. 
\subsection{The Existence of \SAM}
Before analyzing the complexity of finding an \SAM, we establish the existence of \SAM.

\begin{theorem}
The susceptibility of \SAM\ is equivalent to the susceptibility of \LBM. Specifically, \MU\ is susceptible to \SAM\ by adding or removing edges, but \ME\ and \ALO\ are susceptible to \SAM\ only by removing edges.
\label{thm:exist-no-harm-man}
\end{theorem}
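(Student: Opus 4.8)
The claim has two directions for each objective. Since \SAM\ is by definition a special kind of \LBM, any objective that admits \SAM\ is automatically susceptible to \LBM; so the real content is the converse, namely that whenever an objective is susceptible to \LBM\ (in a given mode, adding or removing), it is also susceptible to \SAM\ in that same mode. Combined with the susceptibility results of \cite{waxman2021manipulation} summarized in Table~\ref{tab:susceptible} — \MU\ susceptible to \LBM\ by adding and by removing, \ALO\ and \ME\ susceptible to \LBM\ only by removing — this yields exactly the stated classification. So the plan is: (1) observe the trivial direction (\SAM\ $\Rightarrow$ \LBM\ $\Rightarrow$ susceptibility to \LBM); (2) for each of the three positive cases (\MU\ add, \MU\ remove, \ALO\ remove, \ME\ remove), exhibit a concrete small graph together with a manipulator and a manipulation that is not merely an \LBM\ but actually satisfies $O_{obj}(G(m)) \subseteq O_{obj}(G)$.

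For the constructions I would start from the witnessing graphs already used (implicitly or explicitly) to prove susceptibility to \LBM\ in \cite{waxman2021manipulation}, and check whether the manipulation there happens to shrink — rather than merely shift — the set of optimal partitions. The cleanest approach is to engineer examples where after the manipulation the organizer's optimal set becomes a singleton (or more generally collapses to a subset of the original optimal set): if $O_{obj}(G(m)) \subseteq O_{obj}(G)$ and simultaneously every partition surviving in $O_{obj}(G(m))$ gives $m$ strictly more friends than his worst original optimal partition did, we are done. For \MU, one can use the $min$-$k$-$cut$ characterization (the remark after the definition of $min$-$k$-$cut$, based on \cite{branzei2009coalitional_and_stability}): an adding manipulation that creates a new heavy edge incident to $m$ can force every $min$-$k$-$cut$ of $G(m)$ to separate a particular low-value configuration, and because real utilities are computed on $G$, all these surviving cuts must already have been $min$-$k$-$cut$s of $G$ (adding an edge incident to $m$ that lies inside $m$'s coalition does not change the cut weight of partitions that already kept it inside), giving $O_{\MU}(G(m)) \subseteq O_{\MU}(G)$. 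A symmetric argument handles the removing case for \MU. For \ME\ and \ALO\ removing, I would reuse the ring-graph / small-gadget style already present in the paper: a manipulator removes an edge to a vertex outside every good coalition, which destroys exactly the egalitarian-optimal (resp.\ feasible-\ALO) partitions in which $m$ had his minimum utility, leaving only partitions that were already optimal (resp.\ feasible) and in which $m$ does better.

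The main obstacle is the containment condition $O_{obj}(G(m)) \subseteq O_{obj}(G)$, which is strictly stronger than what an \LBM\ needs: an \LBM\ only has to raise $m$'s worst-case utility, and in principle it could do so by introducing brand-new optimal partitions that are good for $m$ while also keeping old ones, or even by replacing the optimal set entirely with partitions not previously optimal. So for each case I must verify not just that $m$ gains but that no "foreign" partition sneaks into $O_{obj}(G(m))$. For \MU\ this is controlled cleanly by the monotonicity of $min$-$k$-$cut$ size under edge addition/removal together with the weight bookkeeping (much as in the proof of Lemma~\ref{lemma:remove_regular manipulation_lb} and Lemma~\ref{lemma:alg_correct_pbar}); for \ME\ and \ALO\ it requires a careful, finite case check on the chosen gadget that no new partition attains the post-manipulation egalitarian value (resp.\ becomes \ALO-feasible). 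That case analysis — showing the optimal set genuinely contracts — is where essentially all the work lies, and it is exactly the kind of verification best relegated to the appendix; here I would present one representative construction per case and state that the remaining checks are routine.
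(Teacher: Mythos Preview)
Your plan is correct and, for \MU\ (both adding and removing) and \ME\ (removing), it matches the paper almost exactly: the paper exhibits tiny explicit graphs (a handful of vertices each, not ring-graph-scale constructions) and verifies directly that the proposed report shrinks $O_{obj}$ to a subset of the original optimal set while strictly raising $m$'s lower bound. The negative cases are handled just as you say, via \SAM\ $\Rightarrow$ \LBM\ and the non-susceptibility results of \cite{waxman2021manipulation}.

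The one substantive difference is the \ALO\ case. You propose a gadget construction, but the paper instead proves the general structural fact (Theorem~\ref{thm:ALO_m-_is_sam}) that for \ALO, \emph{every} \LBM\ by $m^-$ is automatically an \SAM: since $m^-$ only removes outgoing edges, any vertex with zero friends in a partition of $G$ still has zero friends in the same partition of $G(m^-)$, so no new partition can become \ALO-feasible, i.e.\ $O_{\ALO}(G(m^-))\subseteq O_{\ALO}(G)$ always. This one-line argument immediately transfers the known \LBM\ susceptibility of \ALO\ to \SAM\ without building anything, and as a bonus it is reused later to obtain the hardness of \SAM\ for \ALO\ directly from the hardness of \LBM. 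Your gadget approach would also prove the theorem as stated, but the paper's route is shorter here and yields a stronger, reusable statement.
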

\begin{proof}

Consider the graph as depicted in Figure\ref{fig:Util_add}. Assume that $k=2$, the organizer's objective is \MU, and $m=m^+$. 
Note that $|min$-$cut(G)|=2$. In addition, $\mathcal{P}_1 = \{\{m\}, \{a,b,c,d\}\} \in O_{\MU}(G)$, and $u(m,\mathcal{P}_1)=0$. Thus, $LB_{\MU}(G,m)=0$. By adding the dashed edge (from $m$ to $a$), the size of $min$-$cut(G(m))$ is still $2$. That is, for each $\mathcal{P} \in O_{\MU}(G(m))$ the size of the corresponding $cut$ in $G(m)$ is $2$.
Since $m$ does not remove any edge, the size of $\mathcal{P}$'s corresponding $cut$ in $G$ is $2$, and thus $\mathcal{P} \in O_{\MU}(G)$. That is, $O_{\MU}(G(m)) \subseteq O_{\MU}(G)$. In addition, every $cut$ in $G(m)$ that contains both $(m,c)$ and $(m,d)$ is of size at least $3$. Therefore, every $min$-$cut$ in $G(m)$ contains at most one edge from $m$. That is, $LB_{\MU}(G(m),m) >0$, and the manipulation is \SAM.

Figure \ref{fig:Util_remove} shows a graph in which \SAM\ is possible with \MU\ and $m=m^-$.
Figure \ref{fig:Egal-remove} shows a graph in which \SAM\ is possible with \ME\ and $m=m^-$. The full proof is in the appendix.

 

For \ALO, we will show below that any \LBM\ is also \SAM\ (Theorem \ref{thm:ALO_m-_is_sam}). Therefore, since \ALO\ is susceptible to \LBM\ by removing edges, it is also susceptible to \SAM\ by removing edges.

Finally, it was already shown by \cite{waxman2021manipulation} that \ME\ and \ALO\ are not susceptible to \LBM\ by adding edges. Therefore, they are not susceptible to \SAM\ by adding edges (since each \SAM\ is an \LBM).
\end{proof}
Note that the equivalence between the susceptibility of \SAM\ and  \LBM\ (as shown in Theorem \ref{thm:exist-no-harm-man}) means that any \emph{setting} (i.e., objective and type of manipulator) that is susceptible to \LBM\ is also susceptible to \SAM.

\begin{figure}[hbpt]
        \begin{subfigure}{0.18\linewidth}
            \centering
            \includegraphics[page=1,width=\linewidth]{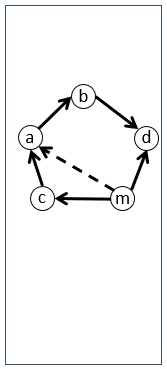}
            \caption{}
            \label{fig:Util_add}
        \end{subfigure}
        \hfill
        \begin{subfigure}{0.18\linewidth}
            \centering
            \includegraphics[page=1,width=\linewidth]{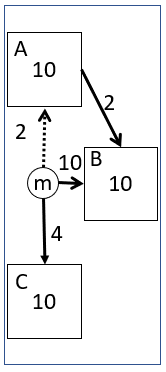}
            \caption{}
            \label{fig:Util_remove}
        \end{subfigure}
        \hfill
        \begin{subfigure}{0.18\linewidth}
            \centering
            \includegraphics[page=1,width=\linewidth]{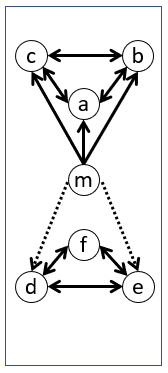}
            \caption{}
            \label{fig:Egal-remove}
        \end{subfigure}
        \vspace{10pt}
        \caption{Graphs for Theorem~\ref{thm:exist-no-harm-man}. A dashed line is an edge that is added by $m$. A dotted line is an edge that is removed by $m$. A square represents a clique of size $10$, and an edge with a number represents a set of edges.}
        
        \label{fig:exist-no-harm-graphs}
        
\end{figure}

\subsection{The Complexity of the Manipulation and Improvement Problems for SAM}
We begin with the \ME\ objective. 
Similar to \LBM, deciding whether an \SAM\ exists (and thus also finding an optimal \SAM) is computationally hard.
\begin{theorem}
Given a graph $G$, and a manipulator $m^- \in A$, deciding whether an \SAM\ exists when the objective is \ME\ is co-NP-hard. 
\label{thorem:sam_max_egal_np_directed}
\end{theorem}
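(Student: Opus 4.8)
The plan is to reduce from the complement of 3-SAT, reusing essentially the same ring-graph construction from the proof of Theorem~\ref{thm:sim_me_directed_hard}. Recall that in that construction, given a CNF formula $\mathcal{F}$, we build a graph $G$ with a designated manipulator $m^-$ such that: if $\mathcal{F}$ is satisfiable, there is a partition in which every vertex has $8$ in-coalition neighbors (so $m^-$ already achieves its maximum and no \LBM\ is possible); and if $\mathcal{F}$ is unsatisfiable, every \ME-optimal partition gives each vertex at least $7$ neighbors, $m^-$ gets exactly $7$, but there is a removal $E^-(m^-)$ with $LB_{\ME}(G(m^-),m^-)=8$, which is in fact a \SIM. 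So I would first claim that this same reduction works verbatim, and the only thing that needs checking is that the "yes" direction (unsatisfiable $\mathcal{F}$) produces a manipulation that is not merely a \SIM/\LBM\ but actually a \SAM\ — i.e., that it also satisfies $O_{\ME}(G(m^-)) \subseteq O_{\ME}(G)$.

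The key steps, in order, are: (1) invoke the construction of $G$ from Theorem~\ref{thm:sim_me_directed_hard} along with its established correctness (the $8$-vs-$7$ dichotomy); (2) for the case that $\mathcal{F}$ is satisfiable, note immediately that $LB_{\ME}(G,m^-)=UB_{\ME}(G,m^-)$ is already as large as possible (the manipulator cannot gain), hence no \LBM\ and therefore no \SAM\ exists, since every \SAM\ is an \LBM; (3) for the case that $\mathcal{F}$ is unsatisfiable, take the specific removal $E^-(m^-)$ exhibited in that proof and verify the social-welfare condition: show that every \ME-optimal partition of $G(m^-)$ is also \ME-optimal in $G$. The natural argument is that removing outgoing edges of $m^-$ can only decrease utilities, so the \ME\ value of $G(m^-)$ is at most that of $G$; and conversely, because the partitions that are \ME-optimal in $G(m^-)$ are exactly the ones that "route" the cut away from $m^-$'s removed edges, in $G$ those same partitions keep every vertex at $\geq 7$ and keep $m^-$ at $8$, hence they attain the \ME-optimum of $G$ as well. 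Combined with the fact that the removal strictly raises $LB$ from $7$ to $8$, this makes it a \SAM; (4) conclude that $\mathcal{F}\in\overline{\text{3-SAT}}$ iff a \SAM\ exists, so the problem is co-NP-hard, and an optimal \SAM\ cannot be found in polynomial time unless $P=NP$.

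The main obstacle I anticipate is step (3): confirming the containment $O_{\ME}(G(m^-)) \subseteq O_{\ME}(G)$ rather than just the \LBM\ property. In the original \SIM\ argument one only needs to pin down $LB_{\ME}(G(m^-),m^-)$, which requires understanding which partitions survive as \ME-optimal after the removal, but not a fine-grained comparison of the optimal-partition sets of the two graphs. Here I need the stronger statement that the removal does not create any "new" \ME-optimal partition of $G(m^-)$ that fails to be \ME-optimal in $G$ — equivalently, that $G(m^-)$ and $G$ have the same \ME-value and the optimal partitions of the former are a subset of those of the latter. This should follow from a careful inspection of the ring gadget: after deleting $m^-$'s chosen outgoing edges, a partition achieving min-degree $8$ in $G(m^-)$ must place all of $m^-$'s \emph{remaining} neighbors with $m^-$, which forces $m^-$ to have $8$ true neighbors in $G$ too, and forces the rest of the partition into the "satisfying-assignment-like" shape that keeps everyone at the optimum — but since $\mathcal{F}$ is unsatisfiable the global min in $G$ is $7$, and one must check such partitions still attain that $7$. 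I would handle this by reusing the structural lemmas about the ring graph from the appendix proof of Theorem~\ref{thm:sim_me_directed_hard}, so the bulk of the work is a localized case analysis rather than a fresh construction.

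Since this proof is almost identical in spirit to that of Theorem~\ref{thm:sim_me_directed_hard}, and differs only in the extra verification that the exhibited manipulation does not enlarge the set of \ME-optimal partitions, I expect the full argument to be deferred to the appendix, with the main text noting only that the reduction is the same and the social-awareness condition holds by construction.
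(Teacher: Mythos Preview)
Your proposal has a genuine gap in step (3), and it is not one that can be patched by a more careful case analysis of the same gadget: the manipulation exhibited in Theorem~\ref{thm:sim_me_directed_hard} is \emph{not} a \SAM.

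Recall that in that construction, when $\mathcal{F}$ is unsatisfiable the \ME\ value of $G$ is $7$, and the manipulation removes two of $m^-$'s outgoing edges to $Q_m$. After the removal $m^-$ has only $6$ outgoing edges in $G(m^-)$, so $\min_a u_{G(m^-)}(a,\mathcal{P})\le 6$ for every partition $\mathcal{P}$; hence the \ME\ value of $G(m^-)$ is at most $6$ and in fact equals $6$. Now take any $\mathcal{P}\in O_{\ME}(G(m^-))$. For $m^-$ to reach $6$ in $G(m^-)$ it must sit with $Q_r$, which forces the clause vertices into $m^-$'s coalition and forces the ring to split into its two cycles. Because $\mathcal{F}$ is unsatisfiable, some clause vertex then has exactly $6$ neighbours in its coalition in $G$ as well, so $\min_a u_G(a,\mathcal{P})=6<7$ and $\mathcal{P}\notin O_{\ME}(G)$. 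Thus $O_{\ME}(G(m^-))\cap O_{\ME}(G)=\emptyset$, the opposite of what \SAM\ requires. (Removing only one edge does not help: one checks similarly that the \ME\ value of $G(m^-)$ still drops below $7$.) The sentence ``in $G$ those same partitions keep every vertex at $\ge 7$'' in your sketch is precisely where the argument breaks.

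The paper therefore does \emph{not} reuse the Theorem~\ref{thm:sim_me_directed_hard} gadget. It uses a different ring-based construction (the degree-$4$ gadget of Bang-Jensen et al., with auxiliary structures $A_4$, $W_4$, $Z_4$) in which $m^-$ has degree $4$, the \ME\ value of $G$ is $3$ when $\mathcal{F}$ is unsatisfiable, and the manipulation removes a \emph{single} edge. After the removal $m^-$ has degree $3$, so the \ME\ value of $G(m^-)$ is still $3$. Because $G(m^-)\subseteq G$ edgewise and the two \ME\ values coincide, every $\mathcal{P}\in O_{\ME}(G(m^-))$ automatically satisfies $\min_a u_G(a,\mathcal{P})\ge 3$ and hence lies in $O_{\ME}(G)$; the \SAM\ condition then comes for free. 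The essential design point you are missing is that the manipulation must not lower the organiser's optimum, which dictates a gadget where $m^-$'s post-removal degree exactly matches the ambient \ME\ value.
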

The improvement problem is also hard.
\begin{theorem}
Given a graph $G$, a manipulator $m^- \in A$, and a report $E^R(m)$, deciding whether $I_{\SAM}(E^R(m))>0$ when the objective is \ME\ is co-NP-hard. 
\label{thorem:sam_max_egal_np_directed_improvment}
\end{theorem}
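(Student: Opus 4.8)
The plan is to adapt the reduction used for Theorem~\ref{thorem:sam_max_egal_np_directed} (and, behind it, the ring-graph construction of Theorem~\ref{thm:sim_me_directed_hard}) so that a specific report $E^R(m^-)$ witnesses the hardness. The reduction is again from the complement of 3-SAT: given a CNF formula $\mathcal{F}$, build the same graph $G$ with the distinguished manipulator vertex $m^-$ and the same clause/ring gadget, and additionally hand the verifier the report $E^R(m^-)$ that removes exactly the set $E^-(m^-)$ that was shown in the proof of Theorem~\ref{thm:sim_me_directed_hard} to be a \SIM\ when $\mathcal{F}$ is unsatisfiable. I would then argue: $I_{\SAM}(E^R(m^-))>0$ if and only if $\mathcal{F}$ is \emph{not} satisfiable. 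Since deciding non-satisfiability is co-NP-hard, this gives the claim.

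The key steps, in order, are as follows. First, recall from the earlier construction that when $\mathcal{F}$ is satisfiable every vertex (including $m^-$) already has $8$ friends in an optimal \ME\ partition, so $LB_{\ME}(G,m^-)=8$; and since removing edges cannot increase anyone's utility, after the manipulation $LB_{\ME}(G(m^-),m^-)\le 8 = LB_{\ME}(G,m^-)$, hence the manipulation fails the $LB$-increase requirement and $I_{\SAM}(E^R(m^-))=0$. Second, when $\mathcal{F}$ is unsatisfiable, recall that in $G$ the optimal \ME\ value is $7$ and $m^-$ gets exactly $7$, so $LB_{\ME}(G,m^-)=7$, whereas the chosen $E^-(m^-)$ was built precisely so that $LB_{\ME}(G(m^-),m^-)=8>7$; this already gives the $LB$-increase half of the \SAM\ definition. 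Third — and this is the extra ingredient that distinguishes \SAM\ from plain \LBM\ — I must check $O_{\ME}(G(m^-))\subseteq O_{\ME}(G)$: every partition that is \ME-optimal in the post-manipulation graph must still be \ME-optimal in the true graph. For this I would show that removing the edges in $E^-(m^-)$ only eliminates partitions (those in which $m^-$ had $\le 7$ friends and which are no longer optimal after the removal), without creating any new optimal partition — i.e. the optimal value stays $8$ in $G(m^-)$ and any partition achieving it, when its edges are re-interpreted in $G$, still achieves the (same) optimum of $G$. If that inclusion holds, $E^R(m^-)$ is a genuine \SAM\ with positive improvement.

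The main obstacle I anticipate is precisely that third step: verifying $O_{\ME}(G(m^-))\subseteq O_{\ME}(G)$. For a bare \LBM\ (Theorem~\ref{thm:improvment_me_directed_hard}) one only needs the lower bound to go up, and the structure of the ring gadget makes that relatively clean; but for \SAM\ one must also rule out that the removal introduces a qualitatively new optimal partition — e.g. one where some clause vertex or literal vertex ends up with strictly more friends than it could have had in $G$. Concretely I would need to re-examine the gadget and argue that the only coalitions that attain the egalitarian optimum of $G(m^-)$ correspond to consistent truth assignments of the ring (two disjoint cycles), and that every such coalition structure has exactly the same per-vertex utilities in $G$ as in $G(m^-)$ except at $m^-$, where the removed edges were to vertices outside $m^-$'s coalition and therefore never contributed to anyone's utility in those partitions. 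Once this structural lemma about the gadget is in place, the equivalence "$I_{\SAM}(E^R(m^-))>0 \iff \mathcal{F}$ unsatisfiable" follows immediately, and since the reduction is clearly polynomial, co-NP-hardness is established. I would note that, as in Theorem~\ref{thorem:sam_max_egal_np_directed}, the full details are deferred to the appendix, and that the construction is essentially identical to the one used there, the only change being that the report $E^R(m^-)$ is now part of the input rather than something to be found.
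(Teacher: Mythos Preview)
Your overall strategy---reuse a prior hardness construction and simply hand the specific report $E^R(m^-)$ to the verifier---is exactly what the paper does. However, there is a genuine gap in the execution: you conflate the two different ring-gadget constructions, and the one you actually work with (Theorem~\ref{thm:sim_me_directed_hard}, with the $7/8$ values and the removal of two edges into $Q_m$) does \emph{not} yield a \SAM. When $\mathcal{F}$ is unsatisfiable, the optimal \ME\ value in $G$ is $7$, achieved only by the partition $\{m^-\}\cup Q_m$ versus the rest; after removing two edges into $Q_m$, the manipulator has only $6$ outgoing edges, so the optimal \ME\ value in $G(m^-)$ drops to $6$. Any partition in $O_{\ME}(G(m^-))$ therefore has egalitarian value $6$ in $G$ as well (some clause vertex still achieves only $6$), strictly below the optimum of $7$, so $O_{\ME}(G(m^-))\not\subseteq O_{\ME}(G)$ and the inclusion required for \SAM\ fails. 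Your proposed structural lemma, that ``the removed edges were to vertices outside $m^-$'s coalition,'' is also false here: in every optimal partition of $G(m^-)$ the manipulator ends up with all of $Q_m$, so the removed edges point \emph{inside} his coalition.

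The paper instead uses the construction of Theorem~\ref{thorem:sam_max_egal_np_directed}, which is a genuinely different gadget (the graph of~\cite{bang2019degree} with the $A_4$ block attached, values $3/4$). The crucial feature of that construction is that when $\mathcal{F}$ is unsatisfiable the egalitarian optimum is $3$ both in $G$ and in $G(m^-)$: the manipulator removes only one outgoing edge and still has exactly $3$ outgoing edges, so the optimum cannot drop below $3$, and it is still achievable. Since only $m$'s outgoing edges change, every other vertex has the same utility in $G$ and in $G(m^-)$, and $m$'s utility in $G$ can only be higher; hence any partition optimal in $G(m^-)$ is automatically optimal in $G$, giving $O_{\ME}(G(m^-))\subseteq O_{\ME}(G)$ for free. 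That is the argument you are missing, and it is why the paper routes through Theorem~\ref{thorem:sam_max_egal_np_directed} rather than Theorem~\ref{thm:sim_me_directed_hard}.
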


With the \ALO\ objective, we show that every \LBM\ is also an \SAM, and thus the manipulation and improvement problems for \SAM\ are computationally hard. 
\begin{theorem}
    For \ALO, every \LBM\ by removing edges is also \SAM\ by removing edges.
    \label{thm:ALO_m-_is_sam}
\end{theorem}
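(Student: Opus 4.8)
The plan is to show that when the objective is \ALO, a manipulation $E^R(m^-)$ that removes edges and is an \LBM\ automatically satisfies $O_{\ALO}(G(m^-)) \subseteq O_{\ALO}(G)$, which is the only extra requirement distinguishing \SAM\ from \LBM. The key structural observation is that removing outgoing edges of $m^-$ makes the reported graph $G(m^-)$ a subgraph of $G$ on the same vertex set; hence for every agent $a$ and every partition $\mathcal{P}$ we have $u_{G(m^-)}(a,\mathcal{P}) \le u_G(a,\mathcal{P})$, and in fact the only agent whose in-coalition degree can differ is $m^-$ itself (since $m^-$ only removes its own outgoing edges — recall the graph is directed and the utility $u(a,\mathcal{P})$ counts outgoing edges of $a$ within the coalition).

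First I would unpack what it means for $E^R(m^-)$ to be an \LBM\ with \ALO. We have $LB_{\ALO}(G(m^-),m^-) > LB_{\ALO}(G,m^-)$; since utilities are nonnegative integers, this forces $LB_{\ALO}(G,m^-)=0$ and $LB_{\ALO}(G(m^-),m^-)\ge 1$. The latter, in particular, means $O_{\ALO}(G(m^-)) \neq \emptyset$: there is at least one feasible partition in the manipulated graph where every agent (including $m^-$) has an in-coalition friend. (If $O_{\ALO}(G(m^-))$ were empty, then by the paper's convention all utilities are defined to be $0$, so $LB$ would be $0$, contradicting the \LBM\ inequality.)

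Next I would take an arbitrary $\mathcal{P} \in O_{\ALO}(G(m^-))$ and show $\mathcal{P} \in O_{\ALO}(G)$, i.e.\ that every agent has at least one friend within his coalition already in the true graph $G$. Fix an agent $a$ with $a \in C_i$. Since $\mathcal{P}$ is feasible in $G(m^-)$, we have $u_{G(m^-)}(a,\mathcal{P}) \ge 1$. If $a \neq m^-$, then $a$'s outgoing edges are identical in $G$ and $G(m^-)$ (only $m^-$'s outgoing edges were modified), so $u_G(a,\mathcal{P}) = u_{G(m^-)}(a,\mathcal{P}) \ge 1$. If $a = m^-$, then $E^R(m^-) = E(m^-)\setminus E^-(m^-) \subseteq E(m^-)$, so the set of $m^-$'s neighbors in $G(m^-)$ is a subset of its neighbors in $G$; therefore $u_G(m^-,\mathcal{P}) \ge u_{G(m^-)}(m^-,\mathcal{P}) \ge 1$. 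In both cases $u_G(a,\mathcal{P}) \ge 1$, so $\mathcal{P}$ is feasible in $G$, i.e.\ $\mathcal{P} \in O_{\ALO}(G)$. This establishes $O_{\ALO}(G(m^-)) \subseteq O_{\ALO}(G)$, and combined with the \LBM\ inequality on lower bounds, $E^R(m^-)$ is by definition an \SAM.

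I do not expect a genuine obstacle here — the result is essentially definitional once the directed-subgraph monotonicity is noted. The one point requiring care is the degenerate case $O_{\ALO}(G(m^-)) = \emptyset$, where the paper's convention sets all utilities to $0$; I would note explicitly that this case cannot occur under the \LBM\ hypothesis, so the subset inclusion $\emptyset \subseteq O_{\ALO}(G)$ is never the operative situation (though it would hold vacuously anyway). A second subtlety worth a sentence is making sure the notion of "$\mathcal{P}$ is a partition of $G(m^-)$ is also a partition of $G$" is invoked — this is immediate since manipulation does not change the vertex set $A$, a fact the paper already uses in Lemma~\ref{lemma:alg_correct_pbar}.
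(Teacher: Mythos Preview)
Your proof is correct and uses the same idea as the paper: removing $m^-$'s outgoing edges cannot increase any agent's in-coalition out-degree as seen by the organizer, so every \ALO-feasible partition in $G(m^-)$ is \ALO-feasible in $G$. The paper phrases this as a brief proof by contradiction (pick $\mathcal{P}\in O_{\ALO}(G(m^-))\setminus O_{\ALO}(G)$, find $a$ with $u(a,\mathcal{P})=0$ in $G$, observe the same holds in $G(m^-)$); you give the equivalent direct argument with a case split on $a=m^-$ versus $a\neq m^-$.

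One inessential side remark in your write-up is inaccurate: the \LBM\ hypothesis does \emph{not} force $LB_{\ALO}(G,m^-)=0$. If $O_{\ALO}(G)\neq\emptyset$ then $LB_{\ALO}(G,m^-)\ge 1$, and removing edges can still raise the lower bound --- for instance, $m^-$ could remove the edges to the two friends it has in the unique worst \ALO\ partition, knocking that partition out of $O_{\ALO}(G(m^-))$ and raising the lower bound from $2$ to something larger. Fortunately you never use this claim; the only consequence you draw from it, $O_{\ALO}(G(m^-))\neq\emptyset$, follows directly from $LB_{\ALO}(G(m^-),m^-)>LB_{\ALO}(G,m^-)\ge 0$ and, as you yourself note at the end, is not even needed for the subset inclusion.
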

\begin{proof}
    Assume by contradiction that there exists a graph $G=(A,E)$, and a manipulator $m\in A$, such that exists an \LBM, $E^-(m)$ that is not \SAM. i.e., $LB(G(m), m) > LB(G,m)$ but $O_{\ALO}(G(m))\not \subseteq O_{\ALO}(G)$. 
    Let $\mathcal{P}$ be a partition such that $\mathcal{P} \in O_{\ALO}(G(m))$ but $\mathcal{P} \not\in O_{\ALO}(G)$. Therefore, there is a vertex $a\in A$ such that $u(a, \mathcal{P}) = 0$ in $G$. 
    However, $m$ can only remove outgoing edges, and thus $u(a, \mathcal{P}) = 0$ in $G(m)$ too. That is, $\mathcal{P} \not\in O_{\ALO}(G(m))$. A contradiction.
\end{proof}
Now, with the \MU\ objective, the improvement problem for \SAM\ can be solved in polynomial time. 

\begin{theorem}
    The improvement problem  for \SAM\ can be solved in polynomial time when the objective is \MU.
    \label{thm:sam_specific_man_improvment}
\end{theorem}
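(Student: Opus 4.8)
The plan is to reduce the problem to the cut-enumeration machinery already developed in the proof of Theorem~\ref{thm:mu_specific_man_improvment}. Recall that a manipulation $E^R(m)$ is a \SAM\ exactly when the two defining conditions hold: (i) $O_{\MU}(G(m)) \subseteq O_{\MU}(G)$, and (ii) $LB_{\MU}(G(m),m) > LB_{\MU}(G,m)$; and since a \SAM\ is in particular an \LBM, its \SAM-improvement is $LB_{\MU}(G(m),m) - LB_{\MU}(G,m)$ when (i) and (ii) both hold, and $0$ otherwise. So it suffices to show that both conditions, and the value in (ii), can be checked/computed in polynomial time.

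First, exactly as in Theorem~\ref{thm:mu_specific_man_improvment}, I would compute the sets $O_{\MU}(G)$ and $O_{\MU}(G(m))$ explicitly. Since $k$ is fixed, each of these is the set of partitions corresponding to the $min$-$k$-$cut$s of the respective (weighted, undirected) graph; there are only $\mathcal{O}(n^{2(k-1)})$ such cuts~\cite{karger1996new}, and they can all be enumerated in polynomial time~\cite{gupta2019number}. Having both sets in hand, condition (i) is a containment test between two explicitly listed families of partitions of the same vertex set $A$, each of polynomial size, where comparing two partitions for equality costs $\mathcal{O}(n)$; hence (i) is decidable in polynomial time. Condition (ii), and the improvement value, require only computing $LB_{\MU}(G,m) = \min_{\mathcal{P}\in O_{\MU}(G)} u(m,\mathcal{P})$ and $LB_{\MU}(G(m),m) = \min_{\mathcal{P}\in O_{\MU}(G(m))} u(m,\mathcal{P})$, each a minimum of $u(m,\cdot)$ over a polynomial-size list of partitions, computable in polynomial time. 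If both (i) and (ii) hold, output $LB_{\MU}(G(m),m) - LB_{\MU}(G,m)$; otherwise output $0$.

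There is no genuine obstacle here beyond bookkeeping: the only points that need care are that the \SAM-improvement is well defined as the \LBM-improvement (which follows because every \SAM\ is an \LBM), and that the additional \SAM\ requirement over \LBM---namely the set containment $O_{\MU}(G(m)) \subseteq O_{\MU}(G)$---does not cost more than polynomial time, which it does not, since both $O_{\MU}$ sets are already produced explicitly in polynomial time by the argument of Theorem~\ref{thm:mu_specific_man_improvment}. Everything else is inherited verbatim from that proof.
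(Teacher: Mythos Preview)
Your proposal is correct and follows essentially the same approach as the paper: compute $O_{\MU}(G)$ and $O_{\MU}(G(m))$ in polynomial time via the min-$k$-cut enumeration from Theorem~\ref{thm:mu_specific_man_improvment}, then check the containment $O_{\MU}(G(m)) \subseteq O_{\MU}(G)$ by pairwise comparison in $O(|O_{\MU}(G(m))|\cdot|O_{\MU}(G)|\cdot n)$ time. You are somewhat more explicit than the paper in spelling out the computation of the two lower bounds and the output value, but the argument is the same.
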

\begin{proof}
    As in the proof of Theorem \ref{thm:mu_specific_man_improvment}, $O_{\MU}(G)$ and $O_{\MU}(G(m))$ can be computed in polynomial time. We also need to check whether $O_{\MU}(G(m)) \subseteq O_{\MU}(G)$, which takes at most $O(|O_{\MU}(G(m))|\cdot|O_{\MU}(G)|\cdot n)$ steps.
\end{proof}

Interestingly, the manipulation problem for \SAM\ with the \MU\ objective can also be solved in polynomial time. Indeed, we only need to change line~\ref{line:compute_pbar} in Algorithm~\ref{alg:algorithm_max_util}, so that $\Pbar \gets O_{\MU}(G)$. 
To prove the correctness, we first show an essential property of every report that respects a \MU\ partition. 
\begin{lemma}
If $E^R(m)$ is a report that respects $\mathcal{P}\in O_{\MU}(G)$  then $O_{\MU}(G(m)) \subseteq O_{\MU}(G)$.
\label{lemma:not_harm}
\end{lemma}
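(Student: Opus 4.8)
The plan is to show that any $\MU$-partition of $G(m)$ is also a $\MU$-partition of $G$, by comparing the weights of $k$-cuts before and after the manipulation. The key observation is that, since $E^R(m)$ respects $\mathcal{P}=\{C_1,\dots,C_k\}$ with $m\in C_1$, the manipulation only touches edges incident to $m$, and it does so in a controlled way: for $m^+$ it only adds edges from $m$ into $C_1$, and for $m^-$ it only removes edges from $m$ going to $C_2,\dots,C_k$. I would set up the correspondence between partitions and $k$-cuts in the undirected weighted graph $G^U$ (as done earlier in the excerpt), and for an arbitrary partition $\mathcal{Q}$ let $Y(\mathcal{Q})$, $Y'(\mathcal{Q})$ denote its corresponding cuts in $G$, $G(m)$ respectively. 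The goal is then to compare $|Y'(\mathcal{Q})| - |Y(\mathcal{Q})|$ across partitions and, in particular, show it is minimized (among all $\mathcal{Q}$) at $\mathcal{Q}=\mathcal{P}$.

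First I would handle the case $m=m^+$. Here $G(m^+)$ has all edges of $G$ plus some edges from $m$ into $C_1$. For the partition $\mathcal{P}$ itself, every added edge stays inside the coalition $C_1$, so it is never cut: $|Y'(\mathcal{P})| = |Y(\mathcal{P})|$. For any other partition $\mathcal{Q}$, the added edges either stay inside a coalition or get cut, so $|Y'(\mathcal{Q})| \ge |Y(\mathcal{Q})|$. Now take $\mathcal{Q}\in O_{\MU}(G(m^+))$, i.e.\ $Y'(\mathcal{Q})$ is a $\min$-$k$-$cut$ of $G(m^+)$. Then $|Y'(\mathcal{Q})| \le |Y'(\mathcal{P})| = |Y(\mathcal{P})|$. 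Combined with $|Y(\mathcal{Q})| \le |Y'(\mathcal{Q})|$ we get $|Y(\mathcal{Q})| \le |Y(\mathcal{P})|$. But $\mathcal{P}\in O_{\MU}(G)$, so $Y(\mathcal{P})$ is a $\min$-$k$-$cut$ of $G$, hence $|Y(\mathcal{Q})| = |Y(\mathcal{P})|$, i.e.\ $\mathcal{Q}\in O_{\MU}(G)$. That gives $O_{\MU}(G(m^+)) \subseteq O_{\MU}(G)$.

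Next the case $m=m^-$, which is symmetric but slightly more delicate. Since $E^R(m^-)$ respects $\mathcal{P}$, every removed edge $(m^-,x)$ has $x\in C_i$ for some $i\ge 2$, so it is cut by $Y(\mathcal{P})$; removing it decreases the cut weight by exactly the amount it contributed. Thus for $\mathcal{P}$, $|Y(\mathcal{P})| - |Y'(\mathcal{P})| = W$, where $W$ is the total weight of the removed edges — the maximum possible drop, since a removed edge can lower a cut weight by at most its own weight. For an arbitrary partition $\mathcal{Q}$, $|Y(\mathcal{Q})| - |Y'(\mathcal{Q})| \le W$ (each removed edge drops the weight by at most its weight, and only if it was cut by $\mathcal{Q}$). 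Hence for $\mathcal{Q}\in O_{\MU}(G(m^-))$: $|Y(\mathcal{Q})| \le |Y'(\mathcal{Q})| + W \le |Y'(\mathcal{P})| + W = |Y(\mathcal{P})|$, where the middle inequality uses that $Y'(\mathcal{Q})$ is a $\min$-$k$-$cut$ of $G(m^-)$. Again, $\mathcal{P}\in O_{\MU}(G)$ forces $|Y(\mathcal{Q})| = |Y(\mathcal{P})|$, so $\mathcal{Q}\in O_{\MU}(G)$, and $O_{\MU}(G(m^-)) \subseteq O_{\MU}(G)$.

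The main obstacle is making the "respects $\mathcal{P}$" hypothesis do exactly the right work in each direction: one must be careful that, for the chosen partition $\mathcal{P}$, the manipulation's effect on the cut weight is the \emph{extreme} one (zero change for added edges that land inside $C_1$; full decrease for removed edges that leave $C_1$), while for every competing partition the effect is at most that extreme. This asymmetry between $\mathcal{P}$ and other partitions is precisely what pins $\mathcal{P}$ (and every other $G(m)$-optimal partition) to being $G$-optimal. I would also note that I am tacitly using that the directed-to-undirected conversion $G\mapsto G^U$ interacts correctly with edges incident to $m$ (adding/removing a directed edge $(m,x)$ changes the weight of the undirected edge $\{m,x\}$ by at most $1$), which is routine given the construction of $G^U$ stated earlier; the argument above goes through verbatim with "weight" in place of "size/cardinality" of cuts.
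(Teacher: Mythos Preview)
Your proof is correct and follows essentially the same approach as the paper: both arguments hinge on the observation that, because $E^R(m)$ respects $\mathcal{P}$, the manipulation's effect on the cut weight of $\mathcal{P}$ is extremal (zero change for $m^+$, full decrease $W=|E^-(m)|$ for $m^-$), while for any other partition the change is bounded by that extremum. The only cosmetic difference is that the paper argues by contrapositive (taking $\mathcal{P}'\notin O_{\MU}(G)$ and showing its cut in $G(m)$ strictly exceeds that of $\mathcal{P}$), whereas you argue directly (taking $\mathcal{Q}\in O_{\MU}(G(m))$ and squeezing $|Y(\mathcal{Q})|\le|Y(\mathcal{P})|$).
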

\begin{proof}
Let $\mathcal{P} \in O_{\MU}(G)$, and let $E^R(m)$ be the report that respects $\mathcal{P}$. 
Let $Y$ be the $min$-$k$-$cut$ that corresponds to $\mathcal{P}$ in $G$. 
Given $\mathcal{P}'\not\in O_{\MU}(G)$  let $Y'$ be its corresponding $k$-$cut$ in $G$.
Since  $Y$ is a $min$-$k$-$cut$ and $Y'$ is not a $min$-$k$-$cut$, we have that $|Y|<|Y'|$.
Let $Y_1$ be the $k$-$cut$ that corresponds to  $\mathcal{P}$ in $G(m)$ and  $Y'_1$ be the $k$-$cut$ that corresponds to  $\mathcal{P}'$ in $G(m)$.

\textbullet{\ If $m = m^-$}, since $E^R(m)$ respects $\mathcal{P}$,  $E^-(m) \subseteq  Y$.
Therefore,  $|E^-(m) \cap  Y| = |E^-(m)| \geq  |Y'\cap E^-(m)|$.
Thus, $|Y_1| = |Y|-|E^-(m) \cap  Y| <  |Y'| -  |Y'\cap E^-(m)| =  |Y'_1|$.
That is, $\mathcal{P}'\not\in O_{\MU}(G(m))$.

\textbullet{\ If  $m = m^+$}, since $E^R(m)$ respects $\mathcal{P}$,  $E^+(m) \cap Y_1 = \emptyset$.  
Therefore,
$|Y_1| = |Y|+|E^+(m) \cap  Y_1| = |Y| < |Y'| \leq |Y'| + |E^+(m) \cap  Y'_1| = |Y'_1|$.
That is, $\mathcal{P}'\not\in O_{\MU}(G(m))$.
\end{proof}



We now show that if there is an \SAM\ $E^R_{*}(m)$, then there exists a partition $\mathcal{P} \in \bar{\mathcal{P}}$ such that the report that respects $\mathcal{P}$, $E^R(m)$, is an \SAM, and $E^R_{*}(m)$ is not better than $E^R(m)$. We provide the proof for $m^-$. The proof for $m^+$ is a straightforward adaptation. 

\begin{lemma}
If  $E_1^R(m^-)$ is an \SAM\ in which $LB_{\MU}(G_1(m^-), m^-) = x$, then there is an \LBM\ $E_2^R(m^-)$ such that (i) $E_2^R(m^-)$ respects a partition $\mathcal{P}\in \bar{\mathcal{P}}$ and (ii) $LB_{\MU}(G_2(m^-), m^-) = x$.
\label{lemma:remove_SAM}
\end{lemma}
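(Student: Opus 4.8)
The structure will closely mirror the proof of Lemma~\ref{lemma:remove_regular manipulation_lb}, since an \SAM\ is in particular an \LBM, but now I must additionally track the "no social harm" condition $O_{\MU}(G_2(m^-)) \subseteq O_{\MU}(G)$. The plan is as follows. Given the \SAM\ $E_1^R(m^-)$, pick a partition $\mathcal{P} = \{C_1,\dots,C_k\}$ with $m^- \in C_1$ such that $\mathcal{P} \in O_{\MU}(G_1(m^-))$ and $u(m^-,\mathcal{P}) = x = LB_{\MU}(G_1(m^-),m^-)$. By definition of $\Pbar$ and since $x > LB_{\MU}(G,m^-)$, we have $\mathcal{P} \in \Pbar$. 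Let $E_2^R(m^-)$ be the report that respects $\mathcal{P}$. I then need to establish (i), (ii), and implicitly that $E_2^R(m^-)$ is an \LBM\ (which follows once (ii) holds, as $x > LB_{\MU}(G,m^-)$).

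\textbf{Key steps.} First I would observe that because $E_1^R(m^-)$ is an \SAM, we have $O_{\MU}(G_1(m^-)) \subseteq O_{\MU}(G)$, and in particular $\mathcal{P} \in O_{\MU}(G)$ — that is, $\mathcal{P}$ is already a \MU\ partition of the \emph{true} graph. This is the crucial new ingredient: the target partition lives in $O_{\MU}(G)$, so I can invoke Lemma~\ref{lemma:not_harm} directly. Since $E_2^R(m^-)$ respects $\mathcal{P} \in O_{\MU}(G)$, Lemma~\ref{lemma:not_harm} gives $O_{\MU}(G_2(m^-)) \subseteq O_{\MU}(G)$, so $E_2^R(m^-)$ does not harm the social welfare. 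Second, I must show (ii), namely $LB_{\MU}(G_2(m^-),m^-) = x$. For this I would reuse essentially the argument of Lemma~\ref{lemma:remove_regular manipulation_lb}: show $\mathcal{P} \in O_{\MU}(G_2(m^-))$ (so $LB_{\MU}(G_2(m^-),m^-) \le u(m^-,\mathcal{P}) = x$), and then show $O_{\MU}(G_2(m^-)) \subseteq O_{\MU}(G_1(m^-))$ (so every partition in $O_{\MU}(G_2(m^-))$ gives $m^-$ utility at least $x$, since $x$ is the lower bound over $O_{\MU}(G_1(m^-))$). The containment $\mathcal{P} \in O_{\MU}(G_2(m^-))$ and the containment $O_{\MU}(G_2(m^-)) \subseteq O_{\MU}(G_1(m^-))$ are proved exactly as in Lemma~\ref{lemma:remove_regular manipulation_lb} via Claim~\ref{claim:q_q'} comparing the cut sizes $|Y_2|, |Y_1|$ of $\mathcal{P}$ and $|Y'_2|,|Y'_1|, |Y''_2|,|Y''_1|$ of arbitrary other partitions under the two reports; the partition into $H_1, H_2, H_3$ of $E_1^-(m^-)\cup E_2^-(m^-)$ goes through verbatim. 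Third, once (i) and (ii) hold, $E_2^R(m^-)$ is an \LBM\ (its lower bound is $x > LB_{\MU}(G,m^-)$) and it respects a partition in $\Pbar$, which is exactly the conclusion. I should also note that the improvement of $E_2^R(m^-)$ equals that of $E_1^R(m^-)$, so $E_1^R(m^-)$ is not better — though as stated the lemma only asserts (i) and (ii).

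\textbf{Main obstacle.} The routine part is re-deriving (ii) from Claim~\ref{claim:q_q'}, which is just a transcription of the earlier lemma. The only genuinely new point — and the one I would make sure to state carefully — is that an \SAM, unlike a generic \LBM, forces the witnessing partition $\mathcal{P}$ to already be optimal for $G$, which is precisely what lets Lemma~\ref{lemma:not_harm} apply and upgrades the conclusion from "$E_2^R(m^-)$ is an \LBM" to "$E_2^R(m^-)$ is an \SAM." Strictly, the lemma as stated only claims (i)+(ii), so the social-welfare-preservation of $E_2^R(m^-)$ is used downstream (in proving the correctness of the modified Algorithm~\ref{alg:algorithm_max_util} with $\Pbar \gets O_{\MU}(G)$) rather than inside this lemma; but it is worth recording that $\mathcal{P} \in O_{\MU}(G)$ holds, since that is why restricting $\Pbar$ to $O_{\MU}(G)$ loses nothing for \SAM. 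One subtlety to double-check: that $E_2^R(m^-)$ is genuinely a manipulation (i.e., $E_2^-(m^-) \neq \emptyset$, or at least $E_2^R(m^-) \neq E(m^-)$) — this follows because $LB_{\MU}(G_2(m^-),m^-) = x > LB_{\MU}(G,m^-)$ rules out the truthful report.
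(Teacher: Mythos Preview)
Your proposal is correct and follows the paper's approach exactly: the paper's proof is a two-line remark that, since $E_1^R(m^-)$ is an \SAM, the witnessing partition satisfies $\mathcal{P} \in O_{\MU}(G_1(m^-)) \subseteq O_{\MU}(G) = \Pbar$, after which the argument of Lemma~\ref{lemma:remove_regular manipulation_lb} (Claim~\ref{claim:q_q'} and the $H_1,H_2,H_3$ decomposition) applies verbatim to give (ii). Your only slip is the line ``By definition of $\Pbar$ and since $x > LB_{\MU}(G,m^-)$, we have $\mathcal{P} \in \Pbar$,'' which invokes the \emph{regular} $\Pbar$ of Algorithm~\ref{alg:compute_p_bar_regular_manipulation} rather than the \SAM-specific $\Pbar = O_{\MU}(G)$ in force here --- but you supply the correct justification (the \SAM\ containment) a few lines later, so there is no real gap.
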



\begin{proof}
We use the same definitions and claims as in the proof of Lemma \ref{lemma:remove_regular manipulation_lb}
Indeed, here $\Pbar$ is $O_{\MU}(G)$, but since $E_1^R(m^-)$ is \SAM\ then $\mathcal{P}$ is in $O_{\MU}(G)$, as required.
\end{proof}

\begin{theorem}
    Algorithm \ref{alg:algorithm_max_util}, in which line \ref{line:compute_pbar} is $\bar{\mathcal{P}} \gets O_{\MU}(G)$, solves the manipulation problem for \SAM\ with the \MU\ objective in time $\mathcal{O}(n^{2(2k-1)})$.
     \label{thorem:mu_alg_not_harm}
\end{theorem}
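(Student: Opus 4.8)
The plan is to verify that the modified Algorithm~\ref{alg:algorithm_max_util} is correct and then bound its running time. Correctness has two directions. First, every report returned by the algorithm is a genuine \SAM: the algorithm only ever sets $manip$ to a report $E^R(m)$ that respects some $\mathcal{P}\in\Pbar = O_{\MU}(G)$ and is strictly better than $E(m)$; by Lemma~\ref{lemma:not_harm}, a report respecting a \MU\ partition satisfies $O_{\MU}(G(m))\subseteq O_{\MU}(G)$, and ``better than $E(m)$'' for \LBM\ means $LB_{\MU}(G(m),m) > LB_{\MU}(G,m)$, so the two defining conditions of \SAM\ hold. Second, if any \SAM\ $E^R_*(m)$ exists, then by Lemma~\ref{lemma:remove_SAM} (and its $m^+$ analogue) there is a report $E^R_2(m)$ respecting some $\mathcal{P}\in\Pbar = O_{\MU}(G)$ with $LB_{\MU}(G_2(m),m)$ equal to that of $E^R_*(m)$; since $E^R_2(m)$ respects a \MU\ partition it is also an \SAM\ by Lemma~\ref{lemma:not_harm}, and it is considered in some iteration of the loop, so the algorithm returns a manipulation at least as good as $E^R_*(m)$. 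This gives that the algorithm solves the manipulation problem for \SAM.

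For the running time, the key change is that line~\ref{line:compute_pbar} no longer enumerates all $k$-cuts up to size $maxSize$; instead it sets $\Pbar \gets O_{\MU}(G)$, which by the counting bound of~\cite{karger1996new} has size $\mathcal{O}(n^{2(k-1)})$ and can be computed in polynomial time via~\cite{gupta2019number}. The main loop then iterates over these $\mathcal{O}(n^{2(k-1)})$ partitions. In each iteration, computing the report $E^R(m)$ that respects $\mathcal{P}$ costs $\mathcal{O}(n^2)$ (inspect the at most $n$ outgoing edges of $m$ and decide inclusion by the coalition of the endpoint), and checking whether $E^R(m)$ is better than $manip$ requires computing $O_{\MU}(G(m))$ and the induced $LB$/$UB$ values, which by Theorem~\ref{thm:mu_specific_man_improvment} (equivalently, another application of~\cite{gupta2019number}) costs $\mathcal{O}(n^{2k})$. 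Hence the total is $\mathcal{O}\bigl(n^{2(k-1)}\cdot(n^2 + n^{2k})\bigr) = \mathcal{O}(n^{2(k-1)}\cdot n^{2k}) = \mathcal{O}(n^{2(2k-1)})$, as claimed.

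The only real subtlety — and the step I would be most careful about — is that Lemma~\ref{lemma:remove_SAM} is stated only for \LBM\ (it asserts $E^R_2(m^-)$ is an \LBM\ respecting a partition in $\Pbar$), so one must explicitly observe that combining it with Lemma~\ref{lemma:not_harm} upgrades that \LBM\ to an \SAM; this is what ties the two lemmas together and is the crux of the completeness direction. The rest is bookkeeping: confirming that with $\Pbar = O_{\MU}(G)$ the proofs of Claim~\ref{claim:q_q'} and Lemma~\ref{lemma:remove_regular manipulation_lb} go through verbatim (they only used that $\mathcal{P}\in\Pbar$, which now holds because $E^R_1(m^-)$ being \SAM\ forces $\mathcal{P}\in O_{\MU}(G)$), and that the $m^+$ case is the symmetric adaptation already referenced in the excerpt.
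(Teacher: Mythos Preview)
Your proposal is correct and follows essentially the same approach as the paper: correctness via Lemmas~\ref{lemma:not_harm} and~\ref{lemma:remove_SAM}, running time by re-doing the analysis of Theorem~\ref{thm:alg_mu_regular} with $|\Pbar|=\mathcal{O}(n^{2(k-1)})$. Your treatment is in fact more careful than the paper's, which just cites the two lemmas; in particular, you correctly flag and handle the gap that Lemma~\ref{lemma:remove_SAM} only promises an \LBM\ respecting some $\mathcal{P}\in O_{\MU}(G)$, and that one needs Lemma~\ref{lemma:not_harm} to conclude this report is an \SAM.
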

\begin{proof}
The correctness of the algorithm follows directly from Lemmas \ref{lemma:not_harm} and \ref{lemma:remove_SAM}.
As for the running time, we use the same analysis as in the proof of Theorem \ref{thm:alg_mu_regular}, but now $|\Pbar|$ is $\mathcal{O}(n^{2(k-1)})$. Thus, the running time is $\mathcal{O}(n^{2(2k-1)})$.      
\end{proof}

\section{Experiments}
In order to examine the frequency of \SAM\ and the effectiveness of our XP algorithm with the \MU\ objective, we ran some simulations \footnote{Our code is available in github: https://github.com/hodayaBen/The-Complexity-of-Manipulation-of-k-Coalitional-Games-on-Graphs.
In our implementation we compute all-min-cuts by using a sub-routine from VieCut that computes the Cactus of a graph \cite{ henzinger2020finding}.} on graphs that are based on the
Twitter followers dataset \cite{leskovec2012learning}. 
Since it was too large, we sampled $150$ subgraphs of the network.
Specifically, we sampled subgraphs of sizes $5,10,15,20$, and $25$. 
For each graph size, we first sampled at least $9$ graphs: we randomly chose a vertex and ran BFS where in each iteration we added only $2$ of the neighbors to the queue. The search was terminated when it reached the desired graph size. In order to vary the number of edges we repeated this process once when we added $4$ neighbors to the queue, and then when we added $6$ neighbors.
If we did not reach the desired graph size, we sampled again an initial vertex (up to $100$ times). 

We set $k=2$ and $m=m^+$. 
Note that we concentrated on manipulation by adding edges, since it was quite common; e.g., we found an \LBM\ in around a third of the instances. In contrast, manipulation by removing edges was much less common (e.g., we found an \LBM\ in $0.5\%$ of the instances). We also concentrated on the \MU\ objective, since we do not have efficient algorithms for finding manipulations with \ME\ and \ALO. 
For each graph, we considered every vertex that has the potential to be a manipulator. Specifically, we computed all the $min$-$cut$s, and considered only the vertices that have an edge in a $min$-$cut$ as potential manipulators. Overall, we had $368$ instances. For each instance, we ran our XP algorithm (that finds \LBM, \UBM, \WIM, and \SIM) and the algorithm that finds an \SAM. We also ran a brute-force algorithm, which iterates over all subsets of edges that the manipulator can possibly add. We set a timeout of one hour and terminated each algorithm that did not finish till the timeout.

\begin{figure}[hbpt]
            \centering
            \includegraphics[page=1,width=0.5\textwidth]{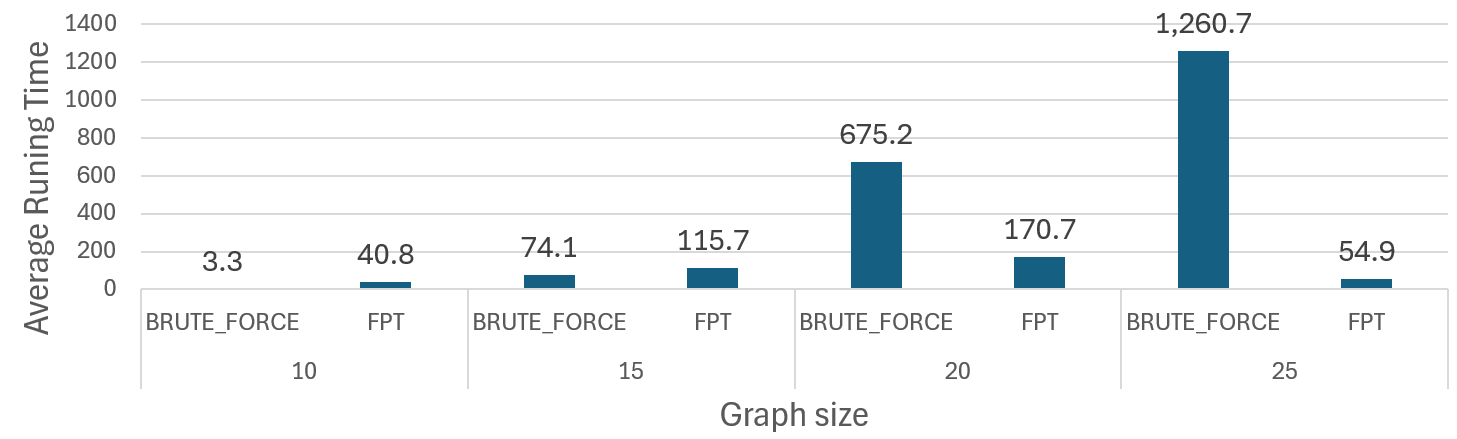}
\caption{Average running time (seconds).}
            \label{fig:directed_graphs_running_time}
         
        \hfill
\end{figure}

%

\paragraph{Frequency of SAM.}
For each graph size, we compared the number of instances in which we found an \LBM\ with the number of instances in which we found an \SAM. Overall, our results indicate that \SAM\ is quite frequent. Specifically, when the graph size was $5$, $10$, $15$, $20$, or $25$, the percentage of instances in which there was an \SAM\ out of the instances in which there was an \LBM\ was $90\%$, $68\%$, $50\%$, $67\%$, and $64\%$, respectively.


\paragraph{Effectiveness of the XP algorithm.}
As expected, the algorithm that finds \SAM\ was very effective: in all of the instances, it finished in less than a second. With a graph size of $5$, the XP and the brute force algorithms also finished in less than a second. The results with the other graph sizes are depicted in Figure \ref{fig:directed_graphs_running_time}. 
Note that we considered only instances in which both algorithms finished their run till the timeout.
The average running time of the brute force algorithm significantly increased as the graph size increased. On the other hand, the XP algorithm was very effective when we increased the graph size. Note that the average running time of the XP with a graph size of $25$ was lower than with graph sizes of $20$ or $15$, due to the timeouts: with a graph size of $25$, the brute force algorithm timed out in $86\%$ of the instances, and thus there were fewer instances in which we tested the algorithms.
Indeed, if we do not remove the instances in which the brute force algorithm timed out, the average running time of the XP algorithm becomes $176$ seconds.
Overall, the brute force algorithm timed out on $0\%$, $49\%$, and $86\%$ of the instances with graph sizes of $15$, $20$, and $25$, respectively. In comparison, the XP algorithm timeout on just $6\%$, $8\%$, and $20\%$ of the instances with the corresponding graph sizes.

As the brute force timed out on  $86\%$ of the instances with graph sizes $25$, we run only the XP algorithm on sampled subgraphs of size $30-100$, that were sampled in same way as above. We observe an increasing trend in the frequency of timeouts, but in a relatively slow rate, see Figure \ref{fig:timeouts_XP}.
\begin{figure}[H]
    \centering
    \includegraphics{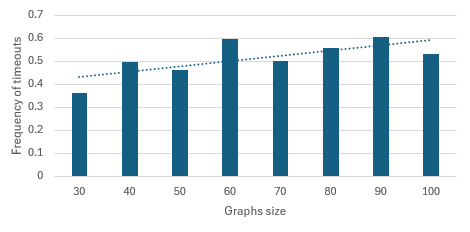}
    \vspace{10pt}
    \caption{Percentage of timeouts of the XP algorithm on graphs with $30-100$ vertices.}
    \label{fig:timeouts_XP}
\end{figure}

         

\section{Conclusions and Future Work}
In this paper, we initiate the study on the complexity of finding manipulation in $k$-coalitional games. We also introduce a new type of manipulation, the socially-aware manipulation.
We showed that both the manipulation and improvement problems are computationally hard with \ME\ and \ALO, even for \SAM. With \MU, we provide a general algorithm that finds an optimal manipulation. The algorithm is XP for LBM, UBM, WIM, and SIM, and it runs in polynomial time for \SAM. Our experiments show that \SAM\ is quite frequent, and demonstrate the effectiveness of our algorithm.
%

The main problem that is still open is classifying the complexity of the manipulation problem for \LBM, \UBM, \WIM, and \SIM\ with \MU. In addition, we showed that the manipulation problem for \SAM\ with \MU\ can be solved efficiently when $k$ is fixed. If $k$ is a parameter, our polynomial-time result becomes an XP-membership result, which naturally leads to the question of whether one can do better (i.e., provide an FPT result).
Finally, it will be interesting to extend our model to the case of multiple manipulators.



\begin{ack}
This research has been partly supported by the Ministry of Science, Technology \& Space (MOST), Israel.
This research has been partially supported by the Israel Science
Foundation under grant 1958/20 and the EU Project
TAILOR under grant 952215.
\end{ack}

\bibliography{bib}

\clearpage
\appendix


\section*{Appendix}
%
    
    
          
         
            

        

\begin{figure}[H]
    \includegraphics[page=1,width=\linewidth]{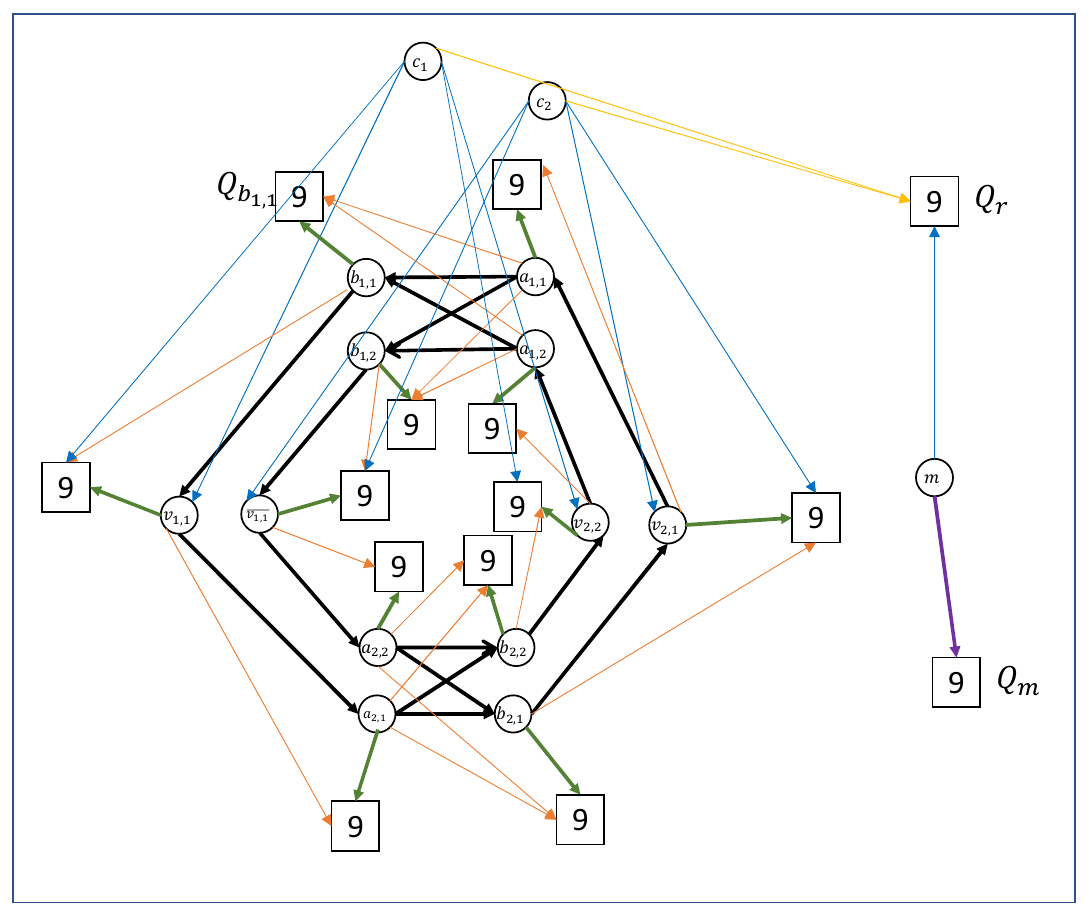}
  
    \caption{An illustration of the graph $G$, from the reduction in the proof of Theorem \ref{thm:sim_me_directed_hard}, for 
   $\mathcal{F} = (x_1 \lor \bar{x_2}) \land (\bar{x_1} \lor x_2)$.
    The color of each arrow indicates the number of edges from the vertex to arbitrary vertices in the clique. Blue represents one edge, orange represents two edges, green represents five edges, yellow represents six edges and purple represents seven edges. The black arrows represent the edges of the ring.
    Note that, to keep the graph size small in the illustration, the example uses a small formula that does not contain $3$ literals in each clause.
     }
       \label{fig:thm1}
\end{figure}
\vspace{0.2cm}

\section{Omitted Proofs}

\subsection{Proof of Theorem \ref{thm:sim_me_directed_hard}}
\begin{reminder}{\ref{thm:sim_me_directed_hard}}
Given a graph $G$, and a manipulator $m^- \in A$, deciding whether any type of manipulation exists when the objective is \ME\ is co-$NP$-hard.
\end{reminder}

As mention in the paper, for the reduction, we use the complementary problem of the NP-complete $3$-SAT problem. And our gadget for the hardness proof is a ring graph, which represents a CNF Boolean formula. It was introduced by \cite{bang2016finding}.

\begin{proof}

Given a formula $\mathcal{F}$, we construct a graph $G=(A,E)$ with a manipulator $m^-$. 

\textbf{The construction of $G$:}
We start by adding a ring graph to $G$ based on $\mathcal{F}$, where $T$ represents the set of ring vertices. For each $t \in T$, we add to $G$  a clique of $9$ vertices, denoted as $Q_t$. 
Each vertex $t\in T$ has the following outgoing edges:
\begin{itemize}
    \item $5$ outgoing edges to $Q_t$
    \item outgoing edges to the next  vertex on the ring (or $2$ next vertices if it is an $a_{i,j}$ vertex that has two next vertices in the ring), following from the ring's definition.
    \item $2$ outgoing edges to vertices within the clique of the next vertex. (or $4$  outgoing edges, if it is an $a_{i,j}$ vertex)
\end{itemize}

In addition to the previous construction, we add a vertex $c_i$ for each clause $C_i \in \mathcal{F}$, which we refer to as clause vertices.
If $x_j\in C_i$, we add an outgoing edge from $c_i$ to one of the occurrences of $x_j$ in the positive path between $s_j$ and $s_{j+1}$, and another edge to one vertex in the corresponding clique of $x_j$ . 
If $\bar{x}_j\in C_i$, we add an outgoing edge from  $c_i$  to one of the occurrences of $\bar{x}_j$ in the negative path between $s_j$ and $s_{j+1}$,  and another edge to one vertex in the corresponding clique of  $\bar{x}_j$.
Note that each literal vertex on the ring has only one incoming edge from a clause vertex. 

We also add an additional clique with $9$ vertices, denoted as $Q_r$. 
Each vertex $c_i$ corresponding to clause $C_i$, has $6$ outgoing edges to vertices of $Q_r$.
Let $G'$ be the subgraph of $G$ containing all the vertices and edges defined up to this point.
To complete the construction, we add to $G$ a clique with $9$ vertices, denoted as $Q_m$. 
Additionally, we add one more vertex, $m^-$, which is defined as the manipulator. $m^-$  has $7$ outgoing edges to vertices of $Q_m$. We also connect $m^-$ to one vertex of $Q_r$ with an outgoing edge. See Figure \ref{fig:thm1} for illustration of the structure of $G$ for  $\mathcal{F} = (x_1 \lor \bar{x_2}) \land (\bar{x_1} \lor x_2)$..

\textbf{Properties of partitions of $G'$.}
 First, we show that the graph $G'$ has a partition to two coalitions, such that each vertex has at least $8$ neighbors if  $\mathcal{F}$ is satisfiable. Conversely, if $\mathcal{F}$ is not satisfiable then every partition of $G'$ has at least one vertex that gets only $6$ neighbors in his coalition.

\begin{enumerate}
    \item Vertices of the cliques: To attain a minimum degree of at least $7$, each vertex must be kept in the same coalition as the other vertices in the clique to which it belongs, as they have no other outgoing edges. Notice that if all vertices of a clique is in the same coalition then the utility of each vertex in this clique is $8$. 
    \item Vertices of the ring: These vertices have $5$ outgoing edges to vertices of their clique, $1$ or $2$ outgoing edges to the next vertices of the ring, and $2$ or $4$ outgoing edges to vertices of the clique of the next vertices of the ring.
    Therefore, to attain at least  $7$ neighbors, every vertex on the ring must be in the same coalition as his corresponding clique, and with at least one of the next vertices.
    This implies that in order to achieve at least $7$ neighbors,  the ring graph can only be partitioned into $2$ cycles, or alternatively, all vertices of the ring can be kept in the same coalition.  Notice that if a complete cycle is in a coalition,  the utility of each vertex in the cycle is at least $8$.
    \item Vertices of the clauses:  These vertices have $6$ outgoing edges to vertices of $Q_r$ and additional $6$ outgoing edges to vertices of the ring. To attain a minimum degree of at least $7$ every vertex $c_i$ must be in the same coalition as $Q_r$, and with at least one of his neighbors from the ring, i.e., a vertex that satisfies this clause. Note that if a clause vertex is in the same coalition with both $Q_r$ and at least one of his neighbors from the ring, denoted by $t$, and $t$ also has a degree of at least $7$, then it implies that the clause vertex has a degree of at least $8$ since it has an edge also to the clique corresponding to $t$.
\end{enumerate}
Hence, if  $\mathcal{F}$ is satisfiable it is possible to achieve a minimum degree of $8$ or more in the partition of $G'$, otherwise,  $\mathcal{F}$ is not satisfiable and the maximum that can be achieved is $6$ in any partition of $G'$.

\textbf{Adding it all together.}
Based on this, we will prove that if $\mathcal{F}$ is satisfiable, then there is no manipulation. Conversely, if $\mathcal{F}$ is unsatisfiable then there is manipulation.

First, we show that if there is a satisfying truth assignment of $\mathcal{F}$ then there is no manipulation. This is because there exists a partition of $G$ where each vertex has at least $8$ neighbors (the maximum possible since some vertices have exactly $8$ friends), the manipulator has only $8$ neighbors and hence in such a partition, it is necessary that $m^-$ gets all of his neighbors in his coalition.
The organizer can partition $G$ into two coalitions using $\tau$  as follows:
The first coalition includes all the vertices in the cycle that corresponds to $\tau$ as well as all clauses vertices $c_i$. It also includes the clique $Q_m$ and the manipulator.
The second coalition includes all the vertices in the cycle that correspond to $\bar{\tau}$.
In this case, the organizer must choose a partition such that $m^-$ will be with $Q_r$ and with all of his friends in $Q_m$, in order for the minimum utility still be $8$. Hence there is no manipulation.

On the other hand, if $\mathcal{F}$  is not satisfiable, then, each partition of $G$ that in which the the vertices of $G'$ are splitted into two different coalitions, achieves a minimum utility of at most $6$. However, the organizer can achieve a minimum degree of $7$, by defining the vertices of $Q_m$ and the manipulator $m^-$ as one coalition and including all vertices of $G'$ in the second coalition.
In this case, the manipulator, $m^-$, has a manipulation by removing edges: $m^-$ can remove two of his outgoing edges to vertices in $Q_m$, and then $m^-$ must be in the same coalition with $Q_r$ in order to achieve $6$ friends in $G(m^-)$. Notice that there are partitions of $G'$ that achieve a minimum degree of $6$; For example, one possible partition  divide the ring into $2$ cycles, placing each cycle in one coalition, and including all $c_i$ with $Q_r$ in one of the coalitions. In such a case, every  $c_i$ vertex has $6$ neighbors, while all other vertices of $G'$ has $8$.
With this manipulation $m^-$ gets all his friends, and it holds that:  $LB(G(m^-),m^-) = 8 > UB(G,m^-) = 7$, hence, there is a \SIM\ (which is also \LBM, \UBM, and \WIM).
\end{proof}

\subsection{Proof of Theorem \ref{thm:improvment_me_directed_hard}}
\begin{reminder}{\ref{thm:improvment_me_directed_hard}}
Given a graph $G$, a manipulator $m^- \in A$, a manipulation type $type$, and a report $E^R(m^-)$, deciding whether $I_{type}(E^R(m^-)) > 0$ when the objective is \ME\ is co-$NP$-hard.
\end{reminder}
\begin{proof}
    The reduction is from the complementary problem of 3-SAT. 
    Given a Boolean CNF formula $\mathcal{F}$, we construct the same graph structure as in the proof of Theorem \ref{thm:sim_me_directed_hard}. 
    In addition, we set $E^R(m^-)$ to be the report in which $m^-$ removes two edges to vertices in $Q_m$. Following the same claims as in the proof of Theorem \ref{thm:sim_me_directed_hard}, we get that $\mathcal{F}$ is unsatisfiable if and only if $I_{type}(E^R(m^-)) > 0$.    
\end{proof}

\subsection{Proof of Theorem \ref{thm:alo_hard_compute}}
\begin{reminder}{\ref{thm:alo_hard_compute}}
Unless $P=NP$, there is no polynomial time algorithm for finding an optimal \LBM\ for a manipulator $m^-$ when the objective is \ALO.
\end{reminder}

For the proof, we need the following lemma:
\begin{lemma}
Each coalition $C$ in an \ALO\ partition contains at least one directed cycle. 
In addition, every vertex in $C$ that is not part of a directed cycle in $C$, is part of a directed path that ends at a directed cycle in $C$.
\label{lemma:alo_contain_cycle}
\end{lemma}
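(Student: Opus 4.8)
The plan is to prove the two claims of Lemma~\ref{lemma:alo_contain_cycle} about the structure of any coalition $C$ in an \ALO\ partition $\mathcal{P}$. Recall that, by the definition of \ALO, every agent $a\in C$ satisfies $u(a,\mathcal{P})>0$, i.e., every vertex of $C$ has at least one outgoing edge to another vertex of $C$. The key structural fact I will exploit is that a finite directed graph in which every vertex has out-degree at least $1$ must contain a directed cycle: starting from any vertex and repeatedly following an outgoing edge (staying inside $C$, which is always possible), we obtain an infinite walk, and since $C$ is finite some vertex repeats, closing a directed cycle. This immediately gives the first claim, that $C$ contains at least one directed cycle.

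For the second claim, let $a\in C$ be a vertex that does not lie on any directed cycle within $C$. I would argue that following outgoing edges from $a$ inside $C$ eventually reaches a vertex that does lie on a directed cycle of $C$. Concretely, build a maximal simple directed path $a = v_0 \to v_1 \to \cdots \to v_\ell$ inside $C$ by, at each step, choosing an outgoing edge of $v_i$ (which exists by \ALO) to a vertex not yet on the path, stopping when no such fresh vertex exists. Since $C$ is finite this process terminates. At the terminal vertex $v_\ell$, every outgoing edge (and there is at least one, by \ALO) must go to some $v_j$ with $j\le \ell$ already on the path; the edge $v_\ell \to v_j$ together with the sub-path $v_j \to v_{j+1} \to \cdots \to v_\ell$ forms a directed cycle in $C$. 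Thus the path from $a$ reaches a vertex $v_\ell$ that is part of a directed cycle of $C$, establishing the second claim. Note that, in fact, the whole walk $v_j\to\cdots\to v_\ell$ lies on a cycle, so I should be slightly careful to phrase it as: $a$ lies on a directed path ending at a vertex contained in a directed cycle of $C$, which is exactly what is claimed.

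The only mild subtlety — and the step I expect to need the most care — is the bookkeeping around ``fresh vertex'' versus ``on a cycle'': when the maximal path loops back, the portion of the path from $v_j$ onward becomes a cycle, so some of the $v_i$ I treated as path-only vertices are actually cycle vertices. The clean way to handle this is to first establish the cycle's existence via the terminal vertex, and then observe that for a vertex $a$ not on any cycle, the path constructed from $a$ must eventually leave the ``non-cycle'' part and hit the cycle; everything before that point is a path, everything from that point on is (part of) a cycle. I would present the argument as a single induction or, more cleanly, by a ``last vertex on a cycle'' / ``maximal path'' argument as above, taking care that the statement quantifies over vertices \emph{not} on a cycle so there is no circularity. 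No harder machinery is needed; this is a purely elementary finiteness argument.
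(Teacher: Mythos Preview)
Your proposal is correct and takes essentially the same approach as the paper: a maximal/longest directed path argument, using that every vertex in $C$ has out-degree at least $1$ by the \ALO\ condition, so the terminal vertex of such a path must loop back and close a cycle. In fact, you are more thorough than the paper, which only spells out the contradiction for the first claim (existence of a cycle) and leaves the second claim (every non-cycle vertex lies on a path ending at a cycle) implicit; your explicit treatment of that second part via the maximal simple path from $a$ is a welcome addition.
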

\begin{proof}
Assume by contradiction that there is a coalition in an \ALO\ partition without any directed cycle. Start with an arbitrary vertex, $v$, and consider the longest directed path, $p$, which is started in $v$. Denote by $u$ the last vertex of $p$.
$u$ has no outgoing edge to a vertex in the coalition, but not in $p$. Otherwise, $p$ is not the longest path.  
In addition, $u$ cannot have an outgoing edge to a vertex in $p$. Otherwise, there would be a directed cycle in the coalition. Therefore, $u$ has no neighbors in the coalition, in contradiction to the assumption that this is a coalition in an \ALO\ partition.
\label{thm:alo_np_find_oprimal}
\end{proof}

We can now prove Theorem \ref{thm:alo_hard_compute}. As in the proof of Theorem \ref{thm:sim_me_directed_hard}, our gadget for the hardness proof is a ring graph~\cite{bang2016finding}, which represents a CNF Boolean formula. 
\begin{figure}[hbp]
      \begin{minipage}{\linewidth}
           \centering
       
     \begin{subfigure}{0.45\textwidth}

    \includegraphics[page=1,width=\textwidth]{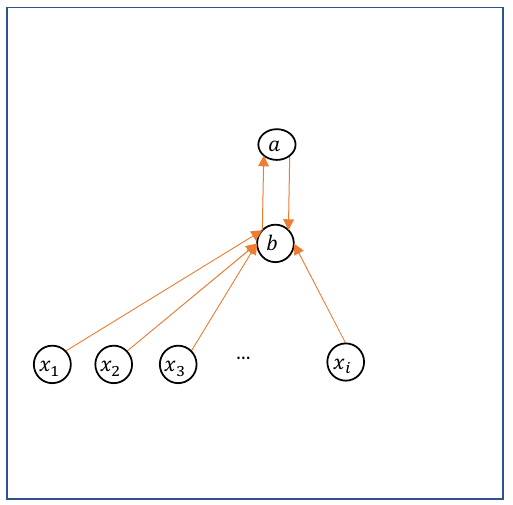}
    
    \end{subfigure}   
       
        \end{minipage}
    \caption{The $D_i$ structure.}\label{fig:D_i}
    
\end{figure}
\vspace{0.2cm}
\begin{proof}
We show a reduction from a variant of the 3-SAT problem, in which each variable occurs in at most $3$ clauses, each literal in at most $2$ clauses, and each clause has $2$ or $3$ literals. This variant is also known as an NP-hard problem \cite{bang2019degree}. 
Given a formula $\mathcal{F}$, we construct a graph $G$ with a  manipulator vertex $m^-$ such that based on the optimal manipulation of $m^-$, we know whether there is a truth assignment that satisfies $\mathcal{F}$. 

\textbf{The construction of $G$:}
We first add to $\mathcal{F}$ an additional clause $C_{m+1} = (a\lor \bar{a})$, where $a$ is a new variable, and denote the resulted formula with $\mathcal{F}'$.
We build a graph $G = (V, E)$ that contains a ring graph with respect to $\mathcal{F}'$.
In addition, we add to $G$ a vertex $c_i$ that corresponds to clause $C_i$, for every  $C_i\in \mathcal{F}'$.  We refer to these vertices as \emph{clause vertices}. 
If $x_j\in C_i$  then we add an outgoing edge from  $c_i$  to one of the occurrences of $x_j$ in the positive path between $s_j$ and $s_{j+1}$. 
If $\bar{x}_j\in C_i$ then we add an outgoing edge from  $c_i$  to one of the occurrences of $\bar{x}_j$ in the negative path between $s_j$ and $s_{j+1}$ (see Figure \ref{fig:ring_clauses} for an illustration).
Note that each literal vertex on the ring has only one incoming edge from a clause vertex. 

For every $i\in[1,m]$ we create $2$ additional vertices $c'_{i,1}$ and $c'_{i,2}$ and connect $c'_{i,1}$ and $c'_{i,2}$ with an outgoing edge to $c_i$.
We also add $5$ vertices $c'_{m+1,1}, \ldots, c'_{m+1,5}$ and the edges $(c'_{m+1,i},  c_{m+1})$ for $1\leq i\leq 5$.

In addition, we add to $G$ the $D_8$ graph described in Figure \ref{fig:D_i} (where $i = 8$).

Finally, we add to $G$ a vertex $m^-$ that represents the manipulator. 
We add the following edges from $m^-$
\begin{itemize}
    \item outgoing edge to every $c_i$, where $i\in [1,m+1]$.
    \item outgoing edge to every $c'_{i,j}$, where $i\in [1,m]$ and $j\in [1,2]$.
    \item outgoing edge to $c'_{m+1,j}$ for $j \in [1,5]$.
    \item outgoing edge to all $x_i$ vertices of $D_8$.
\end{itemize}
 
\textbf{Properties of $G$:}
Before we proceed we emphasize the following properties of $G$: 
\begin{enumerate}
    \item In every \ALO\ partition each  $c'_{i,j}$   is in the same coalition as $c_i$, since $c'_{i, j}$ has only one outgoing edge. (Notice that this implies that if $m^-$ is in the same coalition as $c_i$ then $m^-$ is also in the same coalition as $c'_{i, j}$, and vice versa).
    \label{lemma:c_and_c'_1}
    \item The graph that we build has one cycle in the $D_8$ structure.
    Every other cycle in $G$ is part of the ring graph.
 The ring graph can be partitioned into exactly two disjoint cycles.  Each cycle corresponds to a truth assignment of $\mathcal{F}'$. 
    
    Note that from Lemma \ref{lemma:alo_contain_cycle}, it follows that in every \ALO\ coalition there is at least one cycle and all other vertices that are not in a cycle, are in a directed path to a cycle. 
    Combine this with the fact that the ring can be only partitioned into two disjoint cycles, we get that in every coalition in an \ALO\ partition, that has at least one vertex beside $m^-$ and $D_8$ vertices, there is a cycle of the ring that corresponds to a truth assignment for $\mathcal{F}'$.  
    Moreover, a clause vertex $c_i$ is in a coalition that contains a cycle that corresponds to a truth assignment that satisfies $C_i$, otherwise, $c_i$ is in a coalition that does not contains a cycle that corresponds to a truth assignment that satisfies $C_i$, and hence, $C_i$ does not have any friends in his coalition.
    \label{lemma:c_is_satisfy_1}
    
    
\end{enumerate}
\textbf{Possible \ALO\ partition of $G$:}
Following the properties of $G$ described above, the following \ALO\ partitions of $G$ are the only possible \ALO\ partitions:
\begin{enumerate}
    \item $m^-$ with $D_8$ in the first coalition, and all other vertices in the second coalition.
    \item $D_8$ in the first coalition and all other vertices in the second coalition.
    \item  Split the ring graph into two disjoint cycles and then the first coalition contains one of the cycles of the ring graph. The second coalition contained the other cycle of the ring graph and the vertices of $D_8$.
    The clause vertices are partitioned between the coalitions such that each clause vertex $c_i$ is in a coalition that contains a cycle that corresponds to a truth assignment that satisfies $C_i$.
    In such a case, $m^-$ is either in a coalition that contains at least one clause vertex or in a coalition that contains the vertices of $D_8$. 
\end{enumerate}

Now we show that if there is a truth assignment, $\tau$, that satisfies the formula $\mathcal{F}$, then in the optimal manipulation, $m^-$ keeps only edges to $D_8$, and removes all its other edges.

\begin{Claim}
    There is a truth assignment, $\tau$ that satisfies the formula $\mathcal{F}$, if and only if the manipulator in the optimal manipulation has edges only to $D_8$.
\end{Claim}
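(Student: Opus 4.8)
The plan is to establish both directions of the equivalence by computing $LB_{\ALO}(G(m^-),m^-)$ for a handful of canonical manipulations and then arguing which of them is optimal. Two easy observations are used throughout. First, since $m^-$ may only delete outgoing edges, $O_{\ALO}(G(m^-))\subseteq O_{\ALO}(G)$, and keeping a larger set of reported edges can only enlarge $O_{\ALO}(G(m^-))$, hence can only \emph{decrease} $LB_{\ALO}(G(m^-),m^-)$; so the value of a manipulation is governed by the \emph{smallest} edge set $m^-$ keeps while still leaving $O_{\ALO}(G(m^-))\neq\emptyset$ (keeping no edge gives $O_{\ALO}(G(m^-))=\emptyset$ and utility $0$). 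Second, by Lemma~\ref{lemma:alo_contain_cycle} and the structural properties of $G$ listed above, in every \ALO\ partition of $G(m^-)$ the ring is cut into the two complementary cycles of some assignment $\sigma$ (or kept whole), $D_8$ stays whole, each $c'_{i,j}$ sits with $c_i$, and a clause vertex $c_i$ must lie in the coalition of a cycle that satisfies $C_i$; also, picking any literal of a clause shows that every clause is satisfied by at least one of $\sigma,\bar\sigma$.

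Step 1 (canonical manipulations). If $m^-$ keeps only edges into $D_8$, then $m^-$ is forced into the coalition containing all of $D_8$ (the only cycle reachable through a kept edge is the one inside $D_8$), so $m^-$ gets at least its $8$ real friends in $D_8$, and the organizer can make this exactly $8$ by placing $\{m^-\}\cup D_8\cup\text{cycle-}\bar\tau$ in one coalition and $\text{cycle-}\tau$ together with all clause and $c'$ vertices in the other, for any assignment $\tau$; hence this manipulation has $LB=8$ irrespective of satisfiability. If instead $m^-$ keeps an edge into $c_{m+1}$ (or some $c'_{m+1,j}$) it is forced into $c_{m+1}$'s coalition and collects at least $c_{m+1}$ plus its five forced neighbours $c'_{m+1,1},\dots,c'_{m+1,5}$; and if it keeps an edge into a regular $c_i$ (or $c'_{i,j}$, $i\le m$) it is forced into $c_i$'s coalition and collects at least $c_i,c'_{i,1},c'_{i,2}$. (Each of these manipulations strictly improves on the truthful $LB_{\ALO}(G,m^-)$, so optimal manipulations exist and the question is which value they attain.)

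Step 2 ($\mathcal{F}$ satisfiable $\Rightarrow$ optimum keeps only $D_8$-edges). Fix a satisfying assignment $\tau$ of $\mathcal{F}$, which also satisfies $\mathcal{F}'$. I will show every manipulation keeping a clause-type edge has $LB\le 6$, whence, with Step~1, the optimum equals $8$ and is attained precisely by the manipulations whose kept edges all enter $D_8$. For an edge into $c_{m+1}$ or a $c'_{m+1,j}$: the organizer puts $\{m^-,c_{m+1}\}$, the five $c'_{m+1,j}$ and $\text{cycle-}\bar\tau$ in one coalition, and $\text{cycle-}\tau$, $D_8$, every regular $c_i$ and its $c'$'s in the other; this is a valid \ALO\ partition ($\bar\tau$ trivially satisfies $C_{m+1}$, $\tau$ satisfies all $C_i$), and $m^-$ gets exactly $6$. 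For an edge into a regular $c_i$ or $c'_{i,j}$: if $\bar\tau$ already satisfies $C_i$ the organizer uses $\sigma=\bar\tau,\ \bar\sigma=\tau$ and $m^-$ gets only $3$; otherwise all literals of $C_i$ are true under $\tau$, and the organizer flips the variable $v$ of one such literal $\ell$, taking $\bar\sigma=\tau$ with $v$ flipped and $\sigma=\overline{\bar\sigma}$ (so $\sigma\models C_i$ via $\ell$). This is where the main difficulty lies: one must bound the number of other clauses broken by the flip. Since each literal occurs in at most $2$ clauses (a hypothesis of the 3-SAT variant), $\ell$ lies in at most one clause besides $C_i$, so at most one $C_j$ ($j\le m,\ j\neq i$) loses its only satisfied literal under $\bar\sigma$ (clauses containing $\bar\ell$ only gain satisfaction), hence $m^-$'s coalition contains at most one extra clause unit, giving $m^-$ at most $3+3=6$ friends. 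As keeping nothing gives $0$, the optimal manipulations are exactly those whose kept edges all enter $D_8$.

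Step 3 ($\mathcal{F}$ unsatisfiable $\Rightarrow$ optimum does not keep only $D_8$-edges). Exhibit the manipulation that keeps only the single edge $(m^-,c_{m+1})$, deleting all edges to $D_8$ and to every $c_i,c'_{i,j}$ with $i\le m$. It admits an \ALO\ partition (cut the ring by any $\sigma$, route each clause vertex to a satisfying side, put $D_8$ on one side and $m^-$ with $c_{m+1}$), so $O_{\ALO}(G(m^-))\neq\emptyset$. In every \ALO\ partition of $G(m^-)$, $m^-$ lies in $c_{m+1}$'s coalition, which contains $\text{cycle-}\sigma$ for some $\sigma$; since $\mathcal{F}$, hence $\mathcal{F}'$, is unsatisfiable, $\bar\sigma$ fails some $C_j$ with $j\le m$, and then $\sigma\models C_j$, forcing $c_j$ and $c'_{j,1},c'_{j,2}$ into $m^-$'s coalition, so $m^-$ has at least $1+5+1+2=9$ real friends there and $LB\ge 9>8$. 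Hence an optimal manipulation has value $\ge 9$ and cannot be one of those keeping only $D_8$-edges; combined with Step~2 this gives the claimed equivalence. The single genuine obstacle is the flip argument of Step~2 — controlling the collateral damage of the variable flip via the literal-degree bound of the 3-SAT variant — while the remainder is bookkeeping of which \ALO\ partitions of $G(m^-)$ the organizer can use to keep $m^-$'s friend count low.
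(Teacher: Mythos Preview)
Your proof is correct and follows essentially the paper's approach: the same case analysis on which type of edge $m^-$ retains, the same variable-flip argument bounding collateral to at most one extra clause via the literal-occurrence hypothesis of the 3-SAT variant, and the same $9>8$ comparison in the unsatisfiable direction. The only cosmetic slip is the witness partition in Step~1 meant to show $LB=8$ for the $D_8$-only manipulation ``irrespective of satisfiability'': your partition (all clause vertices placed with cycle-$\tau$) is only valid when $\tau$ satisfies $\mathcal{F}$, whereas the simpler split $\{m^-\}\cup D_8$ versus everything else (which the paper uses) works unconditionally and gives the same value.
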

\begin{proof}
    
First, we show that if there is a truth assignment that satisfies the formula $\mathcal{F}$, then, the optimal manipulation of $m^-$ is to keep edges only to $D8$, and to remove all other edges.
Note that with a manipulation that keeps only edges to $D_8$, $m^-$ will get at least $8$
friends in his coalition, since all vertices of $D_8$ must be in the same coalition.

With any other manipulation, that keeps edges to some other vertex, $v$, which is not in $D_8$ (even if it is in addition to edges to vertices of $D_8$), we will show that the lower bound for  $m^-$ in $G(m^-)$ is less than $8$. 
The vertex $v$ is one of following vertices:
    \begin{enumerate}
        \item  $v$ is either  $c_{i,j}$ or $c'_{i,j}$ for $1 \leq i \leq m$. The organizer can look at truth assigning  $\tau'$ such that all variables get their value according to $\tau$ except, perhaps, one of the variables that are in $c_i$, so that  $c_i$ will be satisfied by $\bar{\tau}'$.  $\tau'$  does not satisfies at most two clauses (the clause $C_i$ and $C_k$). This is due to the fact that the reduction is from a variation of 3SAT so that each literal is in at most $2$ clauses and therefore changing the value of a single variable can change the value of at most two clauses.
Thus, it can be that one coalition will contain $D_8$, the vertices of $C_{m+1}$ and the cycle corresponding to $\tau'$, and all clause vertices except the vertices of one or maybe $2$ clauses $C_i$ and $C_k$.
The second coalition will contain the second cycle, that corresponds to $\bar{\tau}'$, the vertices of one or maybe $2$ clauses $C_i$ and $C_k$,  and the manipulator $m^-$. In such a partition, $m^-$ gets between $3$ to $6$ friends in his coalition. 
\item $v$ is a vertex of $C_{m+1}$. The organizer can build the partition according to $\tau$, such that one coalition will contain $D_8$,  a cycle in the ring that corresponds to $\tau$, and all clause vertices  except the vertices of  $C_{m+1}$.
The second coalition will contain the second cycle from the ring, that corresponds to $\bar{\tau}$, the vertices of  $C_{m+1}$,  and the manipulator $m^-$. In such a partition $m^-$ gets $6$ friends in his coalition.
    \end{enumerate}

 Now, we turn to show that if the optimal manipulation is to keep edges only to $D_8$, and to remove all other edges, then there is a truth assignment that satisfies the formula $\mathcal{F}$.
Assume that $E^R(m^-)$ that contains only vertices from $D_8$ is an optimal manipulation, and assume by contradiction that $\mathcal{F}$ is not satisfiable. We will show that the manipulation that keeps the edge to a vertex of $C_{m+1}$ achieves higher lower bound for $m^-$.
If $m^-$ has an edge to $D_8$ the organizer can choose a partition such that $m^-$ is with $D_8$ in the first coalition, and all other vertices in the second coalition.  In such a case $m^-$ will get $8$ friends.
But if $E^R(m^-)$ contains edges only to the vertices of $C_{m+1}$, then $m^-$ must be with these vertices in the same coalition, and with one cycle of the ring (in order that the vertices of $C_{m+1}$ get at least one neighbor). 
Since $\mathcal{F}$ is not satisfiable, and a clause vertex gets at least one neighbor in his coalition, only if it is in the same coalition with a cycle of the ring graph that corresponds to a truth assignment that satisfies it, it follows that this coalition must contain the vertices of at least one more clause. 
Hence $m^-$ will get at least $9$ friends. In contradiction to the assumption that manipulation that keeps edges  to $D_8$ is an optimal manipulation.
\end{proof}

\end{proof}

\subsection{Proof of Theorem \ref{thm:improvment_alo_directed_hard}}
%
\begin{reminder}{\ref{thm:improvment_alo_directed_hard}}
Given a graph $G$, a manipulator $m^- \in A$,  and a report $E^R(m^-)$, deciding whether $I_{\LBM}(E^R(m^-)) > 0$  when the objective is \ALO\ is $NP$-hard.  
\end{reminder}
\begin{proof}
We show a reduction from the 3-SAT problem. Given a formula $\mathcal{F}$  in a CNF form, we construct a graph $G$, with a manipulator $m^-$, and a manipulation $E^R(m^-)$ such that there is a truth assignment that satisfies $\mathcal{F}$ if and only if  $E^R(m^-)$ is an \LBM.

\textbf{The construction of $G$:}
We first add to $\mathcal{F}$ an additional clause $C_{m+1} = (a\lor \bar{a})$, where $a$ is a new variable, and denote the resulted formula with  $\mathcal{F}'$.

We build a graph $G=(V,E)$ that contains a ring graph with respect to $\mathcal{F}'$. 
In addition, we add to $G$ a vertex $c_i$ that corresponds to clause $C_i$, for every  $C_i\in \mathcal{F}'$.  We refer to these vertices as clause vertices. 
If $x_j\in C_i$  then we add an outgoing edge from  $c_i$  to one of the occurrences of $x_j$ in the positive path between $s_j$ and $s_{j+1}$. 
If $\bar{x}_j\in C_i$ then we add an outgoing edge from  $c_i$  to one of the occurrences of $\bar{x}_j$ in the negative path between $s_j$ and $s_{j+1}$.  (See Figure \ref{fig:ring_clauses})
Notice that each literal vertex on the ring has only one incoming edge from a clause vertex. 



For every $i\in[1,m]$ we create an additional vertex $c'_i$  and connect $c'_i$ with an outgoing edge to $c_i$.
Notice that we only add $m$ such vertices and the vertex  $c_{m+1}$ currently has no incoming edges.


Finally, we add to $G$ a vertex $m^-$ that represents the manipulator. We connect  $m^-$ with an outgoing edge to every $c_i$, where $i\in [1,m+1]$, and to every $c'_i$, where $i\in [1,m]$.

We set $E^-(m^-) = \{(m^-,c_{m+1})\}$.

\textbf{Properties of $G$:}
Before we proceed, we emphasize some properties of $G$ that follow from its definition. 
\begin{enumerate}
    \item In every \ALO\ partition each $c'_i$  is in the same coalition as $c_i$, since $c'_i$ has only one outgoing edge. (Notice that this implies that if $m$ is in the same coalition as $c_i$ then $m$ is also in the same coalition as $c'_i$, and vice versa).
    \label{lemma:c_and_c'}
    \item Every cycle in $G$ contains only vertices from the ring and the ring itself can be partitioned only into two disjoint cycles.  Each cycle corresponds to a truth assignment of $\mathcal{F}'$. 
    
    (Notice that from Lemma \ref{lemma:alo_contain_cycle} it follows that in every \ALO\ coalition there is at least one cycle. 
    Combine this with the fact that the ring can be partitioned only into two disjoint cycles we get that in every coalition in an \ALO\ partition of $G$ there is one cycle that corresponds to a truth assignment for $\mathcal{F}'$. 
    Moreover, a clause vertex $c_i$ is in a coalition that contains a cycle, which corresponds to a truth assignment that satisfies $C_i$. Otherwise, $c_i$ does not have any friends in his coalition)
    \label{lemma:c_is_satisfy}
    
    
\end{enumerate}

\begin{Claim}
    There is a satisfying truth assignment of $\mathcal{F}$   if and only if $E^R(m) = E(m) \setminus E^-(m^-)$ is an \LBM.
\end{Claim}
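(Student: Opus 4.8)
The plan is to pin down the two relevant quantities, $LB_{\ALO}(G,m^-)$ and $LB_{\ALO}(G(m^-),m^-)$, tightly enough to read off the equivalence; recall that $E^R(m^-)$ is an \LBM\ exactly when $LB_{\ALO}(G(m^-),m^-)>LB_{\ALO}(G,m^-)$. Two observations will be used throughout. First, the utility $u(m^-,\mathcal{P})$ is evaluated on the \emph{true} graph $G$, so the friends of $m^-$ are precisely $c_1,\dots,c_{m+1},c'_1,\dots,c'_m$. Second, the removed edge $(m^-,c_{m+1})$ lies on no directed cycle, since $m^-$ has in-degree $0$; hence $G$ and $G(m^-)$ have exactly the same cycles, Properties~\ref{lemma:c_and_c'} and~\ref{lemma:c_is_satisfy} of the construction hold verbatim for $G(m^-)$, and the only agent whose within-coalition requirement can differ between $O_{\ALO}(G)$ and $O_{\ALO}(G(m^-))$ is $m^-$ itself. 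For an assignment $\tau$ of $\mathcal{F}'$, write $Z_\tau$ for the ring cycle it induces, so that the ring vertices split as $Z_\tau\sqcup Z_{\bar\tau}$.

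Next I would establish the baseline bounds. By Lemma~\ref{lemma:alo_contain_cycle} every coalition of an \ALO\ partition contains a directed cycle, and by Property~\ref{lemma:c_is_satisfy} every cycle of $G$ lies on the ring; since $m^-$ has no outgoing edge to a ring vertex, its coalition must contain some $c_i$ or $c'_i$, and because each $c'_i$ drags $c_i$ into its coalition (Property~\ref{lemma:c_and_c'}), the coalition of $m^-$ always contains at least one clause vertex, so $u(m^-,\mathcal{P})\ge 1$ for every $\mathcal{P}\in O_{\ALO}(G)$. Repeating the argument in $G(m^-)$, where the out-neighbours of $m^-$ are only $c_1,\dots,c_m,c'_1,\dots,c'_m$, forces the coalition of $m^-$ to contain some $c_i$ with $i\le m$, hence also $c'_i$, so $u(m^-,\mathcal{P})\ge 2$ for every $\mathcal{P}\in O_{\ALO}(G(m^-))$; thus $LB_{\ALO}(G(m^-),m^-)\ge 2$ regardless of $\mathcal{F}$. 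Non-emptiness of both sets is easy: for an arbitrary $\tau$, take $Z_\tau, Z_{\bar\tau}$ and send each $c_i$ to a coalition whose cycle satisfies $C_i$ (possible because no clause is empty, so $\tau$ or $\bar\tau$ satisfies it), with $c'_i$ alongside $c_i$ and $m^-$ alongside some $c_i$, $i\le m$.

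For the forward direction, assume $\mathcal{F}$ is satisfied by $\tau^*$. I would exhibit the partition whose first coalition is $Z_{\bar\tau^*}\cup\{m^-,c_{m+1}\}$ and whose second coalition is $Z_{\tau^*}$ together with all of $c_1,\dots,c_m,c'_1,\dots,c'_m$. This is \ALO-valid: each $c_i$ with $i\le m$ sits with the satisfying cycle $Z_{\tau^*}$, $c_{m+1}$ sits with $Z_{\bar\tau^*}$ which satisfies the tautology $C_{m+1}$, each $c'_i$ sits with $c_i$, the ring vertices lie on their cycles, and $m^-$ has the friend $c_{m+1}$. In it $u(m^-,\cdot)=1$, so together with the baseline $u\ge 1$ we get $LB_{\ALO}(G,m^-)=1<2\le LB_{\ALO}(G(m^-),m^-)$, i.e.\ $E^R(m^-)$ is an \LBM.

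For the converse, assume $\mathcal{F}$ is unsatisfiable and let $\mathcal{P}^*\in O_{\ALO}(G)$ attain $LB_{\ALO}(G,m^-)$. I claim the coalition of $m^-$ in $\mathcal{P}^*$ contains some $c_i$ with $i\le m$: otherwise its only within-coalition friend is $c_{m+1}$ (no $c'_i$ can appear without $c_i$), so all of $c_1,\dots,c_m$ lie in the other coalition, whose ring cycle is $Z_{\bar\tau}$ where $Z_\tau$ is the ring cycle in the coalition of $m^-$, and \ALO-validity would then force $\bar\tau$ to satisfy every $C_i$ with $i\le m$, i.e.\ to satisfy $\mathcal{F}$ — a contradiction. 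Hence $m^-$ already has a within-coalition out-neighbour in $G(m^-)$, so $\mathcal{P}^*\in O_{\ALO}(G(m^-))$, which gives $LB_{\ALO}(G(m^-),m^-)\le u(m^-,\mathcal{P}^*)=LB_{\ALO}(G,m^-)$; the lower bound does not strictly increase, so $E^R(m^-)$ is not an \LBM. The main obstacle I anticipate is the careful bookkeeping around \ALO-validity after deleting $(m^-,c_{m+1})$ — in particular justifying that this deletion can only weaken $m^-$'s own constraint, since it is an outgoing edge of the source vertex $m^-$, so that no other agent's feasibility changes — together with a clean use of the ``two disjoint ring cycles'' structure to argue that each clause vertex is forced onto the side whose assignment satisfies its clause, which is precisely what turns unsatisfiability of $\mathcal{F}$ into the needed contradiction.
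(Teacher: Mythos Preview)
Your proof is correct and takes essentially the same approach as the paper: both directions rest on the observation that an \ALO\ partition of $G$ in which $c_{m+1}$ is $m^-$'s sole within-coalition friend exists if and only if the other coalition's ring cycle satisfies all of $C_1,\dots,C_m$. The only cosmetic difference is that you argue the backward direction by contrapositive (unsatisfiable $\Rightarrow$ the $LB$-attaining partition of $G$ survives in $G(m^-)$, so the lower bound cannot rise), whereas the paper argues it directly (\LBM\ $\Rightarrow$ some partition valid in $G$ is invalidated by the deletion $\Rightarrow$ that partition witnesses satisfiability).
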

First, we show that if there is a satisfying truth assignment of $\mathcal{F}$   then $E^R(m^-)$ is an \LBM.

Let $\tau$ be a satisfying truth assignment of $\mathcal{F}$. 
The organizer can partition $G$ into two coalitions using $\tau$ in the following manner.

The first coalition includes all the vertices in the cycle that correspond  to $\tau$ and the vertices $c_i,c'_i$, where $1\leq i \leq m$. 
The second coalition includes all the vertices in the cycle  that correspond  to $\bar{\tau}$, the vertex $c_{m+1}$ and the manipulator $m^-$.
In such a partition $m^-$ gets only one friend and this is the worst possible scenario for $m^-$.


The manipulation $E^R(m^-)$ of $m^-$ is to remove the edge $(m,c_{m+1})$. Therefore, the above partition is not a valid \ALO\ partition when $m^-$ removes the edge $(m^-,c_{m+1})$. 
In any valid \ALO\ partition of $G(m^-)$ it holds that $m^-$  must be with at least one friend. The only outgoing edges of $m^-$ are to  $c_i$ and $c'_i$, where $i\in [1,m]$.  As we explained earlier  (see  \ref{lemma:c_and_c'}) above) $c_i$ and $c'_i$ are always in the same coalition so $m^-$ must have at least two friends. So after removing the edge $(m,c_{m+1})$  $m^-$ has at least two friends, and hence $E^-(m)$ is a manipulation.

Now, we show that if $E^R(m)$ is an \LBM\ then there is a satisfying truth assignment of $\mathcal{F}$. 
Since $E^R(m)$ is an \LBM\ it follows  that there is a partition $\mathcal{P}$ that was valid in $G$ but is not valid in $G(m)$.
If as a result of the removal of $(m,c_{m+1})$ the partition $\mathcal{P}$ is no longer an \ALO\ partition then it is because $m^-$ has no  friends after the removal of $(m,c_{m+1})$, thus $c_{m+1}$ must be the only one friend of $m^-$ in the same coalition in $\mathcal{P}$. 
So in partition $\mathcal{P}$ all clause vertices $c_i$, where $i\in [1,m]$, must be in the second coalition and as mentioned  above in \ref{lemma:c_is_satisfy} each one of them must be in a coalition that contains a cycle that represent a truth assignment that satisfies the corresponding  clause, so we get that there is a truth assignment that satisfies all clauses of $\mathcal{F}$, and $\mathcal{F}$ is satisfiable. 
\end{proof}

\subsection{Equivalents to Lemma \ref{lemma:remove_regular manipulation_lb}}
Lemma \ref{lemma:remove_regular manipulation_lb} is proved for \LBM\ by a manipulator $m^-$. Here we show equivalent lemmas, for \LBM\ by $m^+$ and for \WIM\ by $m^-$.
\begin{lemma}
If  $E_1^R(m^+)$ is a \LBM\ in which $LB_{\MU}(G_1(m^+), m^+) = x$, then there is a manipulation $E_2^R(m^+)$ such that (i) $E_2^R(m^+)$ respects a partition $\mathcal{P}\in \bar{\mathcal{P}}$ and (ii) $LB_{\MU}(G_2(m^+), m^+) = x$.
\label{lemma:add_regular manipulation_lb}
\end{lemma}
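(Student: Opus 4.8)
The plan is to mirror the proof of Lemma~\ref{lemma:remove_regular manipulation_lb}, interchanging the roles of ``edges removed'' and ``edges added'' throughout. First I would take an \LBM\ $E_1^R(m^+)$ with $LB_{\MU}(G_1(m^+),m^+)=x$ and choose a partition $\mathcal{P}=\{C_1,\dots,C_k\}$, $m^+\in C_1$, with $\mathcal{P}\in O_{\MU}(G_1(m^+))$ and $u(m^+,\mathcal{P})=x$. Since $E_1^R(m^+)$ is an \LBM\ we have $x>LB_{\MU}(G,m^+)$, so $\mathcal{P}$ is exactly the kind of (surely beneficial) partition that $\Pbar$ is required to contain; hence $\mathcal{P}\in\Pbar$ by the definition of $\Pbar$ (equivalently, by Lemma~\ref{lemma:alg_correct_pbar}). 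Let $E_2^R(m^+)$ be the report that respects $\mathcal{P}$. The goal is to show $LB_{\MU}(G_2(m^+),m^+)=x$, which I would obtain by proving $O_{\MU}(G_2(m^+))\subseteq O_{\MU}(G_1(m^+))$ and $\mathcal{P}\in O_{\MU}(G_2(m^+))$; note that this makes $E_2^R(m^+)$ a genuine manipulation, since then $LB_{\MU}(G_2(m^+),m^+)=x>LB_{\MU}(G,m^+)$.

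The heart of the argument is the $m^+$-analogue of Claim~\ref{claim:q_q'}. Write $E_i^+(m^+)=E_i^R(m^+)\setminus E(m^+)$ for the genuinely added edges, so $G_i(m^+)=(A,E\cup E_i^+(m^+))$, and split $E_1^+(m^+)\cup E_2^+(m^+)$ into $H_1=E_1^+(m^+)\cap E_2^+(m^+)$, $H_2=E_1^+(m^+)\setminus H_1$, $H_3=E_2^+(m^+)\setminus H_1$. The $H_1$ edges lie in both $G_1(m^+)$ and $G_2(m^+)$, hence contribute equally to the corresponding $k$-$cut$ of any fixed partition in $G_1(m^+)$ and in $G_2(m^+)$, so they are irrelevant to the differences below. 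Because $E_2^R(m^+)$ respects $\mathcal{P}$, we get $H_3\subseteq\{m^+\}\times C_1$ (the newly added edges stay inside $C_1$), and since $\{m^+\}\times C_1\subseteq E_2^R(m^+)$ we get $H_2\subseteq\{m^+\}\times(A\setminus C_1)$. Now, using the same notation as in Lemma~\ref{lemma:remove_regular manipulation_lb} for the cuts $Y_i$ (of $\mathcal{P}$), $Y_i'$ (of some $\mathcal{P}'\notin O_{\MU}(G_1(m^+))$), $Y_i''$ (of some $\mathcal{P}''\in O_{\MU}(G_1(m^+))$) in $G_i(m^+)$, and $q=|Y_2|-|Y_1|$, $q'=|Y_2'|-|Y_1'|$, $q''=|Y_2''|-|Y_1''|$: every $H_2$ edge crosses $\mathcal{P}$ and no $H_3$ edge crosses $\mathcal{P}$, so $q=-|H_2|$; while $q'=|\{e\in H_3: e\text{ crosses }\mathcal{P}'\}|-|\{e\in H_2: e\text{ crosses }\mathcal{P}'\}|\ge -|H_2|=q$, and similarly $q''\ge q$. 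This is precisely $q\le q'$ and $q\le q''$.

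From this point the argument is an exact transcription of Lemma~\ref{lemma:remove_regular manipulation_lb}: $\mathcal{P}\in O_{\MU}(G_1(m^+))$ and $\mathcal{P}'\notin O_{\MU}(G_1(m^+))$ give $|Y_1|<|Y_1'|$, so with $q\le q'$ we obtain $|Y_2|<|Y_2'|$; and $|Y_1|=|Y_1''|$ with $q\le q''$ gives $|Y_2|\le|Y_2''|$. Hence $Y_2$ is a $min$-$k$-$cut$ of $G_2(m^+)$, i.e.\ $\mathcal{P}\in O_{\MU}(G_2(m^+))$, and no $\mathcal{P}'\notin O_{\MU}(G_1(m^+))$ can lie in $O_{\MU}(G_2(m^+))$, i.e.\ $O_{\MU}(G_2(m^+))\subseteq O_{\MU}(G_1(m^+))$. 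Combining the two: every partition in $O_{\MU}(G_2(m^+))$ gives $m^+$ utility at least $LB_{\MU}(G_1(m^+),m^+)=x$, and $\mathcal{P}$ gives exactly $x$, so $LB_{\MU}(G_2(m^+),m^+)=x$, establishing (i) and (ii).

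The step I expect to be the main obstacle is the bookkeeping in the second paragraph: passing from ``remove'' to ``add'' swaps which of $G_1(m^+),G_2(m^+)$ contains the $H_2$ versus the $H_3$ edges, and swaps which of the properties ``lies inside $C_1$'' and ``crosses $\mathcal{P}$'' each set enjoys, yet the conclusion must still come out as $q\le q'$ and $q\le q''$ (not the reverse inequalities). A sign error there would invalidate the monotonicity steps in the last paragraph. Everything else is routine and parallels the $m^-$ case verbatim.
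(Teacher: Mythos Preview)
Your proposal is correct and follows essentially the same approach as the paper's proof: the same choice of $\mathcal{P}$, the same decomposition $H_1,H_2,H_3$ of $E_1^+(m^+)\cup E_2^+(m^+)$, the same Claim ($q\le q'$, $q\le q''$), and the same deduction of $\mathcal{P}\in O_{\MU}(G_2(m^+))$ and $O_{\MU}(G_2(m^+))\subseteq O_{\MU}(G_1(m^+))$. Your bookkeeping is in fact more explicit than the paper's (you compute $q=-|H_2|$ directly and write $H_2\subseteq\{m^+\}\times(A\setminus C_1)$ for general $k$, whereas the paper's write-up is terser).
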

\begin{proof}
 
Given an \LBM\ $E_1^R(m^+)$ in which $LB_{\MU}(G_1(m^+), m^+) = x$, let $\mathcal{P}$ be a partition such that 
$\mathcal{P}\in O_{\MU}(G_1(m^+))$ and $u(m^+,\mathcal{P}) = x$. Let $\mathcal{P} =\{ C_1, C_2 \ldots C_k\}$ and $m^+\in C_1$. By definition of $\Pbar$, $\mathcal{P}\in \Pbar$.
Let $E_2^R(m^+)$ be the report that respects $\mathcal{P}$.
We need to show that $LB_{\MU}(G_2(m^+), m^+) = x$. 
We do so by showing that 
$O_{\MU}(G_2(m^+)) \subseteq O_{\MU}(G_1(m^+))$ and $\mathcal{P} \in O_{\MU}(G_2(m^+))$.

Let $Y$ be the $k$-$cut$ that corresponds to $\mathcal{P}$ in $G$, and let $Y_i$ be the $k$-$cut$ that corresponds to $\mathcal{P}$ in $G_i(m^+)$, where $i\in \{ 1,2\}$. 
%
Given $\mathcal{P}' \not\in O_{\MU}(G_1(m^+))$, let $Y'$ be its corresponding $k$-$cut$ in $G$, and let $Y'_i$ be its corresponding $k$-$cut$ in $G_i(m^+)$, where $i\in \{ 1,2\}$.
Given $\mathcal{P}''\in O_{\MU}(G_1(m^+))$, let $Y''$ be its corresponding $k$-$cut$ in $G$, and let $Y''_i$ be its corresponding $k$-$cut$ in $G_i(m^+)$, where $i\in \{ 1,2\}$.
Let $q=|Y_2|-|Y_1|$, let $q'=|Y'_2|-|Y'_1|$ and let $q''=|Y''_2|-|Y''_1|$.

We begin by proving the following claim:
\begin{Claim}
    $q\leq q'$  and  $q\leq q''$.
    \label{claim:q_q'_m+}
\end{Claim}


\begin{proof}

We divide the set $E_1^+(m^+)\cup E_2^+(m^+)$  into the following three disjoint sets:  
\begin{enumerate}
\item The set $H_1= E_1^+(m^+)\cap E_2^+(m^+)$. 
\item The set $H_2= E_1^+(m^+)\setminus H_1$.
\item The set $H_3= E_2^+(m^+)\setminus H_1$.
\end{enumerate}

The set $H_1$ is contained  both in $G_1(m^+)$ and $G_2(m^+)$, hence it does not affect the value of $q$, $q'$ and $q''$. 

Consider the set $H_2$. 
The set of edges $E_1^+(m^+)\setminus E_2^+(m^+)$ is in $G_1(m^+)$ and not in $G_2(m^+)$.
  Notice also that  $H_2\subseteq \{ (m^+,x)\mid x\in C_2\}$, since otherwise a manipulation that respects $\mathcal{P}$ would also add them.
    So, in the worst case, the values of $q$, $q'$ (and $q''$) are affected in the same way. 
    (Since all of these edges cross the $cut$ of partition $\mathcal{P}$ and are in $G_1$ but not in $G_2$, each one of them decrease the value of $q$,  and if these edges also cross the $cut$ of $\mathcal{P}'$ or $\mathcal{P}''$ it decrease the value of $q'$ or $q''$ at most affected in the same way.)

The set $H_3$ adds edges only to the set of  $G_2(m^+)$.
 $H_3$ is only edges between vertices from $C_1$ and hence only $|Y_2'|$ and $|Y''_2|$) may increase.
 \end{proof}
 We now show that $\mathcal{P} \in O_{\MU}(G_2(m^+))$, and we do so by showing that $Y_2$ is a $min$-$k$-$cut$ in $G_2(m^+)$.
$\mathcal{P}' \not\in O_{\MU}(G_1(m^+))$, and thus  $Y'_1$ is not a $min$-$k$-$cut$ in $G_1(m^+)$. On the other hand, $\mathcal{P} \in O_{\MU}(G_1(m^+))$, and thus $Y_1$ is a $min$-$k$-$cut$ in $G_1(m^+)$. That is,  $|Y_1| < |Y'_1|$. From Claim \ref{claim:q_q'_m+} we have that $q\leq q'$, and thus $|Y_2| < |Y'_2|$. 
Now, $\mathcal{P}'' \in O_{\MU}(G_1(m^+))$ and $\mathcal{P} \in O_{\MU}(G_1(m^+))$. That is, both $Y_1$ and $Y''_1$ are $min$-$k$-$cut$s in $G_1(m^+)$. Therefore, $|Y_1| = |Y''_1|$. From Claim \ref{claim:q_q'_m+} we have that $q\leq q''$ and thus $|Y_2| \leq |Y''_2|$. Overall, $|Y_2| < |Y'_2|$, and  $|Y_2| \leq |Y''_2|$. That is, the size of $Y_2$ is at most the size of any $k$-$cut$. Therefore, $|Y_2|$ is a $min$-$k$-$cut$ in $G_2(m^+)$.

It remains to show that $O_{\MU}(G_2(m^+)) \subseteq O_{\MU}(G_1(m^+))$. Indeed, we showed that  $|Y_2| < |Y'_2|$. Thus, $Y'_2$ is not a $min$-$k$-$cut$ in $G_2(m^+)$. Therefore, $\mathcal{P}' \not\in O_{\MU}(G_2(m^+))$.

Hence $\mathcal{P}$ is a \MU\ partition in $G_2$ and the $UB_{\MU}(G_2(m^+), m^+) = y$.
Since $O_{\MU}(G_2(m^+)) \subseteq O_{\MU}(G_1(m^+))$, it hold that $LB_{\MU}(G_2(m^+), m^+) \geq x$.

\end{proof}

\begin{lemma}
If  $E_1^R(m^-)$ is a \WIM\  in which $LB_{\MU}(G_1(m^-), m^-) = x$, and $UB_{\MU}(G_1(m^-), m^-) = y$, then there is a manipulation $E_2^R(m^-)$ such that (i) $E_2^R(m^-)$ respects a partition $\mathcal{P}\in \bar{\mathcal{P}}$ and (ii)  $LB_{\MU}(G_2(m^-), m^-) \geq x$ and $UB_{\MU}(G_2(m^-), m^-) = y$.
\label{lemma:remove_regular manipulation_wim}
\end{lemma}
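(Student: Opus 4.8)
The plan is to mirror the proof of Lemma~\ref{lemma:remove_regular manipulation_lb} almost verbatim, changing only which partition is used as the witness. There, the witness $\mathcal{P}$ was chosen to realize the \emph{lower} bound of the manipulated graph ($u(m^-,\mathcal{P}) = LB_{\MU}(G_1(m^-),m^-) = x$); here, since we must also control the upper bound, I would instead pick $\mathcal{P} = \{C_1,\ldots,C_k\}$ with $m^-\in C_1$ such that $\mathcal{P}\in O_{\MU}(G_1(m^-))$ and $u(m^-,\mathcal{P}) = y = UB_{\MU}(G_1(m^-),m^-)$, i.e.\ the partition that realizes the \emph{upper} bound (such a partition exists since $O_{\MU}(G_1(m^-))$ is finite and nonempty). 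Because $E_1^R(m^-)$ is a \WIM, $y = UB_{\MU}(G_1(m^-),m^-) > UB_{\MU}(G,m^-) \ge LB_{\MU}(G,m^-)$, so $u(m^-,\mathcal{P}) > LB_{\MU}(G,m^-)$; since $\mathcal{P}$ is attained by the manipulation $E_1^R(m^-)$, the correctness of the $\bar{\mathcal{P}}$-computation (Lemma~\ref{lemma:alg_correct_pbar}) yields $\mathcal{P}\in\bar{\mathcal{P}}$. I then let $E_2^R(m^-)$ be the (unique) report that respects $\mathcal{P}$.

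Next I would invoke Claim~\ref{claim:q_q'} exactly as stated. Its proof partitions $E_1^-(m^-)\cup E_2^-(m^-)$ into $H_1,H_2,H_3$ and argues $q\le q'$ and $q\le q''$ for every $\mathcal{P}'\notin O_{\MU}(G_1(m^-))$ and every $\mathcal{P}''\in O_{\MU}(G_1(m^-))$, and nothing in that argument depends on which coalition structure $\mathcal{P}$ (with $m^-\in C_1$) was selected, only that $E_2^R(m^-)$ respects it. From $q\le q'$ together with $|Y_1|<|Y_1'|$ one gets $|Y_2|<|Y_2'|$, and from $q\le q''$ together with $|Y_1|=|Y_1''|$ one gets $|Y_2|\le|Y_2''|$; hence $Y_2$ is a $min$-$k$-$cut$ of $G_2(m^-)$, i.e.\ $\mathcal{P}\in O_{\MU}(G_2(m^-))$, and every $\mathcal{P}'\notin O_{\MU}(G_1(m^-))$ satisfies $\mathcal{P}'\notin O_{\MU}(G_2(m^-))$, i.e.\ $O_{\MU}(G_2(m^-))\subseteq O_{\MU}(G_1(m^-))$.

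Finally I would read off the two required bounds from these two facts. Since $\mathcal{P}\in O_{\MU}(G_2(m^-))$ with $u(m^-,\mathcal{P})=y$, we get $UB_{\MU}(G_2(m^-),m^-)\ge y$, and since $O_{\MU}(G_2(m^-))\subseteq O_{\MU}(G_1(m^-))$ we get $UB_{\MU}(G_2(m^-),m^-)\le \max_{\mathcal{P}'\in O_{\MU}(G_1(m^-))}u(m^-,\mathcal{P}') = y$; hence $UB_{\MU}(G_2(m^-),m^-)=y$. The same inclusion gives $LB_{\MU}(G_2(m^-),m^-)\ge \min_{\mathcal{P}'\in O_{\MU}(G_1(m^-))}u(m^-,\mathcal{P}') = x$, which is (ii). It remains to note $E_2^R(m^-)$ is indeed a manipulation: $LB_{\MU}(G_2(m^-),m^-)\ge x > LB_{\MU}(G,m^-)$ and $UB_{\MU}(G_2(m^-),m^-)=y>UB_{\MU}(G,m^-)$, so it is in fact again a \WIM; and (i) was shown above.

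As for difficulty: there is essentially no new combinatorial obstacle, because all the real work is packaged in Claim~\ref{claim:q_q'}, which is reused unchanged. The single point that needs care, and the reason \WIM\ is not literally an instance of the \LBM\ lemma, is the choice of witness: for \WIM\ we cannot in general find one respected partition that simultaneously realizes the old lower bound $x$ \emph{and} the old upper bound $y$. The observation that makes the proof go through is that we do not need to: realizing only $y$ pins the new upper bound down exactly, and the new lower bound is then forced up to at least $x$ "for free" by the inclusion $O_{\MU}(G_2(m^-))\subseteq O_{\MU}(G_1(m^-))$.
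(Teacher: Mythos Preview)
Your proposal is correct and follows essentially the same approach as the paper: choose the witness partition $\mathcal{P}\in O_{\MU}(G_1(m^-))$ realizing the \emph{upper} bound $y$, take $E_2^R(m^-)$ to be the report respecting $\mathcal{P}$, reuse Claim~\ref{claim:q_q'} verbatim to obtain $\mathcal{P}\in O_{\MU}(G_2(m^-))$ and $O_{\MU}(G_2(m^-))\subseteq O_{\MU}(G_1(m^-))$, and read off $UB=y$ and $LB\ge x$ from these two facts. Your justification of $\mathcal{P}\in\bar{\mathcal{P}}$ (via $y>UB_{\MU}(G,m^-)\ge LB_{\MU}(G,m^-)$) and of the equality $UB_{\MU}(G_2(m^-),m^-)=y$ (via both inequalities) are in fact slightly more explicit than the paper's, but the argument is the same.
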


\begin{proof}
Given an \WIM\ $E_1^R(m^-)$ in which $LB_{\MU}(G_1(m^-), m^-) = x$, and $UB_{\MU}(G_1(m^-), m^-) = y$.
Let $\mathcal{P}$ be a partition such that 
$\mathcal{P}\in O_{\MU}(G_1(m^-))$ and $u(m^-,\mathcal{P}) = y$. Let $\mathcal{P} =\{ C_1, C_2 \ldots C_k\}$ and $m^-\in C_1$. By definition of $\Pbar$, $\mathcal{P}\in \Pbar$.
Let $E_2^R(m^-)$ be the report that respects $\mathcal{P}$.
We need to show that $LB_{\MU}(G_2(m^-), m^-) = x$ and  $UB_{\MU}(G_2(m^-), m^-) = y$. 
We do so by showing that 
$O_{\MU}(G_2(m^-)) \subseteq O_{\MU}(G_1(m^-))$ and $\mathcal{P} \in O_{\MU}(G_2(m^-))$.

Let $Y$ be the $k$-$cut$ that corresponds to $\mathcal{P}$ in $G$, and let $Y_i$ be the $k$-$cut$ that corresponds to $\mathcal{P}$ in $G_i(m^-)$, where $i\in \{ 1,2\}$. 
%
Given $\mathcal{P}' \not\in O_{\MU}(G_1(m^-))$, let $Y'$ be its corresponding $k$-$cut$ in $G$, and let $Y'_i$ be its corresponding $k$-$cut$ in $G_i(m^-)$, where $i\in \{ 1,2\}$.
Given $\mathcal{P}''\in O_{\MU}(G_1(m^-))$, let $Y''$ be its corresponding $k$-$cut$ in $G$, and let $Y''_i$ be its corresponding $k$-$cut$ in $G_i(m^-)$, where $i\in \{ 1,2\}$.
Let $q=|Y_2|-|Y_1|$, let $q'=|Y'_2|-|Y'_1|$ and let $q''=|Y''_2|-|Y''_1|$.

We begin by proving the following claim:
\begin{Claim}
    $q\leq q'$  and  $q\leq q''$.
    \label{claim:q_q'_wim}
\end{Claim}


\begin{proof}

We divide the set $ E_1^-(m^-)\cup E_2^-(m^-)$  into the following three disjoint sets:  
\begin{enumerate}
\item The set $H_1= E_1^-(m^-)\cap E_2^-(m^-)$. 
\item The set $H_2= E_1^-(m^-)\setminus H_1$.
\item The set $H_3= E_2^-(m^-)\setminus H_1$.
\end{enumerate}

The set $H_1$ is contained in both $E_1^-(m^-)$ and $E_2^-(m^-)$, and thus the edges from $H_1$ are in neither $G_1(m^-)$ nor $G_2(m^-)$. That is, the edges from $H_1$ are not in $Y_i$, $Y'_i$ and $Y''_i$, for $i \in \{1,2\}$.
Since  $q=|Y_2|-|Y_1|$, $q'=|Y'_2|-|Y'_1|$ and $q''=|Y''_2|-|Y''_1|$, then the values of $q$, $q'$ and $q''$ do not depend on the edges from $H_1$. 

Now consider the set $H_2$. The edges from $H_2$ are not in $G_1(m^-)$ but they are in $G_2(m^-)$.
Therefore, the edges of $H_2$ are not in $Y_1$, $Y'_1$, and $Y''_1$.
The manipulation  $E^R_2(m)$ respect $\mathcal{P}$, and thus $H_2\subseteq \{ (m^-,x)\mid x\in C_1\}$.
Therefore, the edges of $H_2$ are not included in any $k$-$cut$ that corresponds to $\mathcal{P}$, i.e., the edges of $H_2$ are also not in $Y_2$. That is, the value of $q$ does not depend on the edges from $H_2$.
However, the edges of $H_2$ may be in $Y'_2$ or in $Y''_2$
We get that each edge from $H_2$ that is in $Y'_2$ or in $Y''_2$ increases the value of $q'$ or $q''$, respectively. 

Finally, consider the set $H_3$. 
The edges from $H_3$ are not in $G_2(m^-)$ but they are in $G_1(m^-)$.
Therefore, the edges of $H_3$ are not in $Y_2$, $Y'_2$, and $Y''_2$.
The manipulation $E^R_2(m)$ respects $\mathcal{P}$, and thus $H_3\subseteq \{ (m^-,x)\mid x\in \{C_i| i \in [2,k]\}\}$.  Therefore, the edges of $H_3$ are in any $k$-$cut$ that corresponds to $\mathcal{P}$, i.e., the edges of $H_3$ are in $Y_1$. That is, each edge from $H_3$ decreases the value of $q$.
In addition, the edges of $H_3$ may be in $Y'_1$ or in $Y''_1$.
We get that each edge from $H_3$ that is in $Y'_1$ or in $Y''_1$ decreases the value of $q'$ or $q''$, respectively. 

Overall, $q\leq q'$ and $q\leq q''$.
\end{proof}

We now show that $\mathcal{P} \in O_{\MU}(G_2(m^-))$, and we do so by showing that $Y_2$ is a $min$-$k$-$cut$ in $G_2(m^-)$.
$\mathcal{P}' \not\in O_{\MU}(G_1(m^-))$, and thus  $Y'_1$ is not a $min$-$k$-$cut$ in $G_1(m^-)$. On the other hand, $\mathcal{P} \in O_{\MU}(G_1(m^-))$, and thus $Y_1$ is a $min$-$k$-$cut$ in $G_1(m^-)$. That is,  $|Y_1| < |Y'_1|$. From Claim \ref{claim:q_q'_wim} we have that $q\leq q'$, and thus $|Y_2| < |Y'_2|$. 
Now, $\mathcal{P}'' \in O_{\MU}(G_1(m^-))$ and $\mathcal{P} \in O_{\MU}(G_1(m^-))$. That is, both $Y_1$ and $Y''_1$ are $min$-$k$-$cut$s in $G_1(m^-)$. Therefore, $|Y_1| = |Y''_1|$. From Claim \ref{claim:q_q'_wim} we have that $q\leq q''$ and thus $|Y_2| \leq |Y''_2|$. Overall, $|Y_2| < |Y'_2|$, and  $|Y_2| \leq |Y''_2|$. That is, the size of $Y_2$ is at most the size of any $k$-$cut$. Therefore, $|Y_2|$ is a $min$-$k$-$cut$ in $G_2(m^-)$.

It remains to show that $O_{\MU}(G_2(m^-)) \subseteq O_{\MU}(G_1(m^-))$. Indeed, we showed that  $|Y_2| < |Y'_2|$. Thus, $Y'_2$ is not a $min$-$k$-$cut$ in $G_2(m^-)$. Therefore, $\mathcal{P}' \not\in O_{\MU}(G_2(m^-))$.

Hence $\mathcal{P}$ is a \MU\ partition in $G_2$ and the $UB_{\MU}(G_2(m^-), m^-) = y$.
Since $O_{\MU}(G_2(m^-)) \subseteq O_{\MU}(G_1(m^-))$, it hold that $LB_{\MU}(G_2(m^-), m^-) \geq x$.

\end{proof}

\subsection{The Full Proof of Theorem \ref{thm:exist-no-harm-man}}
\begin{reminder}{\ref{thm:exist-no-harm-man}}
The susceptibility of \SAM\ is equivalent to the susceptibility of \LBM. Specifically, \MU\ is susceptible to \SAM\ by adding or removing edges, but \ME\ and \ALO\ are susceptible to \SAM\ only by removing edges.

\end{reminder}
\begin{proof}
    
Consider the graph as depicted in Figure\ref{fig:Util_add}. Assume that $k=2$, the organizer's objective is \MU, and $m=m^+$. 
Note that $|min$-$cut(G)|=2$. In addition, $\mathcal{P}_1 = \{\{m\}, \{a,b,c,d\}\} \in O_{\MU}(G)$, and $u(m,\mathcal{P}_1)=0$. Thus, $LB_{\MU}(G,m)=0$. By adding the dashed edge (from $m$ to $a$), the size of $min$-$cut(G(m))$ is still $2$. That is, for each $\mathcal{P} \in O_{\MU}(G(m))$ the size of the corresponding $cut$ in $G(m)$ is $2$.
Since $m$ does not remove any edge, the size of $\mathcal{P}$'s corresponding $cut$ in $G$ is $2$, and thus $\mathcal{P} \in O_{\MU}(G)$. That is, $O_{\MU}(G(m)) \subseteq O_{\MU}(G)$. In addition, every $cut$ in $G(m)$ that contains both $(m,c)$ and $(m,d)$ is of size at least $3$. Therefore, every $min$-$cut$ in $G(m)$ contains at most one edge from $m$. That is, $LB_{\MU}(G(m),m) >0$, and the manipulation is \SAM.

Consider the graph as depicted in Figure\ref{fig:Util_remove}. Assume that $k=2$, the organizer's objective is \MU, and $m=m^-$.
Note that $|min$-$cut(G)|=4$, since $G$ contains the dotted edges. In addition, there are only two partitions in $O_{\MU}(G)$: $\mathcal{P}_1$, in which $m$ is in the same coalition as the vertices of the cliques $A$ and $B$, and $\mathcal{P}_2$, in which $m$ is in the same coalition as the vertices of the cliques $B$ and $C$. Therefore, $u(m,\mathcal{P}_1) = 12$, and $u(m,\mathcal{P}_1) = 14$.
By removing the dotted edges (from $m$ to vertices of $A$), $\mathcal{P}_2$ remains the only partition in $O_{\MU}(G(m))$. That is, the manipulation is \SAM.
 
Consider the graph as depicted in ~(\ref{fig:Egal-remove}).
Assume that $k=2$, the organizer's objective is \ME, and $m=m^-$.
For $\mathcal{P}_1 = \{\{m,d,e,f\}\{a,b,c\}\}$, $\min_{a\in A}u(a,\mathcal{P}_1) =2$. Similarly, for $\mathcal{P}_2 = \{\{a,b,c,m\}, \{d,e,f\}\}$, $\min_{a\in A}u(a,\mathcal{P}_2) =2$. 
Since there are vertices in $G$ that have only two outgoing edges,  $\mathcal{P}_1, \mathcal{P}_2 \in O_{\ME}(G)$, and these are the only partitions in $O_{\ME}(G)$.
Note that $u(m,\mathcal{P}_1) = 2$, and $u(m,\mathcal{P}_2) = 3$.
By removing the dotted edges (from $m$ to $d$ and $e$), $\mathcal{P}_2$ remains the only partition in $O_{\ME}(G(m))$. That is, the manipulation is \SAM.

 

Finally, for \ALO, we will show below that any \LBM\ is also \SAM\ (Theorem \ref{thm:ALO_m-_is_sam}). Therefore, since \ALO\ is susceptible to \LBM\ by removing edges, it is also susceptible to \SAM\ by removing edges.
\end{proof}

\begin{figure}[t]
    \begin{minipage}{1.0\columnwidth}
    \centering  
        \begin{subfigure}{0.145\textwidth}
            \centering
            \includegraphics[page=1,width=\textwidth]{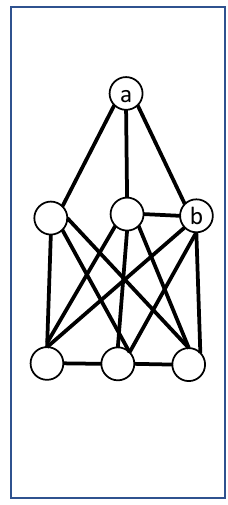}
     \caption{$A_4$  }
            \label{fig:A4_structure}
        \end{subfigure}    
            \begin{subfigure}{0.15\textwidth}
            \centering
            
            \includegraphics[page=1,width=\textwidth]{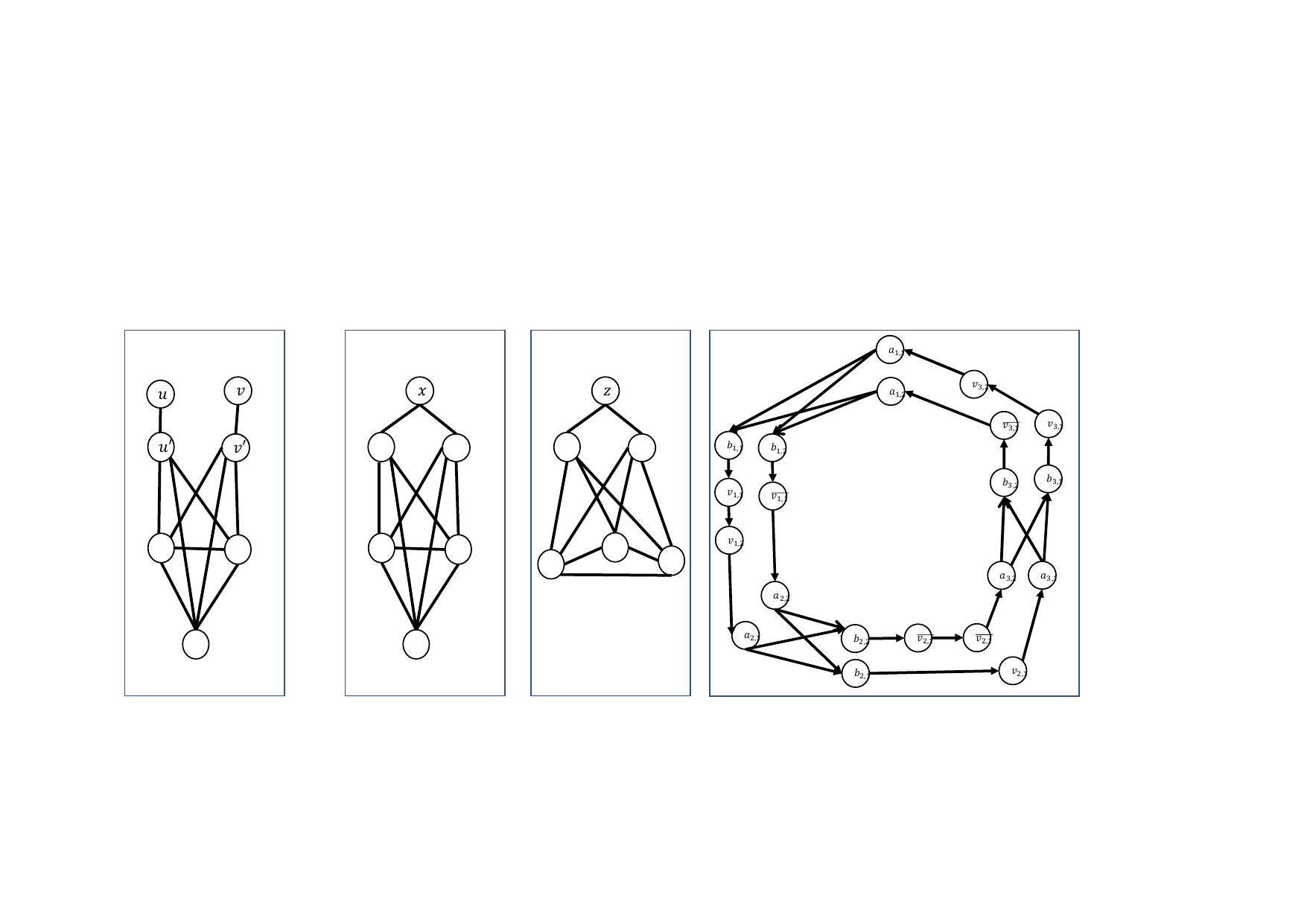}
          \caption{$W_4$}
            \label{fig:W4_structure}
        \end{subfigure}    
        \begin{subfigure}{0.15\textwidth}
            \centering
            
            \includegraphics[page=1,width=\textwidth]{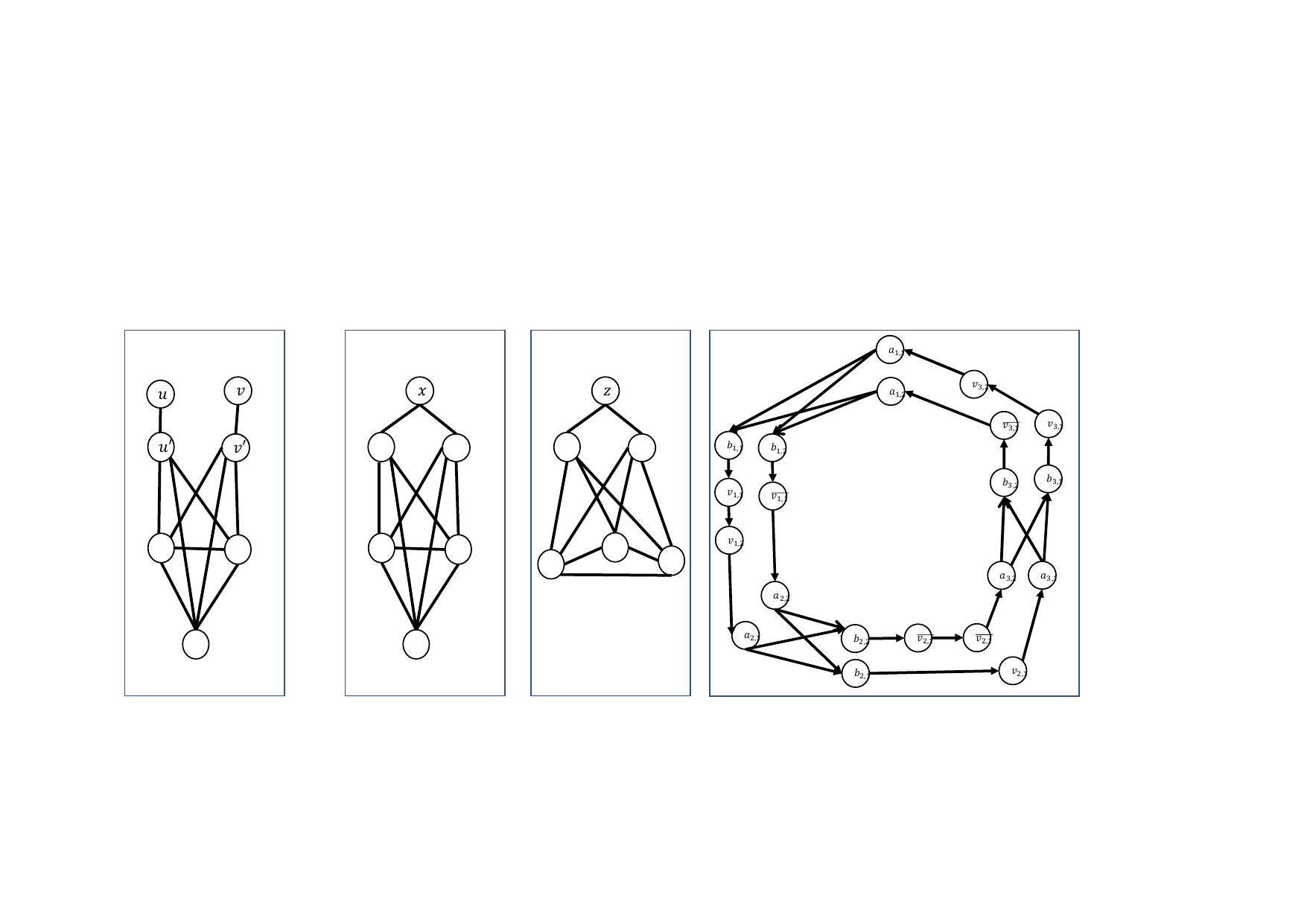}
          \caption{$Z_4$}
            \label{fig:Z4_structure}
        \end{subfigure}    
\vspace{0.4cm}
    \caption{Structures for Theorems~\ref{thorem:max_egal_np_undirected}}
    
    \end{minipage}   
    \vspace{0.4cm}
\end{figure}
\vspace{1.0cm}
\subsection{Proof of Theorem \ref{thorem:sam_max_egal_np_directed}}
\begin{reminder}{\ref{thorem:sam_max_egal_np_directed}}
Given a graph $G$, and a manipulator $m^- \in A$, deciding whether an \SAM\ exists when the objective is \ME\ is co-NP-hard. 
\end{reminder}
\begin{proof}
We show a reduction from the complementary problem of 3-SAT.
Given a Boolean CNF formula $\mathcal{F}$,  we build a graph $G$ with a vertex $m$ that we identify as the manipulator, such that $m$ has a manipulation if and only if  $\mathcal{F}$ is not satisfiable.

Each edge in the construction between vertices $v$ and $u$ is replaced by two directed edges: $(v,u)$ and $(u,v)$, unless stated otherwise.. 
First we add to $G$ the graph $G'$ in the same structure as defined by ~\cite{bang2019degree} in their proof of Theorem 3.6 for the specific case of $k_1 =k_2 =4$. 
They use a ring graph which encodes $\mathcal{F}$ and adds vertex $c_i$ for each clause (as in Figure \ref{fig:ring_clauses}). Add  vertices $u_1, u_2 \ldots u_n$ and connect each $u_i$ by two edges $(u_i,a_{i,1})$ and $(u_i, a_{i,2})$, where $a_{i,1}$ and $a_{i,2}$ are vertices from the ring. 
Similarly, add the vertices $u'_1, u'_2 \ldots u'_n$, and connect each $u'_i$ by two edges $(u'_i,b_{i,1})$ and $(u'_i, b_{i,2})$.

In addition, they added one more vertex, $r$, that is connected to all vertices $H=\{u_1, u_2 \ldots u_n, u'_1, u'_2 \ldots u'_n, c_1, c_2, \ldots, c_m\}$, however the connection is not by edge but via three new vertices - i.e., the $W_4$ structure (depicted in Figure \ref{fig:W4_structure}) such that $r$ is $u$ and $v$ is a vertex from $H$. This manner of connection ensures that in each partition in which every vertex gets at least $4$ neighbors, $r$ will be with all of the vertices in $H$.

For each vertex, $v$, such that $v\in H$ or $v$ is part of the ring, they add to the graph $5$ new vertices. These vertices are in the structure of $Z_4$ (illustration in \ref{fig:Z4_structure}) and the $'z'$ vertex of the structure $Z_4$ is the vertex $v$.

For our setting, we define the $A_4$ graph structure as depicted in Figure \ref{fig:A4_structure} and set the manipulator $m$ to be the $a$ vertex of $A_4$. We also connect $m$ to the vertex $r$ by an outgoing edge from $m$ to $r$.
It follows from the proof of  Theorem 3.6 in \cite{bang2019degree}, when we chose $k_1 =k_2 =4$, that the graph $G'$ has a partition such that each vertex has at least $4$ neighbors if and only if $\mathcal{F}$ is satisfiable. 
Based on this we will prove that there is a manipulation if and only if $\mathcal{F}$ is not satisfiable.
For this we will prove that if $\mathcal{F}$ is satisfiable then there is no manipulation,
and that if $\mathcal{F}$ is  unsatisfiable then there is manipulation.

If $\mathcal{F}$ is satisfiable then $G'$ has a partition such that each vertex has at least $4$ neighbors, and if $A_4$ will be in the coalition with $r$ we will get a minimum degree of $4$, which is the maximum that is possible since there are vertices in the graph that have only $4$ neighbors. 
In this case the organizer must choose a partition such that $m$ will be with $r$ and with all of his neighbors in $A_4$, in order for the minimum degree to still be $4$. Hence there is no manipulation.

On the other hand, if the given formula is not satisfiable, then the maximum that the organizer can achieve is $3$ neighbors for each vertex. Thus, he can choose a partition such that $m$ and $r$ will not be in the same coalition.
Then, $m$ has a \LBM:
\begin{enumerate}
    \item Manipulation by removing edges: $m$ can remove his edge to the vertex $b$ in $A_4$, and then $m$ must be in the same coalition with $r$ in order to achieve a minimum degree of $3$. $m$ must be with all of the vertices in $A_4$, since $m$ must get his $2$ other neighbors from $A_4$ in order to have $3$ neighbors. His neighbors must get $2$ more neighbors from $A_4$ and then $b$, and one more vertex must be with them in order to get $3$ neighbors.
    In this way $m$ gets all his neighbors, hence this is a \LBM.
    \item Manipulation by adding edges: $m$ can add edges to the $c_j$, and then there is a partition that achieves a minimum degree of $4$. In a \ME\ partition of $G(m)$, in order to achieve minimum degree of $4$, the $c_j$ vertices, must be with $r$, and there is at least one clause vertex that is not satisfied, hence $m$ must with it, and then $m$ will get $r$ in his coalition. In addition, $m$ must be with $A_4$ since there is a vertex in $A_4$ that has only $3$ neighbors in addition to $m$. In this way $m$ gets all of his neighbors and hence this is a \LBM.
\end{enumerate}
Note that this \LBM\ is \SAM.
\end{proof}

\subsection{Proof of Theorem \ref{thorem:sam_max_egal_np_directed_improvment}}
\begin{reminder}{\ref{thorem:sam_max_egal_np_directed_improvment}}
Given a graph $G$, a manipulator $m^- \in A$, and a report $E^R(m)$, deciding whether $I_{\SAM}(E^R(m))>0$ when the objective is \ME\ is co-NP-hard. 
\end{reminder}
The reduction is from the complementary problem of 3-SAT. 
    Given a Boolean CNF formula $\mathcal{F}$, we construct the same graph structure as in the proof of Theorem \ref{thorem:sam_max_egal_np_directed}. 
    In addition, we set $E^R(m)$ to the report in which $m$ from the proof of  Theorem \ref{thorem:sam_max_egal_np_directed}. Following the same claims, we get that $\mathcal{F}$ is unsatisfiable if and only if $I_{\SIM}(E^R(m)) > 0$.

\section{An Improved Algorithm for Manipulation By Removing Edges with \MU}
In the paper we presented Algorithm \ref{alg:algorithm_max_util}, which is a general XP algorithm for finding any type of optimal manipulation, with any type of manipulator. Here we present  Algorithm \ref{alg:algorithm_mu_m-}, which is suitable only for $m^-$, and it slightly improves the running time of 
Algorithm \ref{alg:algorithm_max_util}.

The correctness of the algorithm is straightforward, since $m$ should remove only edges that are part of a $k$-$cut$ that $m$ wants that the organizer will choose. In addition $m$ can remove edges only from the wanted $k$-$cut$ in order to achieve a manipulation. Hence he should not remove more edges then he loses in the partition of $G$.  
For each such manipulation we check the improvement and return the optimal manipulation (if exist). 
 As for the running time, note that the number of edges that $m$ loss in the partition of $G$ is bounded by the cost of the $min$-$k$-$cut$, hence in this case, it gets better complexity to check each possible manipulation. i.e., set of edges $E^R(m^-)$  of size at most $N(m^-) - LB(G,m^-)-1$, for each such set it  checks if $E'$ is better than $manip$, and thus it needs to iterate over all the $min$-$k$-$cut$s of $G(m^-)$. Computing all the $min$-$k$-$cut$ takes at most $\mathcal{O}(n^{2k})$ \cite{gupta2019number}. 
    Overall, the running time is $\mathcal{O}(n^{\frac{1}{2}maxSize + 2k)})$, and  $maxSize = 2\cdot mim$-$k$-$cut$.

 \begin{algorithm}
\caption{Compute manipulation for $m^-$}
\label{alg:algorithm_mu_m-}
\textbf{Input}: $G=(A,E) , m^-\in A$\\
\textbf{Output}: $E^R(m^-)$.

\begin{algorithmic}[1] 

\STATE $manip \gets E(m^-)$
\STATE $max-size \gets N(m^-) - LB(G,m^-)-1$
\FORALL{possible report $E^R(m^-)$ of size at most $max-size$}
\IF {$E^R(m^-)$ is better then $manip$}
\STATE $manip \gets E^R(m^-)$
\ENDIF
\ENDFOR
\IF {$manip = E(m^-)$}
\RETURN No manipulation
\ENDIF
\RETURN $manip$

\end{algorithmic}
\end{algorithm}

\section{The Complexity of the Manipulation and the Improvement Problems in Undirected Graphs}
\subsection{Definitions}

\label{def_undirected_graphs}
 When the organizer decides to build an undirected graph, he needs to choose a policy of how to handle inconsistencies, i.e., if  $(b,a) \in E^R(b)$ but $(a,b) \notin E^R(a)$. Clearly, there are two possible options:
\begin{enumerate}
    \item \emph{Weak-edges}: the organizer decides that $(a,b) \in E$ if either $(b,a) \in E^R(b)$ or $(a,b) \in E^R(a)$.
    \item  \emph{Strong-edges}: the organizer decides that $(a,b) \in E$ only when both $(b,a) \in E^R(b)$ and $(a,b) \in E^R(a)$.
\end{enumerate} 
The weak-edges policy is implemented in scenarios involving communication issues, where the organizer acknowledges the possibility that agent $a$ may report on agent $b$ as a friend, but this information may not reach the organizer due to communication issues.

The strong-edges policy is implemented when the organizer recognizes that individuals may have varying perceptions of friendship. Under this policy, a friendship is considered valid only when both parties involved mutually consider each other as friends. 
If $G$ is an undirected graph, then a manipulation by adding edges is relevant only when the organizer uses the weak-edges policy. 
On the other hand, a manipulation by removing edges is relevant only
when the organizer uses the strong-edges policy. Therefore, a manipulation in which edges are both added and deleted is impossible when $G$ is an undirected graph. 
Note that we assume that the manipulator is familiar with how the organizer builds the graph from the reports.

 Table \ref{tab:susceptible2} summarizes the susceptible results from \cite{waxman2021manipulation} for undirected graph in full information setting.

\begin{table}[]
    \centering
    \begin{tabular}{c|c|c}
         & $Add$ & $Remove$\\
         \textbf{Max-Util} & \SIM & \SIM \\
         \textbf{At-Least-1} & SIM & \LBM, no \UBM \\
         \textbf{Max-Egal} & LBM, UBM, no WIM & \SIM 
    \end{tabular}
    \caption{Summary of susceptible results from \cite{waxman2021manipulation} for undirected graphs in full information setting.}
    \label{tab:susceptible2}
\end{table}




\subsection{\ME}
In undirected graphs, \ME\ is susceptible to \LBM\ and to \UBM, but not to \WIM\ by $m^+$. It is also susceptible to \SIM\ by $m^-$  \cite{waxman2021manipulation}.
We show that deciding whether an \LBM\ exists for a given instance is computationally hard. We conjecture that this problem is hard even for the other manipulation types.
\begin{theorem}
Given an undirected graph $G$, and a manipulator $m \in A$, deciding whether an \LBM\ exists when the objective is \ME\ is co-$NP$-hard.  
\label{thorem:max_egal_np_undirected}
\end{theorem}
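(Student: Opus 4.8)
The plan is to transplant the reduction behind Theorem~\ref{thorem:sam_max_egal_np_directed} to the undirected setting, reducing from the complement of $3$-SAT. Given a CNF formula $\mathcal{F}$, I will construct an undirected graph $G$ with a distinguished manipulator vertex $m$ such that $m$ has an \LBM\ --- by adding edges under the weak-edges policy, or by removing edges under the strong-edges policy --- if and only if $\mathcal{F}$ is \emph{unsatisfiable}. Since $3$-SAT is $NP$-complete, this yields co-$NP$-hardness.

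I start from the graph $G'$ that \cite{bang2019degree} use in the proof of their Theorem~3.6 with $k_1=k_2=4$: a ring graph encoding $\mathcal{F}$ (as in Figure~\ref{fig:ring_clauses}), the clause vertices, the switch-linking vertices $u_i,u'_i$, a central vertex $r$ attached to the set $H$ of all $u_i$, $u'_i$ and clause vertices through the rigid $W_4$ gadget (Figure~\ref{fig:W4_structure}), and a copy of the $Z_4$ gadget (Figure~\ref{fig:Z4_structure}) hung on every ring vertex and every vertex of $H$. The facts I import are that $G'$ has a $2$-coalition partition in which every vertex has at least $4$ neighbours inside its coalition iff $\mathcal{F}$ is satisfiable, and that in every \ME-optimal partition the $W_4$ and $Z_4$ gadgets pin $r$ to a coalition containing all of $H$. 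I then attach the $A_4$ gadget (Figure~\ref{fig:A4_structure}), let $m$ be its $a$-vertex, and add the single undirected edge $\{m,r\}$, so that $m$ has exactly $4$ real neighbours --- its three $A_4$-neighbours and $r$ --- and hence $u(m,\mathcal{P})\le 4$ in every partition $\mathcal{P}$.

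If $\mathcal{F}$ is satisfiable, then $G'$, and hence $G$ (whose $Z_4$ and $A_4$ gadgets contain vertices of real degree exactly $4$), has \ME-value $4$; reaching it forces $m$ to be co-located with all four of its real neighbours, so $u(m,\mathcal{P})=4$ in \emph{every} \ME-optimal partition and $LB_{\ME}(G,m)=4$. As real utilities are computed on $G$ and $m$ has only $4$ real neighbours, no report can push $LB_{\ME}(G(m),m)$ above $4$, so no \LBM\ exists. If $\mathcal{F}$ is unsatisfiable, the best achievable minimum degree is $3$, and there is an \ME-optimal partition of $G$ separating $m$ from $r$, so $LB_{\ME}(G,m)\le 3$. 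But $m$ can manipulate: an $m^-$ removes the edge to the $b$-vertex of $A_4$, leaving $m$ with only three reported neighbours, which forces every \ME-optimal partition of $G(m^-)$ to put $m$ with $r$ and all of $A_4$; an $m^+$ adds edges to the clause vertices, so that reaching minimum degree $4$ in $G(m^+)$ requires placing $m$ with $r$ (since all clause vertices are pinned to $r$'s coalition and, as $\mathcal{F}$ is unsatisfiable, at least one of them is left unsatisfied by the ring cycles and hence needs $m$'s extra edge) and with all of $A_4$. In either case $LB_{\ME}(G(m),m)=4>LB_{\ME}(G,m)$, an \LBM.

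I expect the main obstacle to be the gadget bookkeeping behind the two ``forcing'' claims: I must check (i) that attaching $A_4$ and the edge $\{m,r\}$ to $G'$ creates no new \ME-optimal partition of $G$ that already gives $m$ four neighbours when $\mathcal{F}$ is unsatisfiable --- otherwise $LB_{\ME}(G,m)$ would be $4$ and the manipulation would not strictly improve it --- and (ii) that the rigidity of $W_4$ and $Z_4$ persists in the manipulated graphs $G(m^+)$ and $G(m^-)$, so that their \ME-optimal partitions are exactly the intended ones. Both boil down to the same local degree counts already carried out for $W_4$, $Z_4$ and $A_4$ in the proof of Theorem~\ref{thorem:sam_max_egal_np_directed}, re-run over an undirected graph; I also use that, since $G$ is undirected, any manipulation either only adds or only removes edges, so handling $m^+$ and $m^-$ separately is exhaustive.
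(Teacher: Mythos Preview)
Your proposal is correct and follows essentially the same route as the paper's own proof: the paper also reduces from the complement of $3$-SAT by taking the \cite{bang2019degree} graph $G'$ (Theorem~3.6 with $k_1=k_2=4$), attaching the $A_4$ gadget with $m$ as its $a$-vertex together with the edge $\{m,r\}$, and then arguing the satisfiable/unsatisfiable cases exactly as you do, including the specific $m^-$ (remove the edge to $b$) and $m^+$ (add edges to the clause vertices) manipulations. Your additional discussion of the gadget-bookkeeping checks (i) and (ii) is a welcome bit of care, but the underlying argument is the same as the paper's.
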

\begin{proof}
We show a reduction from the complementary problem of 3-SAT.
Given a Boolean CNF formula $\mathcal{F}$,  we build a graph $G$ with a vertex $m$ that we identify as the manipulator, such that $m$ has a manipulation if and only if  $\mathcal{F}$ is not satisfiable.
First we add to $G$ the graph $G'$ in the same structure as defined by ~\cite{bang2019degree} in their proof of Theorem 3.6 for the specific case of $k_1 =k_2 =4$. 
They use a ring graph which encodes $\mathcal{F}$ and adds vertex $c_i$ for each clause (as in Figure \ref{fig:ring_clauses}). Add  vertices $u_1, u_2 \ldots u_n$ and connect each $u_i$ by two edges $(u_i,a_{i,1})$ and $(u_i, a_{i,2})$, where $a_{i,1}$ and $a_{i,2}$ are vertices from the ring. 
Similarly, add the vertices $u'_1, u'_2 \ldots u'_n$, and connect each $u'_i$ by two edges $(u'_i,b_{i,1})$ and $(u'_i, b_{i,2})$.

In addition, they added one more vertex, $r$, that is connected to all vertices $H=\{u_1, u_2 \ldots u_n, u'_1, u'_2 \ldots u'_n, c_1, c_2, \ldots, c_m\}$, however the connection is not by edge but via three new vertices - i.e., the $W_4$ structure (depicted in Figure \ref{fig:W4_structure}) such that $r$ is $u$ and $v$ is a vertex from $H$. This manner of connection ensures that in each partition in which every vertex gets at least $4$ neighbors, $r$ will be with all of the vertices in $H$.

For each vertex, $v$, such that $v\in H$ or $v$ is part of the ring, they add to the graph $5$ new vertices. These vertices are in the structure of $Z_4$ (illustration in \ref{fig:Z4_structure}) and the $'z'$ vertex of the structure $Z_4$ is the vertex $v$.

For our setting, we define the $A_4$ graph structure as depicted in Figure \ref{fig:A4_structure} and set the manipulator $m$ to be the $a$ vertex of $A_4$. We also connect $m$ to the vertex $r$ by an edge.
It follows from the proof of  Theorem 3.6 in \cite{bang2019degree}, when we chose $k_1 =k_2 =4$, that the graph $G'$ has a partition such that each vertex has at least $4$ neighbors if and only if $\mathcal{F}$ is satisfiable. 
Based on this we will prove that there is a manipulation if and only if $\mathcal{F}$ is not satisfiable.
For this we will prove that if $\mathcal{F}$ is satisfiable then there is no manipulation,
and that if $\mathcal{F}$ is  unsatisfiable then there is manipulation.

If $\mathcal{F}$ is satisfiable then $G'$ has a partition such that each vertex has at least $4$ neighbors, and if $A_4$ will be in the coalition with $r$ we will get a minimum degree of $4$, which is the maximum that is possible since there are vertices in the graph that have only $4$ neighbors. 
In this case the organizer must choose a partition such that $m$ will be with $r$ and with all of his neighbors in $A_4$, in order for the minimum degree to still be $4$. Hence, there is no manipulation.

On the other hand, if the given formula is not satisfiable, then the maximum that the organizer can achieve is $3$ neighbors for each vertex. Thus, he can choose a partition such that $m$ and $r$ will not be in the same coalition.
Then, $m$ has an \LBM, which depends on the type of the manipulator:
\begin{enumerate}
    \item Manipulation by removing edges: $m$ can remove his edge to the vertex $b$ in $A_4$, and then $m$ must be in the same coalition with $r$ in order to achieve a minimum degree of $3$. $m$ must be with all of the vertices in $A_4$, since $m$ must get his $2$ other neighbors from $A_4$ in order to have $3$ neighbors. His neighbors must get $2$ more neighbors from $A_4$ and then $b$, and one more vertex must be with them in order to get $3$ neighbors.
    In this way $m$ gets all his neighbors, hence this is a \LBM.
    \item Manipulation by adding edges: $m$ can add edges to the $c_j$, and then there is a partition that achieves a minimum degree of $4$. In a \ME\ partition of $G(m)$, in order to achieve minimum degree of $4$, the $c_j$ vertices, must be with $r$, and there is at least one clause vertex that is not satisfied, hence $m$ must with it, and then $m$ will get $r$ in his coalition. In addition, $m$ must be with $A_4$ since there is a vertex in $A_4$ that has only $3$ neighbors in addition to $m$. In this way $m$ gets all of his neighbors and hence this is an \LBM.
\end{enumerate}
\end{proof}

\subsection{\ALO}

\subsubsection{Manipulation by removing edges}

In undirected graphs, the \ALO\ objective is susceptible to \LBM\ by removing edges \cite{waxman2021manipulation}.
We show that an \LBM\ by removing edges can be computed in polynomial time.

For simplicity, we consider first the case of $k=2$. We later show how to extend our result for the case of $k>2$.  

For $a\in A$, let $N^1(a)$ denote the set of vertices where $a$ is their sole neighbor, that is, $v\in N^1(a)$ if and only if $N(v) = \{a\}$.
Similarly, for $a, b\in A$, let $N^1(a,b) $   be the set of vertices for which $a$ and $b$ are their only neighbors, that is, $v\in N^1(a,b)$ if and only if  $N(v) = \{a,b\}$.

Let $V(a,b)$ be the set of vertices that, in any \ALO\ partition in which  both  $a$ and $b$  are in the same coalition, are  guaranteed to be  part of that coalition. Note, we define that $a , b \in V(a,b)$. 
We prove the following lemma: 
\begin{lemma}
In undirected graphs, for every $a\in A$, it holds that $V(m^-,a) = N^1(a) \cup N^1(m^-) \cup N^1(a,m^-)$.  
Moreover, there exists an \ALO\ partition in which  both  $a$ and $m^-$  are in the same coalition, but every $b \notin  V(m^-,a)$, is in the second coalition.
\label{lemma:ALO_undirected}
\end{lemma}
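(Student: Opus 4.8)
The statement has two parts. First, I need to show $V(m^-,a) = N^1(a) \cup N^1(m^-) \cup N^1(a,m^-)$. Second, I need to exhibit a concrete \ALO\ partition witnessing that every vertex outside this set can be put in the other coalition. I would prove the two inclusions separately, then build the witness partition; the witness partition actually does most of the work for both the ``$\subseteq$'' direction and the ``moreover'' claim.

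\textbf{Step 1: the easy inclusion $N^1(a) \cup N^1(m^-) \cup N^1(a,m^-) \subseteq V(m^-,a)$.} Take any \ALO\ partition $\mathcal{P} = \{C_1, C_2\}$ with $a, m^- \in C_1$. If $v \in N^1(m^-)$, then $N(v) = \{m^-\}$, so $v$'s only possible friend is $m^-$; since $u(v,\mathcal{P}) > 0$ is required, $v$ must be in $C_1$. The same argument applies to $v \in N^1(a)$, and to $v \in N^1(a,m^-)$ (whose only neighbors are $a$ and $m^-$, both in $C_1$). Hence all such $v$ lie in $C_1$, i.e., they belong to $V(m^-,a)$. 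Also $a, m^- \in V(m^-,a)$ by definition, consistent with the set on the right.

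\textbf{Step 2: construct the witness partition, which also gives the reverse inclusion.} Let $U = N^1(a) \cup N^1(m^-) \cup N^1(a,m^-) \cup \{a, m^-\}$ and set $C_1 = U$, $C_2 = A \setminus U$. I must check this is a valid \ALO\ partition (assuming $C_2 \neq \emptyset$; if $C_2 = \emptyset$ the ``second coalition'' claim is vacuous and there is nothing outside $V$, so the statement still holds — I should note this edge case, perhaps dismissing it since the interesting case has other vertices). For $C_1$: $a$ and $m^-$ — I need each to have a friend in $C_1$. The subtle point here is that $a$ must have \emph{some} neighbor, and that neighbor must be forced into $C_1$; this is where I expect the main obstacle, because it is \emph{not} automatically true that $a$ has a neighbor in $U$. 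I believe the intended setting guarantees each relevant vertex has positive out-degree (the problem only makes sense when \ALO\ is feasible), and more importantly the edge $(a,m^-)$ or $(m^-,a)$ presumably exists — re-reading the lemma, since we are asking about partitions where $a$ and $m^-$ are together, the natural reading is that $m^- \in N(a)$ and $a \in N(m^-)$ (this is the undirected-graph section, and $m^-$'s candidacy as manipulator presumably hinges on adjacency to $a$). If $a$ and $m^-$ are adjacent, then they are friends of each other within $C_1$, so both are covered. Every vertex in $N^1(m^-)$ has $m^- \in C_1$ as a friend; every vertex in $N^1(a)$ has $a$; every vertex in $N^1(a,m^-)$ has both. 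So $C_1$ is internally \ALO-valid.

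\textbf{Step 3: validity of $C_2$ and conclusion.} For $C_2 = A \setminus U$: take $w \in C_2$. Then $w \notin N^1(a) \cup N^1(m^-) \cup N^1(a,m^-)$ and $w \neq a, m^-$, so $w$ has at least one neighbor other than (exclusively) $a$ and/or $m^-$ — precisely, $N(w) \not\subseteq \{a\}$, $N(w) \not\subseteq \{m^-\}$, and $N(w) \not\subseteq \{a, m^-\}$, hence $N(w)$ contains some vertex $w' \notin \{a, m^-\}$. I need $w' \in C_2$, i.e., $w' \notin U \setminus \{a, m^-\}$. This requires showing no vertex of $N^1(a) \cup N^1(m^-) \cup N^1(a,m^-)$ is a neighbor of $w$ unless that forces a contradiction — but actually a cleaner route: since $w'$ has $w$ as a neighbor and $w \neq a, m^-$, $w'$ cannot be in $N^1(a)$ (whose sole neighbor is $a$), nor $N^1(m^-)$, nor $N^1(a,m^-)$; so $w' \notin U$, giving $w' \in C_2$. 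Thus $w$ has a friend in $C_2$ (using symmetry of the undirected edge), so $C_2$ is \ALO-valid. This partition has $a, m^-$ together, all of $U$ in $C_1$, and every vertex outside $U$ in $C_2$; so no $b \notin U$ is forced into $a$ and $m^-$'s coalition, proving $V(m^-,a) \subseteq U$ and establishing the ``moreover'' clause simultaneously. Combining with Step 1 yields equality.

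\textbf{Main obstacle.} The delicate point is ensuring $a$ (and $m^-$) actually has a friend inside $C_1$ — i.e., handling the adjacency/feasibility assumptions correctly, and correctly reading whether $m^- \in N(a)$ is given. I would state explicitly at the start of the proof the standing assumption that we only consider \ALO-feasible instances and that $a$ and $m^-$ are mutually adjacent (or otherwise argue that in an \ALO\ partition putting them together, whatever forces a friend for $a$ must already lie in $N^1(a) \cup N^1(a,m^-)$ when $a$'s neighbors are that restricted — but if $a$ has a neighbor outside $U$, then $a$ itself need not be in $V(m^-,a)$, contradicting the definition $a \in V(m^-,a)$; so in fact the definition forces the adjacency, which I should spell out). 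Everything else is bookkeeping about which vertices have which restricted neighborhoods.
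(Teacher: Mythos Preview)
Your approach matches the paper's: establish the easy inclusion by noting that any vertex with neighborhood contained in $\{a,m^-\}$ is forced into their coalition, then build the witness partition $C_1=U$, $C_2=A\setminus U$ and verify each side is \ALO-valid (your Step~3 argument that the neighbor $w'\notin\{a,m^-\}$ of $w\in C_2$ must itself lie in $C_2$, because $w'$ has $w\neq a,m^-$ as a neighbor and hence cannot belong to any $N^1$ set, is in fact more careful than the paper, which simply asserts ``$b,d$ can be in the second coalition'' without checking $d\notin U$). Regarding your flagged obstacle: the adjacency of $a$ and $m^-$ is genuinely needed for $C_1$ to be \ALO-valid and the paper's proof glosses over it as well, but the lemma is only ever invoked in Algorithm~\ref{alg:algorithm_ALO} with $a\in N(m^-)$, so in the intended application $a$ and $m^-$ are mutual friends inside $C_1$ and the issue dissolves.
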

\begin{proof}
Assume that there is an \ALO\ partition in $G$.  We will prove both : (1) in any \ALO\ partition in which  both  $a$ and $m^-$  are in the same coalition, every vertex  $c \in  N^1(a) \cup N^1(m^-) \cup N^1(a,m^-)$, guaranteed to be  part of that coalition.
    (2) exist an \ALO\ partition in which  both  $a$ and $m^-$  are in the same coalition, but every $b \notin  N^1(a) \cup N^1(m^-) \cup N^1(a,m^-)$, is in the second coalition.
    
    (1) Let $c \in  N^1(a) \cup N^1(m^-) \cup N^1(a,m^-)$. The vertex $c$ has only one or two neighbors and $N(c) \subseteq \{a,m^-\}$. Hence, in every partition $\mathcal{P}$ such that $u(c,\mathcal{P}) > 0$, it holds that if $c \in C_1$, then there exists $v\in N(c)$ such that $v \in C_1$. However, $N(c) \subseteq \{a,m^-\}$ and $m^-$ and $a$ are in the same coalition and hence it holds that $a, m^-$ and $c$ are all in $C_1$.
    Therefore, $c \in V(m^-,a)$ and we get that $N^1(a) \cup N^1(m^-) \cup N^1(a,m^-) \subseteq V(m^-,a)$. 
    (2) Let $b \notin  N^1(a) \cup N^1(m^-) \cup N^1(a,m^-)$. 
    Hence, $b$ has a neighbor $d$, such that $d\not\in \{m^-,a\}$. Since $G$ in an undirected graph, $d$ has $b$ as his neighbor,  and hence $b, d$ can be in the second coalition without $m^-$ and $a$.  And we get that $b\notin V(m^-,a)$.
    Moreover, since it holds for every $b\notin  N^1(a) \cup N^1(m^-) \cup N^1(a,m^-)$ we gets that all theses vertices can be together in the second coalition without $m^-$ and $a$.
\end{proof}





\begin{algorithm}
\caption{LBM with the \ALO\ objective in undirected graphs}
\label{alg:algorithm_ALO}
\begin{algorithmic}[1] 
\REQUIRE $G=(A,E) , m^-\in A$\\
\ENSURE $E^R(m^-)$
\IF { $N^1(m^-) \neq \emptyset$} 
\RETURN No manipulation
\ENDIF
\STATE $maxV \gets -1$, $maxA \gets m^-$
\STATE $minV \gets |A|+1$
\FORALL{$a\in N(m^-)$}
\STATE compute $V(m^-,a)$
\IF {$|N(m^-, V(m^-,a))| \geq maxV$ and $|V(m^-,a)|<|A|$}  
\STATE $maxV \gets |N(m^-, V(m^-,a))|$
\STATE $maxA \gets a$
\ENDIF
\IF {$minV \geq |N(m^-, V(m^-,a))|$}
\STATE $minV \gets |N(m^-, V(m^-,a))|$
\ENDIF
\ENDFOR

\IF {$minV = maxV$}
\RETURN No manipulation
\ENDIF
\RETURN $E^R(m^-) = \{(m^-,maxA)\}$ 


\end{algorithmic}
\end{algorithm}

Our algorithm, Algorithm \ref{alg:algorithm_ALO}, works as follows. It first searches for a vertex $a\in N(m^-)$ that has the largest number of  $m^-$ neighbors in the set $V(m^-,a)$. 
If such a vertex is found, then $m^-$ reports only the edge $(m^-,a)$. 
It might be that for every $a\in N(m^-)$, the number of  $m^-$ neighbors in the set $V(m^-,a)$ is the same.
In such a case, the algorithm reports that there is no manipulation.

We now show the correctness of the algorithm. Intuitively, 
if an \ALO\ partition exists in $G(m^-)$, then it guarantees that $m$ obtains both vertex $a$ and all the vertices in the set $V(m^-,a)$.
Note that, if there is no \ALO\ partition in $G$, the manipulator $m^-$ cannot perform a manipulation since he can only remove edges.  Consequently, in $G(m^-)$, there still will not be an \ALO\ partition.
\begin{theorem}
In undirected graphs, Algorithm \ref{alg:algorithm_ALO} finds an optimal \LBM\ by $m^-$ with the \ALO\ objective in polynomial time.
\label{thm:ALO_remove}

\end{theorem}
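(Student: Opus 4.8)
The plan is to reduce the optimal-\LBM\ problem to a search over the at most $n$ single-edge reports $E^R(m^-)=\{(m^-,a)\}$ with $a\in N(m^-)$, exploiting the closed form $V(m^-,a)=N^1(a)\cup N^1(m^-)\cup N^1(a,m^-)$ supplied by Lemma~\ref{lemma:ALO_undirected}. I will establish four facts: (i) the exact value of $LB_{\ALO}(G(m^-),m^-)$ for a single-edge report; (ii) the analogous formula for $LB_{\ALO}(G,m^-)$; (iii) that no manipulation, however many edges it deletes, beats the best single-edge report; and (iv) that $N^1(m^-)\neq\emptyset$ rules out every \LBM. Together these are precisely what Algorithm~\ref{alg:algorithm_ALO} computes, so correctness follows, and the running time is manifestly polynomial.

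For (i) and (ii): fix $a\in N(m^-)$ and let $G(m^-)$ be obtained by reporting only $(m^-,a)$, so $N_{G(m^-)}(m^-)=\{a\}$. Since only edges incident to $m^-$ are removed, a short case analysis on $N^1(a)$, $N^1(m^-)$, $N^1(a,m^-)$ shows that, when $N^1(m^-)=\emptyset$, the set of vertices forced to share $m^-$'s coalition with $a$ in any \ALO\ partition of $G(m^-)$ is again $V(m^-,a)$. If $|V(m^-,a)|=|A|$, then forcing $a$ into $m^-$'s coalition forces that coalition to equal $A$, impossible for $k=2$, so $O_{\ALO}(G(m^-))=\emptyset$ and the manipulator's value is $0$. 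If $|V(m^-,a)|<|A|$, then every \ALO\ partition of $G(m^-)$ places $a$ — hence all of $V(m^-,a)$ — with $m^-$, giving $u(m^-,\mathcal P)\ge |N(m^-,V(m^-,a))|$; conversely the partition whose coalitions are $V(m^-,a)$ and its complement is \ALO\ (it is \ALO\ in $G$ by the ``moreover'' clause of Lemma~\ref{lemma:ALO_undirected}, and deleting edges at $m^-$ cannot break feasibility since $m^-$ still has $a$ and every vertex of $N^1(a,m^-)$ still has $a$), and it attains equality. Hence $LB_{\ALO}(G(m^-),m^-)=|N(m^-,V(m^-,a))|$. Applying the same Lemma directly in $G$ shows every \ALO\ partition groups $m^-$ with some $a\in N(m^-)$ having $|V(m^-,a)|<|A|$, while the complement-of-$V$ partition attains the minimum, so $LB_{\ALO}(G,m^-)=\min\{\,|N(m^-,V(m^-,a))| : a\in N(m^-),\ |V(m^-,a)|<|A|\,\}$ (and the value is $0$, i.e.\ $O_{\ALO}(G)=\emptyset$, exactly when no such $a$ exists).

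For (iii): let $E^-(m^-)$ be any manipulation with reported neighbourhood $N^R(m^-)$. If $O_{\ALO}(G(m^-))\neq\emptyset$ it must contain some $a\in N^R(m^-)$ with $|V(m^-,a)|<|A|$; the structural identity above again gives that the forced set is $V(m^-,a)$, so the partition with coalitions $V(m^-,a)$ and its complement is \ALO\ in $G(m^-)$ and witnesses $LB_{\ALO}(G(m^-),m^-)\le |N(m^-,V(m^-,a))|\le \max_{a':\,|V(m^-,a')|<|A|}|N(m^-,V(m^-,a'))|$; thus no manipulation outperforms the best single-edge report. For (iv): if some $v\in N^1(m^-)$ exists, any manipulation either deletes $(m^-,v)$, isolating $v$ and forcing $O_{\ALO}(G(m^-))=\emptyset$, or keeps it, in which case $v$ is still $m^-$'s only neighbour of degree one and one checks $V_{G(m^-)}(m^-,v)=(N^1(m^-)\cap N^R(m^-))\cup\{m^-\}$, giving $LB_{\ALO}(G(m^-),m^-)\le |N^1(m^-)|\le LB_{\ALO}(G,m^-)$ because $N^1(m^-)\subseteq V(m^-,a)$ for every $a$. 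Either way there is no \LBM, which matches the algorithm's first line. Combining (i)--(iv): when $N^1(m^-)=\emptyset$, an optimal \LBM\ (if one exists) is $\{(m^-,a^*)\}$ for $a^*$ maximising $|N(m^-,V(m^-,a))|$ over valid $a$, and it is a genuine \LBM\ iff this maximum strictly exceeds the corresponding minimum — exactly what Algorithm~\ref{alg:algorithm_ALO} outputs, returning ``No manipulation'' when $N^1(m^-)\neq\emptyset$ or when $minV=maxV$. Running time: the loop runs over $|N(m^-)|\le n$ vertices, and by Lemma~\ref{lemma:ALO_undirected} each $V(m^-,a)$ and the count $|N(m^-,V(m^-,a))|$ are computable in $O(n^2)$ time, so the algorithm runs in $O(n^3)$. (The case $k>2$ follows from the extension sketched after the theorem.)

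The main obstacle is fact (iii): showing that no many-edge manipulation can beat a single reported edge. This hinges on the identity that deleting edges incident to $m^-$ never enlarges the set of vertices forced to accompany $m^-$ once $N^1(m^-)=\emptyset$, together with the explicit ``bad'' partition provided by the ``moreover'' clause of Lemma~\ref{lemma:ALO_undirected}; getting this identity exactly right across all choices of $N^R(m^-)$ is the delicate part. A secondary point of care is the bookkeeping for the degenerate cases $O_{\ALO}(G)=\emptyset$ and $|V(m^-,a)|=|A|$, where one must verify that $maxV$ and $minV$ are maintained consistently so that the ``No manipulation'' output is triggered correctly.
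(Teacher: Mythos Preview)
Your proof is correct and follows essentially the same approach as the paper's, namely reducing the search for an optimal \LBM\ to single-edge reports via the closed form for $V(m^-,a)$ supplied by Lemma~\ref{lemma:ALO_undirected}. You are in fact more thorough than the paper: the paper's proof does not explicitly argue that a many-edge deletion cannot outperform the best single-edge report (your fact~(iii)) nor derive the closed form $LB_{\ALO}(G,m^-)=minV$ (your fact~(ii)), both of which you correctly supply.
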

\begin{proof}
If $m^-$ has a neighbor $a$ such that $a \in N^1(m^-)$, and $m^-$ does not report on the edge $(m^-,a)$, it would result in a graph $G(m^-)$ such that $O_{\ALO}(G(m^-)) = \emptyset$.
Hence, $m^-$ must report on the edge $(m^-,a)$. In such a case, a  partition $C_1,C_2$ such that $C_1 = \{m^-\} \cup N^1(m^-)$ and $C_2 = A\setminus C_1$ is in $O_{\ALO}(G(m^-))$. In addition, in every \ALO\ partition in $G$  all vertices from $N^1(m^-)$ were in the same coalition as $m^-$. Therefore, the $LB(G,m^-)=LB(G(m^-),m^-)$ and there is no manipulation.

Consider now the case that $N^1(m^-) = \emptyset$, Algorithm \ref{alg:algorithm_ALO}, compute for each $a\in N(m^-)$ the set $V(m^-,a)$, and chose to report on only one edge to the vertex that the number of  $m^-$ neighbors in the set $V(m^-,a)$ is highest.

As proven in Lemma \ref{lemma:ALO_undirected}, if there is an \ALO\ partition in $G(m^-)$, it is guaranteed that $m^-$ will be with all vertices of $V(m^-, maxA)$, but all other vertices have neighbors in $A\setminus V(m^-,maxA)$ and hence can be in the second coalition.
In Algorithm \ref{alg:algorithm_ALO}, $m^-$ verifies that $V(m^-,a) \neq A$, ensuring that $O_{\ALO}(G(m^-))$ is not empty.  
In this way, Algorithm \ref{alg:algorithm_ALO}, guarantees a maximum improvement of the lower bound.

Algorithm \ref{alg:algorithm_ALO} operates in polynomial time since the computation of $V(m^-,a)$ is a polynomial-time operation and it is performed only once for each vertex.
\end{proof}
The algorithm can be extended for any $k>2$, as follows.
The set $V(m^-,a)$ is computed in the same way. 
However,  instead of verifying that $|V(m^-,a)| < |A|$, in line $8$ of Algorithm \ref{alg:algorithm_ALO}, the manipulator needs to ensure the existence of an \ALO\ $(k-1)$-partition of the rest of the graph. This is equivalent to the maximum matching problem, which can be solved in polynomial time (e.g., using the algorithm from \cite{micali1980v}). 

\subsubsection{Manipulation by adding edges}

In undirected graphs, the \ALO\ objective is susceptible to \SIM\ by adding edges \cite{waxman2021manipulation}.
Note that when the objective is \ALO,  since $m^+$ can only add edges, every \ALO\ partition in $G$ is also \ALO\ partition in $G(m^+)$. So $m^+$ can only transform non-\ALO\ partitions into \ALO\ partitions. We will prove that a manipulation is possible only when there was no \ALO\ partition initially, and the manipulation results in a graph with an \ALO\ partition in which $m^+$ must get at least one real neighbor.
\begin{lemma}
\sloppy{By $m^+$, the only possible manipulation is when there was no \ALO\ partition in $G$, but there is an \ALO\ partition in $G(m^+)$, i.e., $O_{\ALO}(G) = \emptyset$ but $O_{\ALO}(G(m^+)) \not= \emptyset$.} 
\end{lemma}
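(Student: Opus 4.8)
The statement is a \emph{necessary} condition, so the plan is to prove the contrapositive of each conjunct. Two facts from the model are used throughout: every utility $u(a,\mathcal P)$ is evaluated on the true graph $G$, so a fake edge never contributes to anyone's utility; and $m^+$ only adds edges, so $E(G)\subseteq E(G(m^+))$ and hence $O_{\ALO}(G)\subseteq O_{\ALO}(G(m^+))$ --- adding edges can never turn an \ALO\ partition into a non-\ALO\ one. I would first dispose of the easy half. If $O_{\ALO}(G(m^+))=\emptyset$ then, by the convention that all agents have utility $0$ when there is no feasible partition, $LB_{\ALO}(G(m^+),m^+)=UB_{\ALO}(G(m^+),m^+)=0$; since utilities are non-negative, neither bound is strictly larger than its pre-report counterpart, so no \LBM, \UBM, \WIM\ or \SIM\ exists. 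Thus a manipulation forces $O_{\ALO}(G(m^+))\neq\emptyset$. Moreover, if $O_{\ALO}(G)\neq\emptyset$, the inclusion $O_{\ALO}(G)\subseteq O_{\ALO}(G(m^+))$ gives $LB_{\ALO}(G(m^+),m^+)=\min_{\mathcal P\in O_{\ALO}(G(m^+))}u(m^+,\mathcal P)\le\min_{\mathcal P\in O_{\ALO}(G)}u(m^+,\mathcal P)=LB_{\ALO}(G,m^+)$, so the lower bound cannot strictly increase; this already rules out \LBM, \WIM\ and \SIM.

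The remaining and main work is to show that when $O_{\ALO}(G)\neq\emptyset$ the upper bound cannot strictly increase either, i.e.\ $UB_{\ALO}(G(m^+),m^+)\le UB_{\ALO}(G,m^+)$. The plan is: take an arbitrary $\mathcal P^{*}\in O_{\ALO}(G(m^+))$ with $m^+$'s coalition $C_1$, and produce $\mathcal P'\in O_{\ALO}(G)$ with $u(m^+,\mathcal P')=u(m^+,\mathcal P^{*})$. Let $B=\{b : u(b,\mathcal P^{*})=0\}$ (utilities on $G$). If $m^+\in B$ then $u(m^+,\mathcal P^{*})=0\le UB_{\ALO}(G,m^+)$ and there is nothing to prove, so assume $u(m^+,\mathcal P^{*})\ge 1$. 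The structural observations I would establish are: (i) since $\mathcal P^{*}$ is \ALO\ w.r.t.\ $G(m^+)$, each $b\in B$ has a neighbour in its coalition that comes from a newly added edge, and every new edge is incident to $m^+$; hence $B\subseteq C_1$ and $m^+\notin B$; (ii) consequently every $b\in B$ has \emph{no} true neighbour anywhere in $C_1$ (else $u(b,\mathcal P^{*})>0$), so $B$ is independent in $G$, all true friends of $m^+$ that count towards $u(m^+,\mathcal P^{*})$ lie in $C_1\setminus B$, and every true neighbour of each $b\in B$ lies in $C_2\cup\dots\cup C_k$ (such a neighbour exists because $O_{\ALO}(G)\neq\emptyset$ forbids isolated vertices); (iii) each $C_i$ with $i\ge 2$ is incident to no new edge, hence is already \ALO\ w.r.t.\ $G$, and $C_1\setminus B$ is \ALO\ w.r.t.\ $G$ and still contains all of $m^+$'s relevant true friends.

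Given these, I would build $\mathcal P'$ by deleting $B$ from $C_1$ and re-assigning each $b\in B$ to some coalition $C_i$, $i\ge 2$, containing one of $b$'s true friends. No coalition becomes empty (we keep $m^+\in C_1\setminus B$, and $C_2,\dots,C_k$ only grow), and no agent loses a friend (edges are only added to coalitions), so $\mathcal P'\in O_{\ALO}(G)$ and $u(m^+,\mathcal P')=u(m^+,\mathcal P^{*})$. Maximising over $\mathcal P^{*}$ yields $UB_{\ALO}(G,m^+)\ge UB_{\ALO}(G(m^+),m^+)$. Combining with the previous paragraph: if $O_{\ALO}(G)\neq\emptyset$ then neither $m^+$'s lower bound nor his upper bound can strictly increase under any report, so no manipulation of any type exists; contrapositively, a manipulation forces $O_{\ALO}(G)=\emptyset$, which with the easy half gives the lemma. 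I would also remark that the argument is insensitive to the weak/strong-edge convention and transfers verbatim to the directed model (there $B\subseteq\{m^+\}$, so steps (ii)--(iii) are vacuous), and that the lower-bound analysis shows as a by-product that whenever a manipulation exists one has $LB_{\ALO}(G,m^+)=0$ and $LB_{\ALO}(G(m^+),m^+)\ge 1$, i.e.\ $m^+$ obtains a true friend in \emph{every} \ALO\ partition of $G(m^+)$ --- matching the remark preceding the lemma.

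The only non-routine step is the upper-bound direction. Its crux is the realisation in (ii) that a vertex ``rescued'' by a fake edge must have had zero true friends inside $m^+$'s coalition, which pins down exactly where its true friends live and makes the re-homing feasible. The points needing care are that re-homed vertices cannot break \ALO-ness of their receiving coalitions (only additions happen, never deletions), that the receiving coalitions actually contain the required true friend, and that $C_1\setminus B$ stays non-empty --- all of which follow from the observations above.
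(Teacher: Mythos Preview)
Your proof is correct and follows essentially the same approach as the paper's: for the upper-bound direction you relocate the vertices whose \ALO-ness in $m^+$'s coalition relies on a fake edge to coalitions containing one of their true friends, exactly as the paper does. Your write-up is in fact more complete than the paper's --- you handle general $k$ rather than only $k=2$, explicitly dispose of the $O_{\ALO}(G(m^+))=\emptyset$ case, and justify carefully why $C_1\setminus B$ remains \ALO\ (using undirectedness to argue that no vertex in $C_1\setminus B$ can have a true neighbour in $B$) --- but the core relocation argument is the same.
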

\begin{proof}
    Assume, by contradiction, that there exists a graph $G = (V,E)$, a manipulator $m^+ \in V$, and an organizer with the \ALO\ objective, such that there is an \ALO\ partition in $G$, but there is a manipulation such that $LB(G,m^+) < LB(G(m^+),m^+)$ or $UB(G,m^+) < UB(G(m^+),m^+)$.
    First, $LB(G,m^+) < LB(G(m),m^+)$, is not possible since every \ALO\ partition in $G$ remains an \ALO\ partition in $G(m^+)$.  

    Now, assume, by contradiction, that there exists a manipulation $E^+(m^+)$ such that $UB(G,m^+) < UB(G(m),m^+)$, and let $\mathcal{P}=(C_1,C_2)$ be the partition with this new upper bound in $G(m^+)$, where $m^+\in C_1$.
    Let $\mathcal{P}'=(C'_1, C'_2)$ denote the partition that is identical to $\mathcal{P}$ except for the relocation of each vertex  $a\in C'_1$ that $m^+$ add an edge to him, and doesn't have other edges in $C'_1$ to  $C'_2$. The partition $\mathcal{P}'$ is an \ALO\ partition in $G$. (Since there is \ALO\ partition in $G$, it implies that each vertex has at least one neighbor in $G$,  and if a vertex does not have a real neighbor in $C'_1$ it implies that he has in $C'_2$). 
    However, we encounter a contradiction when comparing the utility function $u(m^+, \mathcal{P}')$ = $u(m^+, \mathcal{P})$ as it contradicts the premise that $E^+(m^+)$ increases the upper bound for $m^+$.
    
\end{proof}

\begin{lemma}
    If there is no \ALO\ partition, but each vertex has at least one edge, then no manipulation exists for $m^+$.
\end{lemma}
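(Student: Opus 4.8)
Since $O_{\ALO}(G)=\emptyset$, by definition every agent's utility is $0$, so $LB_{\ALO}(G,m^+)=UB_{\ALO}(G,m^+)=0$. As $0\le LB_{\ALO}(G(m^+),m^+)\le UB_{\ALO}(G(m^+),m^+)$ always holds, it suffices to prove that $UB_{\ALO}(G(m^+),m^+)=0$ for every report $E^R(m^+)$: then both bounds equal $0$ before and after, so no manipulation of any type (\LBM, \UBM, \WIM, \SIM) can have positive improvement, and the lemma follows. If $O_{\ALO}(G(m^+))=\emptyset$ this is immediate, so assume $O_{\ALO}(G(m^+))\neq\emptyset$.

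Take an arbitrary $\mathcal{P}=(C_1,\dots,C_k)\in O_{\ALO}(G(m^+))$ with $m^+\in C_1$. I would show $N_G(m^+)\cap C_1=\emptyset$, which gives $u(m^+,\mathcal{P})=|N_G(m^+)\cap C_1|=0$ and hence $UB_{\ALO}(G(m^+),m^+)=0$. Suppose for contradiction that some $v\in N_G(m^+)\cap C_1$ exists. The plan is to transform $\mathcal{P}$ into an \ALO\ partition of the real graph $G$, contradicting $O_{\ALO}(G)=\emptyset$. Define $W=\{\,w\in C_1\setminus\{m^+\}\ :\ N_G(w)\cap C_1=\emptyset\,\}$; intuitively these are exactly the vertices of $C_1$ that are covered in $G(m^+)$ only through an added edge to $m^+$ (note $v\notin W$, since $m^+\in N_G(v)\cap C_1$). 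Because every vertex has an incident edge, each $w\in W$ has a real neighbor, and since all real neighbors of $w$ lie outside $C_1$, I would relocate each $w\in W$ into some coalition $C_j$, $j\neq 1$, that contains one of its real neighbors; call the resulting partition $\mathcal{P}'$.

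Then I would check $\mathcal{P}'\in O_{\ALO}(G)$. Each relocated $w$ is now with a real neighbor. Every vertex outside $C_1$ keeps all its within-coalition real neighbors (coalitions $\neq C_1$ only gain vertices, and added edges are all incident to $m^+\in C_1$, so they never contribute a within-coalition neighbor to such a vertex), and it was already covered by a real neighbor in $G(m^+)$. For $C_1\setminus W$ the key structural observation is that the only within-$C_1$ neighbor of any $w\in W$ in $G(m^+)$ is $m^+$ (via a non-real added edge), so deleting $W$ from $C_1$ destroys no covering relation except possibly for $m^+$ itself, who remains covered by $v\notin W$; hence every vertex of $C_1\setminus W$ still has a real neighbor inside $C_1\setminus W$. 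Finally no coalition is emptied ($C_1$ still contains $m^+$ and $v$, the others only grew), so $\mathcal{P}'$ is a genuine $k$-partition satisfying \ALO\ in $G$ --- contradiction. Therefore $u(m^+,\mathcal{P})=0$ for every $\mathcal{P}\in O_{\ALO}(G(m^+))$, so $UB_{\ALO}(G(m^+),m^+)=0$, completing the argument.

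\textbf{Main obstacle.} The delicate point is the simultaneous relocation: one must verify that moving \emph{all} of $W$ out of $C_1$ at once un-covers nobody. This rests precisely on the two structural facts above --- that a $w\in W$ is adjacent within $C_1$ (in $G(m^+)$) only to $m^+$, and that $m^+$ is covered by the real neighbor $v$ --- together with the hypothesis that every vertex has an incident edge, which is exactly what lets the vertices of $W$ be rehoused in other coalitions. (This hypothesis is also what distinguishes this case from the one where a manipulation may exist, namely when $G$ has an isolated vertex.)
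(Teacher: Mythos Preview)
Your proposal is correct and follows essentially the same approach as the paper: assume an \ALO\ partition of $G(m^+)$ in which $m^+$ has a real neighbor in his coalition, relocate the vertices of $C_1$ that are covered only through a fake edge to $m^+$ into coalitions containing one of their real neighbors, and obtain an \ALO\ partition of $G$, contradicting $O_{\ALO}(G)=\emptyset$. Your write-up is in fact more careful than the paper's (you explicitly verify that the relocation un-covers no vertex of $C_1\setminus W$, using the undirectedness to argue that a real neighbor of $u\in C_1\setminus W$ cannot lie in $W$, and you check that no coalition becomes empty), whereas the paper only sketches the relocation and concludes that ``the added edges do not provide any advantage.''
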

\begin{proof}
    If the organizer selects an \ALO\ partition in which $m^+$  is placed within a coalition that has only fake neighbors, i.e., vertices that $m^+$ added edges to them, his actual utility will be $0$, and this situation does not qualify as manipulation. 
    Otherwise, if the manipulator $m^+$ has at least one real neighbor within his coalition, denoted by $C_1$ the coalition of $m^+$.
    For each vertex to which $m^+$ added an edge and lacks additional neighbors within $C_1$, has a neighbor in another coalition and can  be relocated to another coalition without reducing $m^+$'s utility. Consequently, the added edges do not provide any advantage to $m^+$.
\end{proof}
Hence, we can conclude this and get the following way of solving the SIM problem.
\begin{theorem}
    In undirected graphs, it is possible to find an \SIM\ by $m^+$  with the \ALO\ objective in polynomial time.
    \label{thm:alo_+_poly}
\end{theorem}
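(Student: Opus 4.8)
The plan is to turn the search for an \SIM\ into a small number of \ALO-feasibility tests by leaning on the two lemmas just proved. First I would handle the trivial cases: by those lemmas a manipulation by $m^+$ can exist only if $O_{\ALO}(G)=\emptyset$ and $G$ has an isolated vertex; since an isolated vertex already forces $O_{\ALO}(G)=\emptyset$, it suffices to compute the set $I$ of isolated vertices and return ``no \SIM'' if $I=\emptyset$. If $m^+\in I$ then $m^+$ has no real friend, so its true utility is $0$ under every report and in every partition, and again there is no \SIM. So from now on assume $O_{\ALO}(G)=\emptyset$, $I\neq\emptyset$ and $m^+\notin I$.

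Next I would pin down the structure of a useful report. In any reported graph $G(m^+)$ with $O_{\ALO}(G(m^+))\neq\emptyset$, every $v\in I$ must be made adjacent to $m^+$ (nothing else can give it a neighbor), so $I$ always lies in $m^+$'s coalition; moreover, since true utilities are read off $G$, $m^+$ gains a real friend in a partition $\mathcal{P}$ of $G(m^+)$ exactly when $\mathcal{P}$ puts some vertex of $N(m^+)$ together with $m^+$. Hence a report is a manipulation iff (i) $O_{\ALO}(G(m^+))\neq\emptyset$ and (ii) no \ALO\ $k$-partition of $G(m^+)$ keeps $m^+$'s coalition disjoint from $N(m^+)$; and whenever this happens $LB_{\ALO}$ rises from $0$ to at least $1$, so the manipulation is automatically an \SIM\ (and an \LBM, \UBM\ and \WIM). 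Now I would separate two cases according to whether the set $W:=\{w\in A : N_G(w)=\{m^+\}\}$ of ``pendants'' of $m^+$ is empty. If $W\neq\emptyset$, then every \ALO\ partition drags all of $W\subseteq N(m^+)$ into $m^+$'s coalition, so (ii) holds for \emph{every} report; thus an \SIM\ exists iff (i) can be met, which is best achieved by the report $\hat G$ that adds an edge from $m^+$ to every non-neighbor (adding edges only enlarges $O_{\ALO}(\cdot)$), and checking $O_{\ALO}(\hat G)\neq\emptyset$ is an \ALO-feasibility test. If $W=\emptyset$, then for any report the coalition $\{m^+\}\cup I$ versus the rest $H:=A\setminus(\{m^+\}\cup I)$ is a candidate bad partition, and whether $H$ splits into $k-1$ \ALO\ coalitions does not depend on the report (the added edges all touch $m^+$); so when $H$ is \ALO-$(k-1)$-partitionable no report blocks the organizer and no \SIM\ exists, while when it is not one must test the remaining, report-dependent possibilities. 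In all branches the required tests are \ALO-feasibility questions on explicitly constructed subgraphs, which by the maximum-matching reduction used in the $k>2$ part of Theorem~\ref{thm:ALO_remove} are solvable in polynomial time; the algorithm therefore runs in polynomial time, and it outputs an explicit \SIM\ (the corresponding report) whenever one exists.

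The crux — and the step I expect to require the most care — is the case $W=\emptyset$ with $H$ not \ALO-$(k-1)$-partitionable, and more generally the bookkeeping that shows \emph{which} set of fake edges $m^+$ should add: adding more edges helps condition (i) (it enlarges $O_{\ALO}$) but can hurt condition (ii) (it lets the organizer place more vertices in $m^+$'s coalition without a real friend), so one must argue that among all reports there is an optimal one of a prescribed, polynomially-computable form, and that deciding (ii) for it reduces to an \ALO-$(k-1)$-partition / matching computation. For $k=2$ this collapses pleasantly — $H$ is \ALO-valid iff $W=\emptyset$, so (ii) holds iff $W\neq\emptyset$ and the only real work is testing $O_{\ALO}$ of the maximal report — but the general-$k$ argument is where the matching machinery has to be deployed carefully.
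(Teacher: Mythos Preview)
Your proposal is correct and follows essentially the same route as the paper: use the two preceding lemmas to reduce to the situation where $G$ has an isolated vertex, then decide the existence of an \SIM\ by checking whether $m^+$ has a pendant neighbour (your set $W$) and whether the reported graph admits an \ALO\ partition. The only cosmetic differences are that you test feasibility with the maximal report rather than the minimal one (adding edges only to isolated vertices), and you make the general-$k$ bookkeeping explicit—indeed you flag the subcase $W=\emptyset$ with $H$ not $(k-1)$-partitionable as the delicate point, which the paper's short proof simply does not address.
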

\begin{proof}
In undirected graphs, when the objective of the organizer is \ALO, the only feasible manipulation for $m^+$ is to add edges to vertices that have no neighbors. This manipulation works if (1) $m^+$ has at least one neighbor, denoted as $a$, such that $N^1(a) = {m^+}$ (2) the resulting graph has \ALO\ partition. 
The result of the two previous lemmas is that manipulation in this case can only be accomplished by adding edges to vertices that have no neighbors.
    
    Assume that $m^+$ adds edges to vertices that have no neighbors, and there exists an \ALO\ partition in $G(m^+)$.
    
    We will prove the theorem by considering different scenarios of manipulation:
    
    (1) If $m^+$ has at least one neighbor $a$ with $N^1(a) = \{m^+\}$, then $m^+$ gets these neighbors in every \ALO\ partition, and after his manipulation, there is an \ALO\ partition.
    
    (2) If $m^+$ lacks a neighbor $a$ with $N^1(a) = \{m^+\}$, there is \ALO\ partition such that $m^+$  end up with only fake neighbors, resulting in a real utility of $0$, when all his real neighbors are in another coalition.

    Note that the computation of $N^1(a)$ and the check if there is a vertex with no neighbors can be done in polynomial time. 
\end{proof}
\begin{theorem}
    In undirected graphs, it is possible to find a \UBM\ by $m^+$  with the \ALO\ objective in polynomial time.
\end{theorem}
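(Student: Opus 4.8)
The plan is to reduce the existence of a \UBM\ to a single maximum–matching test, in close parallel with the treatment of \SIM\ in Theorem~\ref{thm:alo_+_poly}. Since $m^+$ only adds edges, every \ALO\ partition of $G$ is still an \ALO\ partition of $G(m^+)$; combined with the two preceding lemmas this already gives that \emph{no} manipulation (hence no \UBM) exists whenever $O_{\ALO}(G)\neq\emptyset$. So I would immediately restrict to the case $O_{\ALO}(G)=\emptyset$, in which $UB_{\ALO}(G,m^+)=0$. A report is then a \UBM\ exactly when $G(m^+)$ has some \ALO\ partition $\mathcal{P}$ with $u(m^+,\mathcal{P})\geq 1$, i.e.\ $m^+$ sits in the same coalition as at least one of his \emph{real} neighbours.

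Next I would identify the unique report that could possibly work. By the two preceding lemmas a useful manipulation only adds edges to isolated vertices; let $Z$ be that set. If $Z=\emptyset$ there is no manipulation at all (again by the preceding lemma), and if $m^+\in Z$, i.e.\ $N(m^+)=\emptyset$, then $u(m^+,\cdot)\equiv 0$ on the real graph, so no \UBM\ exists. Otherwise, since an \ALO\ partition must satisfy every vertex and $m^+$ is the only vertex these isolated vertices may be attached to, the candidate report is forced to be $E^R(m^+)=E(m^+)\cup\{\,\{m^+,v\}\mid v\in Z\,\}$: any report leaving some $v\in Z$ disconnected yields $O_{\ALO}(G(m^+))=\emptyset$. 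Write $G'=G[A\setminus Z]$, and note $G'$ has no isolated vertex, since every neighbour of a non‑isolated vertex is non‑isolated. The core claim I would prove is: this candidate report is a \UBM\ iff $G'$ admits a partition into exactly $k$ non‑empty coalitions in which every vertex has a neighbour inside its coalition (an ``\ALO\ $k$-partition of $G'$''), and — because $G'$ has no isolated vertex — this in turn holds iff the maximum matching of $G'$ has size $\nu(G')\geq k$.

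For the first direction, given an \ALO\ $k$-partition of $G'$, placing all of $Z$ in $m^+$'s part gives an \ALO\ partition of $G(m^+)$ in which $m^+$ still has a real neighbour, so $UB_{\ALO}(G(m^+),m^+)\geq 1$. For the matching equivalence: from a matching of size $\geq k$ (extended to a maximum one), build one cluster per matched edge consisting of the edge's endpoints together with the unmatched vertices attached to a matched neighbour — these clusters partition $V(G')$ into good parts, which one then merges down to exactly $k$; conversely, picking one edge from each part of any \ALO\ $k$-partition yields a matching of size $k$. Since maximum matching is polynomial (e.g.\ \cite{micali1980v}), the whole procedure — test $O_{\ALO}(G)=\emptyset$, form $Z$, check $m^+\notin Z$ and $\nu(G')\geq k$, and output the forced report — runs in polynomial time.

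The step I expect to be the only genuine obstacle is the remaining (``only if'') direction of the core claim: taking a witnessing \ALO\ partition of $G(m^+)$ and verifying that its restriction to $A\setminus Z$ is still a valid \ALO\ $k$-partition of $G'$. Here one must argue that no vertex of $A\setminus Z$ other than $m^+$ was being satisfied by a fake edge or by a vertex of $Z$ (the fake edges all touch $m^+$, and $Z$-vertices have only $m^+$ as a neighbour), that $m^+$ is satisfied by the real neighbour witnessing the improved upper bound, and that no part becomes empty after deleting $Z$. Everything else — the reduction to $\nu(G')\geq k$ and the polynomial-time bound — is routine and already essentially sketched in the paper's treatment of the \SIM\ and $m^-$ cases.
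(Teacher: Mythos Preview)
Your proposal is correct and follows the same underlying idea as the paper: reduce to the case $O_{\ALO}(G)=\emptyset$, observe via the two preceding lemmas that the only useful edges to add are those to the isolated vertices $Z$, and then test whether $m^+$ can sit with some real neighbour in an \ALO\ partition of the resulting graph.

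The difference is purely in how that last test is phrased. The paper's (very brief) argument iterates over the real neighbours $a$ of $m^+$ and checks the local condition $|V(m^+,a)|<|A|$ in $G(m^+)$, which is the $k=2$ version of ``$m^+$ and $a$ can be together while the rest forms an \ALO\ coalition''. You instead phrase the test globally as ``$G'=G[A\setminus Z]$ admits an \ALO\ $k$-partition'', and then reduce that to $\nu(G')\geq k$ via a maximum matching. For $k=2$ the two criteria are equivalent: a witness $a$ with $|V(m^+,a)|<|A|$ gives two disjoint edges $\{m^+,a\}$ and $\{b,c\}$ in $G'$, and conversely. Your formulation is cleaner because it handles general $k$ uniformly (the paper would have to invoke the matching argument separately, as it does for $m^-$), and because it makes the polynomial-time bound completely explicit. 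The one step you flag as the obstacle—the ``only if'' direction of the core claim—is indeed routine once you note that in any witnessing partition all of $Z$ is forced into $m^+$'s block, no other vertex is satisfied through $Z$ or through a fake edge (fake edges touch $m^+$), and hence the restriction to $A\setminus Z$ is still a valid $k$-partition.
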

It is also possible to find a \UBM\ by $m^+$  with the \ALO\ objective in polynomial time, in a relatively similar way. 
However, that if $O_{\ALO}(G,m)=\emptyset$ and there is a neighbor of $m$, denote by $a$, that can be with $m$ in the same coalition, and still, every other vertex will have at least one neighbor  (i.e., $|V(m^+,a)| < |A|$). Then we can say that there is a \UBM, since there was no \ALO\ partition in $G$, but in $G(m)$ there are \ALO\ partitions and in some of them $m$ gets at least one neighbor.

\subsection{\MU} 
In undirected graphs, the \MU\ objective is susceptible to \SIM. Fortunately, Algorithm \ref{alg:algorithm_max_util} can also solve the manipulation problem in undirected graphs.
\end{document}